\documentclass[journal,twoside,web]{ieeecolor}
\usepackage{generic}
\usepackage{cite}
\usepackage{amsmath,amssymb,amsfonts}
\usepackage{algorithmic}
\usepackage{graphicx}
\usepackage{algorithm,algorithmic}
\usepackage{hyperref}
\usepackage{textcomp}
\usepackage{graphicx}

\newtheorem{definition}{Definition}
\newtheorem{theorem}{Theorem}

\newtheorem{lemma}{Lemma}

\def\BibTeX{{\rm B\kern-.05em{\sc i\kern-.025em b}\kern-.08em
    T\kern-.1667em\lower.7ex\hbox{E}\kern-.125emX}}
\markboth{\hskip25pc IEEE TRANSACTIONS AND JOURNALS TEMPLATE}
{Author \MakeLowercase{\textit{et al.}}: Title}

\allowdisplaybreaks

\begin{document}
\title{Gain Scheduling with a Neural Operator for\\ a Transport PDE with Nonlinear Recirculation}
\author{Maxence Lamarque, Luke Bhan, Rafael Vazquez, and Miroslav Krstic
\thanks{Maxence Lamarque is with \'{E}cole des Mines
de Paris, France, {maxence.lamarque@etu.minesparis.psl.eu  }}
\thanks{Luke, Bhan and Miroslav Krstic are with the University of California, San Diego, {lbhan@ucsd.edu, krstic@ucsd.edu}} 
\thanks{Rafael Vazquez is with the Department of Aerospace Engineering Universidad de Sevilla, {rvazquez1@us.es}}
\thanks{The second author is supported by DOE grant DE-SC0024386 and the work of the
fourth author was funded by FOSR grant FA9550-23-1-0535 and NSF
grant ECCS-2151525.}
}

\maketitle

\begin{abstract}
To  stabilize PDE models, control laws require space-dependent functional gains mapped by nonlinear operators from the PDE functional coefficients. 
When a PDE is nonlinear and its ``pseudo-coefficient'' functions are state-dependent, a gain-scheduling (GS) nonlinear design is the simplest approach to the design of nonlinear feedback. The GS version of PDE backstepping employs gains obtained by solving a PDE at each value of the state. Performing such PDE computations in real time may be prohibitive. The recently introduced neural operators (NO) can be trained to produce the gain functions, rapidly in real time, for each state value, without requiring a PDE solution. In this paper we introduce NOs for GS-PDE backstepping. GS controllers act on the premise that the state change is slow and, as a result,  guarantee only local stability, even for ODEs. We establish local stabilization of hyperbolic PDEs with nonlinear recirculation using both a ``full-kernel'' approach and the ``gain-only'' approach to gain operator approximation. Numerical simulations illustrate stabilization and demonstrate speedup by three orders of magnitude over traditional PDE gain-scheduling. Code (\href{https://github.com/lukebhan/NeuralOperatorApproximatedGainScheduling}{Github}) for the numerical implementation is published to enable  exploration. 

{\normalfont\sf\color{blue}This work was the subject of the 2023 Bode Prize Lecture by the last coauthor.} 
\end{abstract}

\section{Introduction}
\paragraph{Goals and results} We present a model-based, gain-scheduled (GS) backstepping design for the control of a hyperbolic PDE with nonlinear recirculation, where the gain kernels are computed via a Neural Operator (NO). We leverage the recent breakthrough on deep neural network-based operator approximations (DeepONet \cite{Lu2021DeepONet}) to prove that our model-based GS design is stabilizable and real-time implementable due to an $\sim 10^3\times$  speedup thanks to the NO. 

This is the first  result in which the NO is used to recompute the kernel \emph{online, at every time step}, based on learning a neural operator  of the kernel  only \emph{once, offline}. 

Even an exact gain scheduler is only locally stabilizing, and even local stabilization is complex to prove for a nonlinear PDE. We prove that both our exact and our NO-approximated GS-PDE backstepping controllers are stabilizing. 


\paragraph{Gain scheduling}
In the GS literature for nonlinear ODEs we single out the classic papers \cite{103361,58498} and the survey \cite{RUGH20001401}, which makes links with linear parameter varying systems and highlights the need to limit the state derivative, with only local stabilization being generally possible with GS.

\paragraph{Stabilization of nonlinear PDEs using backstepping-based gain-scheduling}
For the linear version of the hyperbolic PDEs we study here was introduced in \cite{krstic2008Backstepping}.  
The first and only attempt at global stabilization of a class of {\em general} nonlinear PDE systems 
was made in \cite{VAZQUEZ20082778,VAZQUEZ20082791}, using infinite Volterra series. The complexity of that result highlights the importance of seeking methodologies that fall in between such  complex  nonlinear designs and  linear designs such as \cite{Coron2013Local}. In fact, for parabolic PDEs with nonlinear growth exceeding $|u|\log^2(1+|u|)$, \cite{AIHPC_2000__17_5_583_0} shows that finite-time blow up cannot be prevented with boundary control, hence, the pursuit of global stabilization is in vain, in general. 
A few CLF-based control designs for nonlinear PDEs include \cite{9744516,doi:10.1137/19M1252235}. Only a few particular cases admit a linearizing transformation \cite{krstic2008nonlinear,bekiaris2019nonlinear}. Hardly any results exist on GS for PDEs. The paper \cite{203622} focuses on finite-dimensional approximations. In \cite{doi:10.1137/20M1341167}, GS is used for time-varying (not nonlinear) PDEs. Our paper builds on \cite{SiranosianAntranikA2011GSBC},  where  exact---and explicit---GS for a special example of hyperbolic PDEs with recirculation is considered. 

\paragraph{Learning approximations for model-based PDE control with stability guarantees}
Recent breakthroughs in the mathematical properties of neural networks \cite{392253}, \cite{Lu2021DeepONet}, \cite{lanthaler2023nonlocal} have spawned an exploration of functional operator approximations in control theory. This was first explored for control of transport PDEs in \cite{bhan_neural_2023} and extended to reaction-diffusion PDEs and observers in \cite{krstic2023neural}. PDE backstepping  \cite{krstic2008boundary}, \cite{krstic2008Backstepping} kernels---typically governed by PDEs---were replaced with neural operator approximations. Under the universal approximation theorem \cite{lu2021advectionDeepONet}, the papers \cite{bhan_neural_2023,krstic2023neural} prove that the  PDE plant is stabilized provided the operator approximates the true kernel with required accuracy. Recently, \cite{qi2023neural} and \cite{wang2023deep} extended these initial results to hyperbolic and parabolic PDEs with delays, where the  kernels are governed by multiple PDEs. 

\paragraph{Paper  outline} In Sec. \ref{sec:exactgs}, we introduce the exact GS controller for the transport PDE with recirculation. 
 In Sec. \ref{sec:existenceofk}, we establish the existence and continuours differentiability of the gain-schedulable backstepping kernel to be approximated by the DeepONet. In Sec. \ref{sec:approximatek}, we define the kernel operator and present its DeepONet approximation. In Section \ref{sec:deeponetstabilizes}, we present the main result: GS with DeepONet-approximated kernel is locally stabilizing. In Sec. \ref{sec:H1target} and \ref{sec:normequiv}, we prove local stabilization in $H_1$ for the perturbed target system under the DeepONet approximated kernel and then convert the perturbed system back into the original plant variables. In Sec. \ref{sec:gain-only}, we present an alternative analysis that only employs only the evaluation of the gain kernel at the boundary
 and show the result is still locally stabilizing. Lastly, Sec. \ref{sec:simulations} presents simulations illustrating the theoretical stabilization results. 

\paragraph{Contributions}
While \cite{bhan_neural_2023} only gave a glimpse at 
possibilities of
NOs, only linear PDEs have been considered so far. This paper takes up the NOs and GS simultaneously for the first time. Even our {\em exact GS} result, without the NO (obtained by setting $\epsilon=0$ in our results and not belabored as a separate statement), is the first GS result for a nonlinear PDE, not counting the simple explicit GS example in \cite{SiranosianAntranikA2011GSBC}. When we combine GS with  NO, two perturbations arise simultaneously in our analysis: (1) the perturbation from treating the state as quasi-constant, which even without an NO limits the achievable stabilization to local, and (2) the perturbation from the approximation of the GS kernel using an NO. Seeing the complexity of the analysis, the reader will understand why we pursue in this initial GS paper only a system where the ``scheduling variable'' is scalar (the  state value at the outlet). Jumping straight into a general PDE class \cite{VAZQUEZ20082778,VAZQUEZ20082791}, 
with a ``scheduling variable'' that is a {\em function} of space, would  obscure the concept of simultaneously dealing with the PDE nonlinearities, GS design, and NO approximation. It would be incomprehensible and unpedagogical. 

Our  results include a DeepONet approximation theorem for a class of parametrized Volterra integral equations, $H_1$ local Lyapunov stability analyses for two approaches to approximation and backstepping transformation (the ``full-kernel'' approach, which employs the NO GS kernel in the backstepping transform, and the ``gain-only'' approach, which employs the exact GS kernel in backstepping), and simulations
with a computational speedup of $\sim 10^3\times$ and stabilization. 

\paragraph{Notation}
    We denote the convolution as
$     a * b (x, \nu) = \int_{0}^{x} a(x-y, \nu) b(y, \nu) dy$,    and use the notation $a_{x \nu}
= \frac{\partial^2 a}{\partial x \partial \nu} 
$ and
$    \|a(t)\| = \sqrt{\int_{0}^{1} a^2(x, t)dx}$ for a function $a$  on $[0, 1] \times \mathbb{R}^+$.

\section{Exact Gain-Scheduled (GS) PDE Backstepping for a Hyperbolic PDE with Nonlinear Recirculation} \label{sec:exactgs}

We study nonlinear hyperbolic PDE systems of the form
  \begin{eqnarray}
    u_t(x, t) &=& u_x(x, t) + \beta(x, u(0, t)) u(0, t) \,, \label{eq:ut_def} \\
    u(1, t) &=& U(t), \qquad (x, t) \in [0, 1] \times \mathbb{R}^+ \,, \label{eq:u1_def}
  \end{eqnarray}
  with the function $\beta \in \mathcal{C}^1([0,1]\times \mathbb{R}, \mathbb{R})$ denoting the nonlinear, $u(0,t)$-dependent, recirculation gain.
Our objective is to (locally) stabilize this system, at the equilibrium $u \equiv 0$, with a GS feedback  implemented through the boundary input $U$.  We ``pretend,'' at each time $t$, that the state trace $u(0,t)$ is constant and employ a gain  corresponding to such a ``constant $u(0,t)$.''

To obtain such a gain , wemap the system \eqref{eq:ut_def}, \eqref{eq:u1_def} into a transport PDE $w_t = w_x$ using the backstepping transformation 
  \begin{eqnarray}
    w(x, t) &=& u(x, t) - \int_{0}^{x}k(x-y, u(0, t)) u(y, t) dy\,, )
    \label{eq:w_true_k_1}
  \end{eqnarray}
 with a homogeneous inlet boundary condition $w(1,t)=0$, using the boundary control law
  \begin{equation}
   U(t) = \int_{0}^{1}k(1-y, u(0, t)) u(y, t)dy\,.
    \label{eq:U_def}
  \end{equation}
If $u(0,t)$ were indeed a constant $\nu \equiv u(0,t)$, the kernel would need to satisfy the  Volterra integral equation
    \begin{align}
     k(x, \nu) = -\beta(x, \nu) + \int_{0}^{x} \beta(x-y, \nu)k(y, \nu)dy\,, &
     \nonumber \\   \forall (x, \nu) \in [0, 1] \times \mathbb{R}\,. &\label{eq:kernel}
    \end{align}
However, $u(0,t)$ is not constant and \eqref{eq:ut_def}, \eqref{eq:u1_def} is not mapped into $w_t=w_x, w(1,t)=0$ but into a  complex nonlinear PIDE
  \begin{eqnarray}
   w_t(x, t) &=& w_x(x, t) -w_x(0, t) \Omega(x, t) \label{eq:wt_def_exact_k},\\
      w(1, t) &=& 0, \label{eq:w1_def_exact_k}
 \end{eqnarray}
 where
\begin{eqnarray}
\label{eq:omega_def_exact_k}   \Omega(x, t) &=& \int_{0}^{x}
   \left[
   k_{\nu}(x-y, w(0, t))
   \right] 
\, \Big[w(y, t) 
   \nonumber \\&&  
+ \int_{0}^{y} l(y-s, w(0, t))w(s, t)ds\Big]dy
\\
l(x, \nu) &=& 
 \int_{0}^{x} k (x-y, \nu)l(y, \nu)dy 
\nonumber \\ && 
+k(x,\nu) 
, \quad
(x, \nu) \in [0, 1] \times \mathbb{R} \,. \label{eq:l_def_exact_k}
 \end{eqnarray}

The system \eqref{eq:wt_def_exact_k}, \eqref{eq:w1_def_exact_k}, \eqref{eq:omega_def_exact_k}, \eqref{eq:l_def_exact_k} is called the ``target system.'' 
Since the linearization of this 
nonlinear PIDE 
is the transport PDE $w_t=w_x, w(1,t)=0$, a lengthy Lyapunov analysis can be conducted which proves that the equilibrium of this target system is locally exponentially stable in the spatial $H^1$ norm. 

In this paper, we apply an approximation of the GS controller \eqref{eq:U_def} using deep learning. 
The result is a ``perturbed target system,'' which is  even more complex than \eqref{eq:wt_def_exact_k}, \eqref{eq:w1_def_exact_k}, \eqref{eq:omega_def_exact_k}, \eqref{eq:l_def_exact_k}, and requires an even more complicated Lyapunov analysis. 


\section{Existence and Regularity of the "Gain-Schedulable" Backstepping Kernel \texorpdfstring{$k(x,\nu)$}{k(x, v)}}\label{sec:existenceofk}


    \begin{lemma}\label{lemma_k_knu}
    {\em [Existence and  bound for kernel and its derivatives]}
      For each $\beta \in \mathcal{C}^1([0, 1] \times \mathbb{R})$,  the Volterra equation \eqref{eq:kernel}, 
      \begin{equation}
        k(x, \nu) = -\beta(x, \nu) + \int_{0}^{x} \beta(x-y, \nu) k(y, \nu) dy 
      \end{equation}
      has a unique $\mathcal{C}^1([0, 1] \times \mathbb{R})$ solution. In addition, when $\beta$ is only defined for $(x, \nu) \in [0, 1] \times [-B_{\nu}, B_{\nu}], \ B_{\nu} > 0$,  the following 
      holds on the same domain,
      \begin{eqnarray}
        |k(x, \nu)| &\leq& B_{\beta}e^{B_\beta x}\,,
        \label{eq:k_bound} \\
        |k_{\nu}(x, \nu)| &\leq& \alpha_{\nu} e^{\alpha_{\nu} x}(1 + \alpha_{\nu} x)\,,
        \label{eq:knu_bound_1}
      \end{eqnarray}
      where $B_\beta := \|\beta\|_{\infty, [0, 1] \times [-B_{\nu}, B_{\nu}]}$
       and $\alpha_{\nu} := {\rm max}(\|\beta\|_{\infty,[0,1] \times [-B_{\nu}, B_{\nu}]}, \|\beta_{\nu}\|_{\infty, [0,1] \times [-B_{\nu}, B_{\nu}]})$.
      If, furthermore, $\beta_{x \nu}$ exists and is continuous, then $k_{x\nu}$ exists and is continuous. When defined on $[0, 1] \times [-B_{\nu}, B_{\nu}]$, the second derivative  satisfies,        $\forall (x, \nu) \in [0,1] \times [-B_{\nu}, B_{\nu}]$,
\begin{align}
        |k_{x \nu}(x, \nu)| \leq& \, \alpha e^{\alpha x}(1 + 2 \alpha) + \alpha^2 x e^{\alpha x} (\alpha + 1),\label{eq:k_nu_x_bound} \\
 \alpha :=& 
\left\| \max 
\big(|\beta|,|\beta_{\nu}|,|\beta_{x \nu}|,|\beta_x|\big)
\right\|_{\infty, [0, 1] \times [-B_{\nu}, B_{\nu}]}.
      \end{align}
    \end{lemma}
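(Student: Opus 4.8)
The plan is to treat \eqref{eq:kernel} as a fixed-point equation and run the classical successive-approximation (Picard/method-of-successive-kernels) argument, which for a Volterra equation with a bounded kernel converges on all of $[0,1]$; the regularity in $\nu$ and the bounds then follow by differentiating the series term by term. Concretely, I would define the operator $(\mathcal{T}\phi)(x,\nu) = -\beta(x,\nu) + \int_0^x \beta(x-y,\nu)\phi(y,\nu)\,dy$ on $\mathcal{C}([0,1])$ for each fixed $\nu$, set $k_0 = -\beta$ and $k_{n+1} = \mathcal{T} k_n$, and show $\|k_{n+1}(\cdot,\nu) - k_n(\cdot,\nu)\|_{\infty,[0,x]} \le B_\beta^{\,n+1} x^n/n!$ by induction on $n$. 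Summing the telescoping series gives a unique continuous fixed point with $|k(x,\nu)| \le B_\beta \sum_{n\ge 0} (B_\beta x)^n/n! = B_\beta e^{B_\beta x}$, which is \eqref{eq:k_bound}. Continuity in $(x,\nu)$ jointly follows because each $k_n$ is continuous (an integral of continuous functions with continuously varying limits) and the convergence is uniform on $[0,1]\times[-B_\nu,B_\nu]$.

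For the $\nu$-derivative, I would differentiate \eqref{eq:kernel} formally under the integral sign to get the linear Volterra equation
\begin{align}
 k_\nu(x,\nu) = -\beta_\nu(x,\nu) + \int_0^x \beta_\nu(x-y,\nu) k(y,\nu)\,dy + \int_0^x \beta(x-y,\nu) k_\nu(y,\nu)\,dy,\nonumber
\end{align}
then justify the differentiation rigorously by showing the formally-differentiated series $\sum \partial_\nu k_n$ converges uniformly (dominated by the bound just obtained for $k$ together with $\|\beta_\nu\|_\infty$), so that $k \in \mathcal{C}^1$ and $k_\nu$ solves the displayed equation. Applying the $e^{B_\beta x}$-type Gr\"onwall/successive-approximation estimate once more to this equation — now with forcing term of size $\alpha_\nu(1+\alpha_\nu x)e^{\alpha_\nu x}$ coming from $-\beta_\nu + \beta_\nu * k$ — yields \eqref{eq:knu_bound_1}; here one uses $|k(x,\nu)|\le \alpha_\nu e^{\alpha_\nu x}$ from the first bound with $B_\beta \le \alpha_\nu$. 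The mixed derivative $k_{x\nu}$ is handled by differentiating the same equation in $x$: from the Volterra form, $k_x(x,\nu) = -\beta_x(x,\nu) + \beta(0,\nu)k(x,\nu) + \int_0^x \beta_x(x-y,\nu)k(y,\nu)\,dy$ (differentiating the convolution with respect to its upper limit), and likewise for $k_{x\nu}$; continuity of $\beta_{x\nu}$ makes all the integrands continuous, so $k_{x\nu}$ exists and is continuous, and collecting the worst-case constants into $\alpha$ gives \eqref{eq:k_nu_x_bound}.

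The main obstacle is not convergence — Volterra structure makes that automatic — but the bookkeeping needed to \emph{rigorously} interchange differentiation and the infinite sum (or, equivalently, differentiation and the integral) on the unbounded $\nu$-domain, and then to chase the constants through two nested applications of the successive-approximation estimate so that the stated closed forms \eqref{eq:knu_bound_1} and \eqref{eq:k_nu_x_bound} come out exactly. The clean way to organize this is: (i) prove the estimate "a linear Volterra equation $\psi = f + \beta * \psi$ with $\|f\|_{\infty,[0,x]}\le g(x)$ nondecreasing has solution bounded by $g(x)e^{B_\beta x}$" once as a sub-lemma, then (ii) instantiate it three times — for $k$ with $f=-\beta$, for $k_\nu$ with $f = -\beta_\nu + \beta_\nu * k$, and for $k_{x\nu}$ with the corresponding forcing — each time feeding in the previously obtained bound. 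That modular structure keeps the constant-tracking transparent and isolates the only genuinely delicate point, the term-by-term differentiation, to a single uniform-convergence check.
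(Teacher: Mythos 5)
Your overall skeleton is the same as the paper's: successive approximations for existence and uniqueness (your telescoping differences $k_{n+1}-k_n$ are exactly the paper's $\Delta k^n$ with $\Delta k^0=-\beta$, $\Delta k^{n+1}=\beta*\Delta k^n$), followed by term-by-term differentiation of the series, justified by uniform convergence of $\sum\Delta k^n_x$, $\sum\Delta k^n_\nu$, $\sum\Delta k^n_{x\nu}$ on $[0,1]\times[-B_\nu,B_\nu]$. That part is sound and matches the paper.

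The gap is in your step (i)--(ii) constant-tracking, which as stated does not deliver the bound \eqref{eq:knu_bound_1}. Your sub-lemma bounds the solution of $\psi=f+\beta*\psi$ by $g(x)e^{B_\beta x}$ where $g$ dominates $\|f\|_{\infty,[0,x]}$. The forcing for $k_\nu$ is $f=-\beta_\nu+\beta_\nu*k$, which is bounded by $\alpha_\nu+\alpha_\nu(e^{B_\beta x}-1)=\alpha_\nu e^{B_\beta x}\le\alpha_\nu e^{\alpha_\nu x}$ (not by $\alpha_\nu(1+\alpha_\nu x)e^{\alpha_\nu x}$ as you wrote --- you appear to have conflated the forcing bound with the target bound). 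Feeding $g(x)=\alpha_\nu e^{\alpha_\nu x}$ into your sub-lemma yields $|k_\nu|\le\alpha_\nu e^{2\alpha_\nu x}$, which is strictly weaker than the claimed $\alpha_\nu e^{\alpha_\nu x}(1+\alpha_\nu x)$ since $1+\alpha_\nu x\le e^{\alpha_\nu x}$. To recover the stated polynomial-times-exponential form you need a sharper instrument: either the resolvent representation $\psi=f+r*f$ with $|r(x)|\le B_\beta e^{B_\beta x}$, which gives
\begin{equation*}
|k_\nu(x,\nu)|\le\alpha_\nu e^{\alpha_\nu x}+\int_0^x\alpha_\nu e^{\alpha_\nu(x-y)}\,\alpha_\nu e^{\alpha_\nu y}\,dy=\alpha_\nu e^{\alpha_\nu x}(1+\alpha_\nu x)\,,
\end{equation*}
or the paper's route of proving termwise by induction that $|\Delta k^n_\nu(x,\nu)|\le(n+1)\alpha_\nu^{n+1}x^n/n!$ and summing the series. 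The same repair is needed for \eqref{eq:k_nu_x_bound}, where the sup-norm Gr\"onwall form would again introduce a spurious extra exponential factor. With that substitution your modular plan goes through; without it, the lemma's quantitative claims are not established.
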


   \begin{proof}
     \underline{Existence and continuity of $k$:}
    Let's recall the results proved in \cite{bhan_neural_2023}. For each $\beta \in \mathcal{C}^0([0, 1], \mathbb{R})$, the equation \begin{equation}
      k(x) = -\beta(x) + \int_{0}^{x} \beta(x-y) k(y)dy, \qquad x \in [0, 1] \,,  \label{eq:kernel_nu}
  \end{equation}
  has a unique $\mathcal{C}^0([0, 1], \mathbb{R})$ solution, which satisfies 
\begin{eqnarray}
\label{k_inequality} 
k &=& \sum_{n=0}^{\infty} \Delta k^n \label{eq:k_delta_kn}
\end{eqnarray}
 and $ |k(x)| \leq \bar{\beta} e^{\bar{\beta}x}$ for $ x \in [0, 1]$,   where
  \begin{eqnarray}
    \Delta k^{n+1}(x) := \int_{0}^{x} \beta(x-y) \Delta k^n(y)dy\,, \quad n\geq0 
  \end{eqnarray}
with $    \Delta k^{0} := -\beta  $ and $
    \bar{\beta} :=  \|\beta\|_{\infty, [0, 1]}$,  and $\Delta k^n$ satisfies 
  \begin{equation}
    \label{eq:delta_kn_bound_1}
    |\Delta k^n(x)| \leq \frac{\bar{\beta}^{n+1}x^n}{n!}\,.
  \end{equation}
For $\nu \in \mathbb{R}$, since $\beta(\cdot, \nu)$ is continuous the previous statements ensure that there exists a continuous kernel $k(\cdot, \nu)$ that satisfies \eqref{eq:kernel_nu}. Introducing the sequence
    \begin{eqnarray}
      \Delta k^0 &:=& - \beta \label{eq:delta_k0}\,,  \\
      \Delta k^{n+1} &:=& \beta * \Delta k^n , \qquad x \in [0, 1]  \label{eq:delta_kn}\,, 
    \end{eqnarray}
    it follows from \eqref{eq:k_delta_kn} that for a given $\nu \in \mathbb{R}$, 
    \begin{eqnarray}
      k(x, \nu) &=& \sum_{n=0}^{\infty} \Delta k^n(x, \nu)\,, \qquad x \in [0, 1]\,,  \label{eq:k_sum} \\
      |\Delta k^n(x, \nu)| &\leq& \frac{\bar{\beta_{\nu}}^{n+1}x^n}{n!}\,, \qquad x \in [0, 1]\,,  \label{eq:delta_kn_bound} 
    \end{eqnarray}
    where $\bar{\beta}_{\nu} := \|\beta(\cdot, \nu)\|_{\infty}$.
    Also, since $\beta$ is continuous,  \eqref{eq:k_sum}, \eqref{eq:delta_kn_bound} give the continuity of $k$. With \eqref{eq:delta_kn_bound} and \eqref{eq:k_sum} we get \eqref{eq:k_bound}. 
    
    Now let's prove that $k$ is $\mathcal{C}^1$ by proving that both its partial derivatives exist and are continuous. To do so we prove that $\sum_{n=0}^{\infty} \Delta k^n_x, \sum_{n=0}^{\infty} \Delta k^n_{\nu}$ converge uniformly on $[0, 1] \times [-B_{\nu}, B_{\nu}]$ for all $B_\nu > 0$.
 
    \underline{Existence and continuity of $k_x$}:
For $n \in \mathbb{N}$, differentiating \eqref{eq:delta_kn} with respect to $x$, since $\beta$ is $\mathcal{C}^1$, gives the existence and continuity of $\Delta k^{n}_x, n \geq 0$ (for $\Delta k^0$ we use \eqref{eq:delta_k0}). The details leading to this conclusion entail the study of 
    \begin{eqnarray}
          \Delta k^{n+1}_x(x, \nu) &=&  \beta(0, \nu) \Delta k^n(x, \nu) \nonumber  + \\ && \int_{0}^{x}\beta_x(x-y, \nu)\Delta k^n(y, \nu)dy
      \label{eq:delta_kn_x}
    \end{eqnarray} 
along with 
\eqref{eq:delta_kn_bound}, which leads to the upper bound
    \begin{equation}
      |\Delta k^{n+1}_x(x, \nu)| \leq \frac{\bar{\beta}_{\nu}^{n+2}x^n}{n!} + \alpha_x \frac{\bar{\beta}_{\nu}^{n+1}x^{n+1}}{(n+1)!} ,
      \label{eq:kn_x_upper}
    \end{equation}
    where $\alpha_x := \|\beta_x\|_{\infty, [0,1] \times [-B_{\nu}, B_{\nu}]}< \infty$ because $\beta$ is assumed to be $\mathcal{C}^1$. 
   The bound provided by \eqref{eq:kn_x_upper} ensures that $\sum_{n=0}^{\infty} \Delta k^n_x$ converges uniformly on $[0, 1] \times [-B_{\nu}, B_{\nu}]$. It follows that $k_x$ exists and is continuous on $[0, 1] \times \mathbb{R}$.

   \underline{Existence and continuity of $k_{\nu}$}: 
    We prove by induction that for all $n \in \mathbb{N}, \Delta k^{n}$ is differentiable with respect to  $\nu$, $\Delta k^{n}_{\nu}$ is continuous and that \begin{equation}
      |\Delta k^{n}_{\nu}(x, \nu)| \leq \frac{(n+1)\alpha_{\nu}^{n+1}x^{n}}{n!}
      \label{eq:kn_mu_bound} \,, 
    \end{equation}
    where $\alpha_{\nu} := {\rm max} \left(\|\beta\|_{\infty}, \|\beta_{\nu}\|_{\infty,[0,1]\times[-B_{\nu}, B_{\nu}]}\right) < \infty$.
    Since $\beta$ is $\mathcal{C}^1$ we have that: 
    \begin{equation}
      |\Delta k^0_{\nu}| \leq \alpha_{\nu} 
      \label{eqn:delta_k0_nu}\,.
    \end{equation}
    Let $n \in \mathbb{N}$ such that the previous statement holds. 
    Taking the derivative of \eqref{eq:delta_kn} with respect to  $\nu$ gives that,  for $(x, \nu) \in [0, 1] \times [-B_{\nu}, B_{\nu}]$, 
    \begin{eqnarray}
      \Delta k^{n+1}_{\nu}(x, \nu) &=&  \int_{0}^{x}(\beta_{\nu}(x-y, \nu)\Delta k^n(y, \nu) \nonumber  \\ &&  + \beta(x-y, \nu) \Delta k^n_{\nu}(y, \nu))dy
      \label{eq:iteration_knu}
    \end{eqnarray}
    Using the upper bounds provided by \eqref{eq:kn_mu_bound} and \eqref{eq:delta_kn_bound} leads to the following inequality:
    \begin{equation}
      |\Delta k^{n+1}_{\nu}(x, \nu)| \leq \frac{(n+2)\alpha_{\nu}^{n+2}x^{n+1}}{(n+1)!}
      \label{eq:kn+1_nu} \,.
    \end{equation}
    The bound provided by \eqref{eq:kn_mu_bound} ensures that $\sum_{n=0}^{\infty} \Delta k^{n}_{\nu}$ converges uniformly on $[0, 1] \times [-B_{\nu}, B_{\nu}]$, so \eqref{eq:knu_bound_1} follows.

    \underline{Existence and continuity of $k_{x \nu}$:}
    If we also assume that $\beta_{x \nu}$ is defined and continuous, we prove that $k_{x \nu}$ exists, and is continuous, by proving that $\sum_{n=0}^{\infty} \Delta k^{n}_{x \nu}$ uniformly converges on $[0, 1] \times [-B_{\nu}, B_{\nu}]$. Taking the derivative of \eqref{eq:iteration_knu} with respect to  $x$ gives that
    \begin{eqnarray}
      \Delta k^{n+1}_{x \nu}(x, \nu) &=& \beta_{\nu}(0, \nu) \Delta k^n(x, \nu) + \beta(0, \nu) \Delta k^n_{\nu}(x, \nu) \nonumber \\
    &&+ \int_0^x (\beta_{x \nu}(x-y, \nu)\Delta k^n(y, \nu) \nonumber
    \\&&+ \beta_{x}(x-y, \nu)\Delta k^n_{\nu}(y, \nu))dy
      \label{eq:kn_nu_x}\,.
    \end{eqnarray}
    Introducing $\alpha := {\rm max} (\alpha_x, \alpha_{\nu}, \|\beta_{x \nu}\|_{\infty, [0, 1] \times [-B_{\nu}, B_{\nu}]})$, using \eqref{eq:kn_mu_bound} and \eqref{eq:delta_kn_bound}, \eqref{eq:kn_nu_x} is bounded by
    \begin{eqnarray}
      |\Delta k^{n+1}_{x \nu} (x, \nu)| \nonumber &\leq& \frac{(n+2)\alpha^{n+2}x^n}{n!} + \frac{(n+2)\alpha^{n+2}x^{n+1}}{(n+1)!}\,, \\ && \qquad (x, \nu) \in [0, 1] \times [-B_{\nu}, B_{\nu}] \,.
      \label{eq:kn_nu_x_bound}
    \end{eqnarray}
    The bound provided by \eqref{eq:kn_nu_x_bound} ensures that $\sum_{n=0}^{\infty} \Delta k^{n}_{x \nu}(x, \nu)$ converges uniformly on $[0, 1] \times [-B_{\nu}, B_{\nu}]$ and leads us to the upper bound \eqref{eq:k_nu_x_bound}.
   \end{proof}

  \section{Neural Operators Approximate Gain-Scheduling Kernel \texorpdfstring{$k$}{k}}\label{sec:approximatek}
  
Analytically solving \eqref{eq:kernel} for arbitrary functions $\beta$, whose second argument is changing in real time, is impossible. We replace the exact solution $k$ with a NO-approximated function $\hat{k}$. The question remains whether the control  \eqref{eq:U_def}, using the approximation $\hat{k}$ instead of the exact kernel $k$, retains  local exponential stability for the equilibrium $u \equiv 0$. We answer affirmatively as long as the approximation is 'close enough.' 
Towards that end, we need an approximation theorem.

%
\begin{theorem}[DeepOnet universal approximation theorem \cite{lu2021advectionDeepONet}]\label{theorem:DeepOnet}
    Let $X \subset \mathbb{R}^{d_x}$ and $Y \subset \mathbb{R}^{d_y}$ be compact sets of vectors $x \in X$ and $y \in Y$, respectively. Let $\mathcal{U}: X \to U \subset \mathbb{R}^{d_u}$ and $\mathcal{V}: Y \to V \subset \mathbb{R}^{d_v}$ be sets of continuous functions $u(x)$ and $v(y)$, respectively. Let $\mathcal{U}$ also be compact. Assume the operator $\mathcal{G}: U \to V$ is continuous. Then, for all $\epsilon > 0$, there exist $m^*, p^* \in \mathbb{N}$ such that for each $m \geq m^*$, $p \geq p^*$, there exist $\theta^{(k)}$, $v^{(k)}$, neural networks $f_N(\cdot; \theta^{(k)})$, $g_N(\cdot; v^{(k)})$, $k = 1, \ldots, p$, and $x_j \in X$, $j = 1, \ldots, m$, with corresponding $\mathbf{u}_m = (u(x_1), u(x_2), \ldots, u(x_m))^T$, such that
    \begin{equation}
      |\mathcal{G}(u)(y) - \mathcal{G}_{\mathbb{N}}(\mathbf{u}_m)(y)| < \epsilon
    \end{equation}
    for all functions $u \in \mathcal{U}$ and all values $y \in Y$ of $\mathcal{G}(u)$.
    Where 
    \begin{equation}
      \mathcal{G}_{\mathbb{N}}(y) = \sum_{k=1}^p g^{\mathcal{N}}(\mathbf{u}_m; v^{(k)})f^{\mathcal{N}}(y; \theta^{(k)})
    \end{equation}
    \end{theorem}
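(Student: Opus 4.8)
The plan is to reproduce the classical three-stage ``encode–approximate–decode'' argument for operator approximation (in the spirit of Chen--Chen and Lu et al.), combining Arzel\`a--Ascoli compactness of the input function class with the scalar universal approximation theorem for feedforward networks, and distributing the target accuracy $\epsilon$ as $\epsilon/3$ across the three stages.

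First I would exploit compactness of $\mathcal{U}\subset\mathcal{C}(X;\mathbb{R}^{d_u})$: by Arzel\`a--Ascoli it is uniformly bounded and equicontinuous, so it has a common modulus of continuity. Given $\epsilon$, fix that modulus and choose sensor locations $x_1,\dots,x_m\in X$ forming a fine enough net of the compact set $X$ that the evaluation (encoding) map $E:u\mapsto\mathbf{u}_m=(u(x_1),\dots,u(x_m))^{T}$ is ``almost injective'' on $\mathcal{U}$: whenever $\|E(u)-E(v)\|$ is small, $\|u-v\|_{\infty}$ is controlled by the modulus of continuity. The image $K:=E(\mathcal{U})\subset\mathbb{R}^{m d_u}$ is then compact. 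Second, I would use continuity of $\mathcal{G}$ on the compact set $\mathcal{U}$, hence uniform continuity: there is a modulus $\omega_{\mathcal{G}}$ with $\|\mathcal{G}(u)-\mathcal{G}(v)\|_{\infty}\le\omega_{\mathcal{G}}(\|u-v\|_{\infty})$. Combined with the first stage (closeness of $E(u),E(v)$ forces closeness of $u,v$), this lets me build a genuine continuous function on the finite-dimensional compact set $K\times Y$ that reproduces $\mathcal{G}$ to within $\epsilon/3$: take a finite cover of $K$ by small balls $B(E(u^{(i)}),\rho)$ with a subordinate continuous partition of unity $\{\phi_i\}$, fix representatives $u^{(i)}\in\mathcal{U}$, and set $F(\mathbf{u}_m,y):=\sum_i\phi_i(\mathbf{u}_m)\,\mathcal{G}(u^{(i)})(y)$; making the sensor net fine and $\rho$ small yields $|F(E(u),y)-\mathcal{G}(u)(y)|<\epsilon/3$ for all $u\in\mathcal{U}$, $y\in Y$, via the two moduli of continuity.

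Third, I would factor $F\in\mathcal{C}(K\times Y)$ into DeepONet form. Since $K\times Y$ is a product of compacta, functions of the form $\sum_{k=1}^{p}a_k(\mathbf{u}_m)\,b_k(y)$ with $a_k,b_k$ continuous are dense in $\mathcal{C}(K\times Y)$ (Stone--Weierstrass, or a direct tensor-product density argument), so pick $p$ and continuous $a_k,b_k$ with uniform error $<\epsilon/3$. Finally, apply the scalar universal approximation theorem for feedforward networks with a suitable (non-polynomial) activation to each factor: after first bounding $\|a_k\|_{\infty},\|b_k\|_{\infty}$ uniformly on $K$ and $Y$, choose branch nets $g^{\mathcal{N}}(\cdot;v^{(k)})$ and trunk nets $f^{\mathcal{N}}(\cdot;\theta^{(k)})$ approximating $a_k,b_k$ so tightly that $\bigl|\sum_{k}a_k b_k-\sum_{k}g^{\mathcal{N}}(\cdot;v^{(k)})f^{\mathcal{N}}(\cdot;\theta^{(k)})\bigr|<\epsilon/3$ on $K\times Y$; the product-of-perturbations estimate is routine once the factors are bounded. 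A triangle inequality over the three $\epsilon/3$ stages, evaluated at $\mathbf{u}_m=E(u)$, gives $|\mathcal{G}(u)(y)-\mathcal{G}_{\mathbb{N}}(\mathbf{u}_m)(y)|<\epsilon$ uniformly in $u\in\mathcal{U}$ and $y\in Y$, with $m^{*}$ read off from the net size and $p^{*}$ from the factorization order.

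The main obstacle I expect is the first stage: converting the infinite-dimensional, lossy point-sampling into an honest continuous function on a compact finite-dimensional set, since everything hinges on choosing the sensors fine enough that equicontinuity of $\mathcal{U}$ compensates for the information lost in sampling, and getting the quantifier order right ($\epsilon$ first, then the moduli of continuity, then $m$ and $\rho$, then $p$) is the delicate bookkeeping. By comparison the tensor-product factorization is standard Stone--Weierstrass, and the closing neural-network substitution is just the classical Cybenko/Hornik/Leshno theorem applied finitely many times with bounded targets.
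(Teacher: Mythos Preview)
Your proposal sketches a reasonable proof along the standard Chen--Chen/Lu et al.\ lines, but there is nothing in the paper to compare it against: Theorem~\ref{theorem:DeepOnet} is stated with a citation to \cite{lu2021advectionDeepONet} and is not proved in this paper at all. It is invoked purely as a black-box tool---the authors only need its conclusion (together with their Lipschitzness Lemma~\ref{lemma:M_Lipschitzness} and compactness of $H$) to obtain Theorem~\ref{theorem:NO_k}. So your three-stage encode--approximate--decode argument is not ``the same'' or ``different'' from the paper's proof; the paper simply has none. If your goal is to supply a proof where the paper defers to the literature, your outline is broadly faithful to the original DeepONet argument, with the usual delicate point being exactly the one you flag: turning the lossy point-sampling on a compact, equicontinuous $\mathcal{U}$ into a well-defined continuous surrogate on a finite-dimensional compact set before invoking Stone--Weierstrass and scalar universal approximation.
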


The feedback gain to be scheduled, and approximated, is the output of an operator that we introduce next. To get to the output of that operator, we first need to introduce the set of (admissible) functions at the input of that operator. Based on Theorem \ref{theorem:DeepOnet}, this set of functions needs to be compact. Various bounds on the operator's input functions therefore come into play. We introduce them next. 

Let $B_{\nu} > 0$ and $\breve B = (B_{\beta}, B_{\beta_{\nu}}, B_{\beta_x}, B_{\beta_{x \nu}}) \in (\mathbb{R}^+_*)^{4} $. Let $H$ denote the subset of $\mathcal{C}^1([0, 1] \times [-B_{\nu}, B_{\nu}])$ such that, for all $ \beta \in H$,  
  \begin{itemize}
    \item $\beta, \beta_x, \beta_{\nu}, \beta_{x \nu}$ exist and are Lipschitz with the same Lipschitz constant.
    \item $\|\beta\|_{\infty} < B_{\beta}, \|\beta_{\nu}\|_{\infty} < B_{\beta_{\nu}},\|\beta_x\|_{\infty} < B_{\beta_x},\|\beta_{x \nu}\|_{\infty} < B_{\beta_{x \nu}}$
  \end{itemize}
 
  The reason behind assuming Lipschitzness of $\beta, \beta_x, \beta_{\nu}, \beta_{x \nu}$ is to ensure the compactness of $H$ with the Arzelà-Ascoli theorem, which guarantees  that the set of uniformly bounded and uniformly continuous functions is compact. Lipschitzness is a sufficient condition for uniform continuity and we require this simple property of $\beta$ and its derivatives noted above. 
  
  We endow $H$ with the norm 
  \begin{equation}
      \|\beta\|_H := \|\beta\|_{\infty} + \|\beta_{\nu}\|_{\infty} + \|\beta_x\|_{\infty} + \|\beta_{x \nu}\|_{\infty}\,.
  \end{equation}
  We denote by $\mathcal{K}$ the operator that maps $\beta$ to the kernel $k$ that satifies \eqref{eq:kernel}, i.e., 
  \begin{equation}
      k=\mathcal{K}(\beta)\,.
  \end{equation}
  For the stability analysis explored later in the paper, developing a NO that  approximates $k$ alone is not enough. The perturbed target system obtained with a backstepping transformation with the approximated kernel has terms that contain the derivatives $\frac{\partial}{\partial x}, \frac{\partial}{\partial {\nu}}, \frac{\partial}{\partial {\nu} \partial x}$ of $k$. For this reason, the operator that we approximate with DeepONet must entail more than the operator $\mathcal{K}$. The more elaborate operator that we approximate is denoted by $\mathcal{M}$ and defined next.

  \begin{definition}
  \label{def-operatorK}
  The operator $\mathcal{M} : H \to \mathcal{C}^1([0,1] \times [-B_{\nu}, B_{\nu}], \mathbb{R}) \times \mathcal{C}^0([0,1] \times [-B_{\nu}, B_{\nu}], \mathbb{R})^3$ is defined by 
\begin{equation}
      \mathcal{M}(\beta) = (\mathcal{K}(\beta), \mathcal{K}_1(\beta), \mathcal{K}_{2}(\beta), \mathcal{K}_3(\beta)) \,, 
\end{equation}
and its last three components are defined as
    \begin{eqnarray}
      \mathcal{K}_1(\beta) &=& \frac{\partial}{\partial \nu}\mathcal{K}(\beta)\,,  \\
      \mathcal{K}_2(\beta) &=& \frac{\partial}{\partial \nu}\frac{\partial}{\partial x}\mathcal{K}(\beta)\,,  \\
      \mathcal{K}_3(\beta) &=& \frac{\partial}{\partial x}\mathcal{K}(\beta)\,.\\ \nonumber
    \end{eqnarray}
  \end{definition}

For a neural approximation of $\mathcal{K}$, Theorem \ref{theorem:DeepOnet} relies on the compactness of the input function space of $\mathcal{M}$, as well as on the operator's continuity. In the next lemma, we prove the Lipschitzness of $\mathcal{M}$, and thus its continuity. 

  \begin{lemma}\label{lemma:M_Lipschitzness}
    $\mathcal{M}$ is Lipschitz.
  \end{lemma}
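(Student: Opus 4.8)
The plan is to show that each of the four component operators $\mathcal{K}, \mathcal{K}_1, \mathcal{K}_2, \mathcal{K}_3$ is Lipschitz from $(H, \|\cdot\|_H)$ into the appropriate sup-norm space, since a finite product of Lipschitz maps is Lipschitz in the product norm. The workhorse throughout is the series representation $k = \sum_{n\ge 0}\Delta k^n$ from Lemma \ref{lemma_k_knu}, together with the analogous series for $k_\nu$, $k_x$, $k_{x\nu}$. For two coefficients $\beta, \tilde\beta \in H$, I would write $\Delta k^n - \Delta\tilde k^n$ recursively and estimate the difference term by term.

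\textbf{Step 1 (difference recursion for $\mathcal{K}$).} Set $e^n := \Delta k^n - \Delta\tilde k^n$. From $\Delta k^{0} = -\beta$ we get $\|e^0\|_\infty \le \|\beta - \tilde\beta\|_\infty \le \|\beta - \tilde\beta\|_H$. From $\Delta k^{n+1} = \beta * \Delta k^n$ and $\Delta\tilde k^{n+1} = \tilde\beta * \Delta\tilde k^n$, adding and subtracting $\beta * \Delta\tilde k^n$ gives
\begin{equation}
e^{n+1}(x,\nu) = \int_0^x \beta(x-y,\nu)\,e^n(y,\nu)\,dy + \int_0^x (\beta-\tilde\beta)(x-y,\nu)\,\Delta\tilde k^n(y,\nu)\,dy .
\end{equation}
Using $\|\beta\|_\infty < B_\beta$ and the bound $|\Delta\tilde k^n(x,\nu)| \le B_\beta^{n+1}x^n/n!$ (from \eqref{eq:delta_kn_bound}, valid since $\tilde\beta \in H$), an induction gives $\|e^n\|_\infty \le C_n \|\beta-\tilde\beta\|_H$ with coefficients $C_n$ summing to a finite constant depending only on $B_\beta$ (essentially $C_n \sim B_\beta^n/n! \cdot (\text{polynomial in }n)$, whose sum is $\sim e^{B_\beta}$ up to polynomial factors). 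Summing in $n$ yields $\|\mathcal{K}(\beta) - \mathcal{K}(\tilde\beta)\|_\infty \le L_0 \|\beta - \tilde\beta\|_H$.

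\textbf{Step 2 (repeat for the derivatives).} The same scheme applies to $\mathcal{K}_3 = k_x$, $\mathcal{K}_1 = k_\nu$, $\mathcal{K}_2 = k_{x\nu}$, using the explicit recursions \eqref{eq:delta_kn_x}, \eqref{eq:iteration_knu}, \eqref{eq:kn_nu_x} respectively. Each recursion, when differenced, produces (i) a Volterra-type self-term $\int_0^x \beta\,(\cdot)\,dy$ or a boundary term $\beta(0,\nu)(\cdot)$ that gets absorbed by induction, and (ii) finitely many ``forcing'' terms in which a difference $\beta - \tilde\beta$, $\beta_x - \tilde\beta_x$, $\beta_\nu - \tilde\beta_\nu$, or $\beta_{x\nu} - \tilde\beta_{x\nu}$ is multiplied by a uniformly bounded factor ($\Delta\tilde k^n$ or $\Delta\tilde k^n_\nu$, both controlled via Lemma \ref{lemma_k_knu} since all relevant sup-norms are bounded by $\breve B$ on $H$). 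Crucially, the induction must be run jointly: the difference estimate for $k_\nu^n$ feeds into that for $k_{x\nu}^n$, and the estimate for $k^n$ feeds into all others, but since the zeroth-level differences are all bounded by $\|\beta - \tilde\beta\|_H$ and the factorial decay survives, every series converges and each gives a Lipschitz bound $L_i\|\beta-\tilde\beta\|_H$.

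\textbf{Main obstacle.} The bookkeeping in Step 2 is the delicate part: the recursion for $k^n_{x\nu}$ couples to \emph{three} lower-order quantities ($\Delta k^n$, $\Delta k^n_\nu$, and—through the boundary terms—the values at $x$ of the lower iterates), so one must set up a simultaneous induction over a vector of difference bounds $(\|e^n\|_\infty, \|e^n_x\|_\infty, \|e^n_\nu\|_\infty, \|e^n_{x\nu}\|_\infty)$ and check that the resulting coupled recursion still has summable coefficients. This is routine given the $x^n/n!$ factors but must be done carefully; there is nothing conceptually hard, only the verification that no term escapes the factorial decay. Once the four Lipschitz constants $L_0, L_1, L_2, L_3$ are in hand, $\mathcal{M}$ is Lipschitz with constant $L_0 + L_1 + L_2 + L_3$ into the product space equipped with the sum norm, and in particular continuous, which is what Theorem \ref{theorem:DeepOnet} requires.
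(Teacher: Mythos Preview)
Your approach is correct and will yield the result, but it differs from the paper's route in a way worth noting. You work at the level of the series iterates $\Delta k^n$ and set up a coupled induction on the vector $(\|e^n\|_\infty,\|e^n_x\|_\infty,\|e^n_\nu\|_\infty,\|e^n_{x\nu}\|_\infty)$. The paper instead works directly with the Volterra equation satisfied by the \emph{kernel difference} $\delta k = k_1-k_2$ and its derivatives: subtracting \eqref{eq:kernel} for $\beta_1,\beta_2$ gives a Volterra equation in $\delta k$ with forcing bounded by $\|\delta\beta\|_H$, and likewise for $\delta k_\nu$. The key simplification you miss is that differentiating a Volterra convolution in $x$ pulls the unknown \emph{out} of the integral, so $\delta k_x$ and $\delta k_{x\nu}$ are given by \emph{explicit closed-form} expressions in terms of $\delta k$, $\delta k_\nu$, $\delta\beta$ and its derivatives, and the already-bounded $k_2,\partial_\nu k_2$ (see \eqref{eq:delta_k_x}); no iteration or induction is needed for these two components once $\delta k$ and $\delta k_\nu$ are controlled. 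This collapses your four-way coupled induction to two uncoupled successive-approximation arguments followed by two direct estimates, eliminating precisely the bookkeeping you flag as the main obstacle. Your series-level approach is more uniform across the four components but does more work; the paper's equation-level approach exploits the structure of $\partial_x$ acting on Volterra kernels to shortcut half the argument.
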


  \begin{proof}
    \underline{Lipschitzness of $\mathcal{K}$:}
     It was shown in \cite{bhan_neural_2023} that
    \begin{equation}
      \|k_1 - k_2\|_{\infty} \leq C \|\beta_1 - \beta_2\|_{\infty},
    \end{equation}
    where $\beta_1, \beta_2 \in \mathcal{C}^0([0, 1])$ such that $\|\beta_1\|_{\infty}, \|\beta_2\|_{\infty} < B$, $C :=e^{3B}$ and $k_1, k_2$ are the solutions of \eqref{eq:kernel_nu}.
    This leads to the Lipschitzness of $\mathcal{K}$ with $e^{3B_{\beta}}$ as the Lipschitz constant.
    \begin{equation}
      \|\mathcal{K}(\beta_1) - \mathcal{K}(\beta_2)\|_{\infty} < C \|\beta_1 - \beta_2\|_{\infty},
      \label{eq:K_Lipschitz}
    \end{equation}
    where $\beta_{1, 2} \in H$. 

    \underline{Lipschitzness of $\mathcal{K}_1$:}
   Let $\beta_1, \beta_2 \in H$ and $k_1=\mathcal{K}(\beta_1), \, k_2=\mathcal{K}(\beta_2)$. From \eqref{eq:kernel} it follows that:
    \begin{eqnarray}
      \delta k_{\nu} &=& -\delta \beta_{\nu} + \partial_{\nu} \beta_1 * \delta k + \delta \beta_{\nu} * k_2 \nonumber   \\ && + \beta_1 * \delta k_{\nu} + \delta \beta * \partial_{\nu} k_2 \label{eq:deltak_nu} \\
      \delta k &=& k_1 - k_2\,, \\
      \delta \beta &=& \beta_1 - \beta_2\,.
    \end{eqnarray}
    Reproducing the successive approximation process, we introduce the sequence $(\delta k_{\nu}^n)$ that satisfies 
    \begin{eqnarray}
      \delta k^{n+1}_{\nu} &=& \beta_1 * \delta k^n_{\nu}, \\
      \delta k^0_{\nu} &=& -\delta \beta_{\nu} + \partial_{\nu} \beta_1 * \delta k + \delta \beta_{\nu} * k_2 \nonumber  \\ &&  + \delta \beta * \partial_{\nu}k_2\,.
    \end{eqnarray}
    Using Lemma \ref{lemma_k_knu} and \eqref{eq:K_Lipschitz} gives the upper bounds
    \begin{eqnarray}
      \|\delta k^0_{\nu}\|_{\infty} &\leq& A \|\delta \beta \|_{H} \,, \\
      |\delta k^n_{\nu}(x, \nu)| &\leq& A \|\delta \beta \|_{H} \frac{B_{\beta}^n x^n}{n!} \,, \\
      A(B_{\beta}, B_{\beta_{\nu}}) &>& 0 \,.
    \end{eqnarray}
    It follows that 
    \begin{eqnarray}
      \delta k_{\nu} &=& \sum_{n=0}^{\infty} \delta k^n_{\nu} \label{eq:delta_k_nu_sum} \,, \\
      \|\delta k_{\nu}\|_{\infty} &\leq& A \|\delta \beta\|_{H} e^{B_{\beta}} \,. \label{eq:delta_k_nu_bound}
    \end{eqnarray}

    \underline{Lipschitzness of $\mathcal{K}_2$:}
     Keeping the same notation as before, differentiating \eqref{eq:kernel} with respect to  $\nu$ and then with respect to  $x$,
    we have that 
    \begin{eqnarray}
      \delta k_{x \nu}(x, \nu) &=& -\delta \beta_{x \nu}(x, \nu) + \partial_\nu \beta_1(0, \nu)\delta k(x, \nu) \nonumber 
      \\ && + \delta \beta_{\nu}(0, \nu)k_2(x, \nu) 
    + \beta_1(0, \nu) \delta k_{\nu}(x, \nu) \nonumber 
    \\ && + \delta \beta(0, \nu) \partial_{\nu} k_2(x, \nu)  + \partial_{\nu} \partial_x \beta_1 * \delta k (x, \nu) \nonumber
    \\ && + \delta \beta_{x \nu} * k_2 (x, \nu)  
   + \partial_x \beta_1(x, \nu) * \delta k_{\nu}(x, \nu) \nonumber 
   \\ && + \delta \beta_x * \partial_{\nu} k_2 (x, \nu)\,, \nonumber 
    \\ && \qquad (x, \nu) \in [0, 1] \times [-B_{\nu}, B_{\nu}] \label{eq:delta_k_x}\,.
    \end{eqnarray}
    Using Lemma \ref{lemma_k_knu}, \eqref{eq:K_Lipschitz} and \eqref{eq:delta_k_nu_bound} gives the upper bound
    \begin{eqnarray}
      \|\delta k_{x \nu}\|_{\infty} &\leq& B \|\delta \beta\|_H \label{eq:kx_Lipschitz}\,, \\
      B(B_{\beta}, B_{\beta_{\nu}}, B_{\beta_x}, B_{\beta_{x \nu}}) &>& 0\,.
    \end{eqnarray}
    
    \underline{Lipschitzness of $\mathcal{K}_3$:}
     Differentiating \eqref{eq:kernel} with respect to  $x$ leads to
    \begin{eqnarray}
      \delta k_x(x, \nu) &=& \beta_1(0, \nu) \delta k (x, \nu) + \delta \beta (0, \nu) k_2(x, \nu) \nonumber \\ &&+ \partial_x \beta_1 * \delta k (x, \nu) + \delta \beta_x * k_2 (x, \nu) \nonumber \\ && - \delta \beta_x(x, \nu)\,.
    \end{eqnarray}
    Using Lemma \ref{lemma_k_knu} and \eqref{eq:K_Lipschitz} gives the upper bound
    \begin{eqnarray}
      \|\delta k_x\|_{\infty} &\leq& D \|\delta \beta\|_H \,, \\
      D(B_\beta, B_{\beta_x}) &>& 0 \,.
    \end{eqnarray}
    The Lipschitzness of $\mathcal{K}, \mathcal{K}_1, \mathcal{K}_2, \mathcal{K}_3$ gives the Lipschitzness of $\mathcal{M}$ (with, for example, a Lipschitz constant that is the maximum of the four Lipschitz constants).
  \end{proof}

  Using Lemma \ref{lemma:M_Lipschitzness} and Theorem \ref{theorem:DeepOnet} we get the following.

\begin{theorem}\label{theorem:NO_k}
  {\em [Existence of a NO to approx. the kernel]}
    For all $ \beta \in H$ and  $\epsilon> 0$, there exists a neural operator $\hat{\mathcal{K}}$ such that, for all $\forall (x, \nu) \in [0, 1] \times [-B_{\nu}, B_{\nu}] $, 
\begin{align}
  |&\mathcal{K}(\beta)(x, \nu) - \hat{\mathcal{K}}(\beta)(x, \nu)| \nonumber \\
      +&|\frac{\partial}{\partial \nu}(\mathcal{K}(\beta) - \hat{\mathcal{K}}(\beta))(x, \nu)| \nonumber \\
      +&|\frac{\partial}{\partial \nu}\frac{\partial}{\partial x}(\mathcal{K}(\beta) - \hat{\mathcal{K}}(\beta))(x, \nu)| \nonumber \\
      +&|\frac{\partial}{\partial x} (\mathcal{K}(\beta) - \hat{\mathcal{K}}(\beta)) (x, \nu)|  < \epsilon 
    \,. \label{eq:khat_property} 
\end{align}
  \end{theorem}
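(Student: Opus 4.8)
The plan is to obtain $\hat{\mathcal{K}}$ from the DeepONet universal approximation theorem (Theorem~\ref{theorem:DeepOnet}), applied not to $\mathcal{K}$ directly but to the \emph{derivative} components of the operator $\mathcal{M}$ of Definition~\ref{def-operatorK}, and then to reconstruct a single approximant by integration. The reason for this detour is the one structural obstacle in the proof: Theorem~\ref{theorem:DeepOnet} controls only the \emph{values} of an operator output, with no bound on its spatial derivatives, so differentiating one $C^{0}$-accurate DeepONet for $k$ says nothing about $k_{\nu},k_{x},k_{x\nu}$. The remedy is to approximate the two ``highest'' quantities occurring in \eqref{eq:khat_property} --- namely $k_{x\nu}=\mathcal{K}_{2}(\beta)$ on $[0,1]\times[-B_{\nu},B_{\nu}]$ and the slice $x\mapsto k_{x}(x,0)$ --- and to build $\hat{\mathcal{K}}$ so that its derivatives are, by construction, these approximants, never derivatives of a network.

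Concretely, I will use that a function with continuous $k,k_{x},k_{\nu},k_{x\nu}$ on the rectangle obeys $k(x,\nu)=k(0,0)+\int_{0}^{\nu}k_{\nu}(0,s)\,ds+\int_{0}^{x}k_{y}(y,0)\,dy+\int_{0}^{x}\!\int_{0}^{\nu}k_{ys}(y,s)\,ds\,dy$, and that Lemma~\ref{lemma_k_knu} guarantees $k$ has exactly this regularity. Two of the four building blocks are \emph{exact} functionals of $\beta$: evaluating \eqref{eq:kernel} at $x=0$ gives $k(0,\nu)=-\beta(0,\nu)$, hence $k(0,0)=-\beta(0,0)$ and $k_{\nu}(0,\nu)=-\beta_{\nu}(0,\nu)$. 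For the other two I invoke Theorem~\ref{theorem:DeepOnet}: the input set $H$ is compact (Arzel\`a--Ascoli, as noted after Definition~\ref{def-operatorK}), the output domains $[0,1]$ and $[0,1]\times[-B_{\nu},B_{\nu}]$ are compact, and the operators $\beta\mapsto k_{x}(\cdot,0)$ (the continuous $\mathcal{K}_{3}$ composed with the $1$-Lipschitz restriction to $\nu=0$) and $\beta\mapsto k_{x\nu}=\mathcal{K}_{2}(\beta)$ are continuous --- in fact Lipschitz --- by Lemma~\ref{lemma:M_Lipschitzness}. Hence for any $\delta>0$ there are DeepONets $N_{a}(\beta)(x)$ and $N_{c}(\beta)(x,\nu)$ with $|k_{x}(x,0)-N_{a}(\beta)(x)|<\delta$ and $|k_{x\nu}(x,\nu)-N_{c}(\beta)(x,\nu)|<\delta$ on their domains. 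I then define $\hat{\mathcal{K}}(\beta)(x,\nu):=-\beta(0,0)-\int_{0}^{\nu}\beta_{\nu}(0,s)\,ds+\int_{0}^{x}N_{a}(\beta)(y)\,dy+\int_{0}^{x}\!\int_{0}^{\nu}N_{c}(\beta)(y,s)\,ds\,dy$, which is still a neural operator (a fixed affine combination of DeepONets with integrations; should an all-network object be preferred, $-\beta(0,0)$ and $-\beta_{\nu}(0,\cdot)$ are themselves DeepONet-approximable, being trivially continuous functionals of $\beta$).

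It then remains to bound the four terms of \eqref{eq:khat_property}. Differentiating the defining formula for $\hat{\mathcal{K}}$ produces only integrands and single integrals of $N_{c}$ --- never a derivative of a network --- so $\partial_{x}\partial_{\nu}\hat{\mathcal{K}}=N_{c}(\beta)(x,\nu)$, $\partial_{\nu}\hat{\mathcal{K}}=-\beta_{\nu}(0,\nu)+\int_{0}^{x}N_{c}(\beta)(y,\nu)\,dy$ and $\partial_{x}\hat{\mathcal{K}}=N_{a}(\beta)(x)+\int_{0}^{\nu}N_{c}(\beta)(x,s)\,ds$, while the exact kernel admits the matching representations $k_{x\nu}=k_{x\nu}$, $k_{\nu}(x,\nu)=-\beta_{\nu}(0,\nu)+\int_{0}^{x}k_{y\nu}(y,\nu)\,dy$, $k_{x}(x,\nu)=k_{x}(x,0)+\int_{0}^{\nu}k_{xs}(x,s)\,ds$, and $k$ itself obeys the reconstruction identity above (whose first two terms cancel the corresponding terms of $\hat{\mathcal{K}}$ exactly). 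Subtracting and using $\big|\int_{0}^{x}(\cdot)\big|\le1$ and $\big|\int_{0}^{\nu}(\cdot)\big|\le B_{\nu}$ on the domain bounds the four terms by $(1+B_{\nu})\delta$, $\delta$, $\delta$, $(1+B_{\nu})\delta$ respectively, hence their sum by $(4+2B_{\nu})\delta$; taking $\delta=\epsilon/(4+2B_{\nu})$ gives \eqref{eq:khat_property}. I expect the only genuinely delicate point to be the first one --- arranging that $\hat{\mathcal{K}}$ is at once a legitimate neural operator and differentiable with the correct derivatives --- which is precisely why the construction integrates up from the top derivatives rather than differentiating down; note that $N_{a},N_{c}$ need only be continuous, so no smoothness of the networks is used anywhere.
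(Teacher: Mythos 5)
Your proposal is correct, and it takes a genuinely different --- and more careful --- route than the paper. The paper's entire proof is the one-line remark that Theorem~\ref{theorem:DeepOnet} applies to the stacked operator $\mathcal{M}$ of Definition~\ref{def-operatorK}, since $H$ is compact and $\mathcal{M}$ is Lipschitz by Lemma~\ref{lemma:M_Lipschitzness}; this yields a DeepONet approximating the four-tuple $(k,k_{\nu},k_{x\nu},k_{x})$ componentwise, and the paper implicitly identifies the last three components with the actual partial derivatives of the first. That identification is exactly the ``structural obstacle'' you flag: nothing in the universal approximation theorem forces the network's second output to equal $\partial_{\nu}$ of its first. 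Your construction closes this gap by approximating only the top data --- $\mathcal{K}_{2}(\beta)=k_{x\nu}$ and the slice $k_{x}(\cdot,0)$, both continuous by the same Lemma~\ref{lemma:M_Lipschitzness} --- and antidifferentiating, with the boundary data supplied exactly by $k(0,\nu)=-\beta(0,\nu)$ from \eqref{eq:kernel}; the derivatives of $\hat{\mathcal{K}}$ are then the approximants by construction, and your bookkeeping $(1+B_{\nu})\delta+\delta+\delta+(1+B_{\nu})\delta=(4+2B_{\nu})\delta$ is right. What the paper's route buys is brevity and an $\hat{\mathcal{K}}$ that is literally a DeepONet; what yours buys is a proof in which the four inequalities of \eqref{eq:khat_property} genuinely refer to one function and its derivatives, at the mild cost that $\hat{\mathcal{K}}$ is an integral of DeepONets plus exact functionals of $\beta$ rather than a single network (a point you acknowledge and which does not affect any downstream use of the theorem, since the stability analysis only uses the bounds \eqref{eq:khat_property}, not the architecture).
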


\section{Main Result: Locally Stabilizing DeepONet Gain-Scheduling Feedback}\label{sec:deeponetstabilizes}


\begin{theorem}\label{theorem:loc_stability_base}
  {\em [Loc. stabilization by gain scheduling.]}
  Let $K$, $B_{\nu}$ and  the elements of the vector 
  \begin{equation}
      \breve B = (B_{\beta}, B_{\beta_{\nu}}, B_{\beta_x}, B_{\beta_{x \nu}}), 
  \end{equation} be positive and arbitrarily large. 
 Then for all $c>0$ there exist positive constants $\Omega_0(c, \breve B, B_{\nu})=\mathcal{O}_{c \to \infty}(ce^{-2c}), \epsilon^*(c, \breve B , \Omega_0 ) =\mathcal{O}_{c \to \infty}(e^{-\frac{c}{2}}), f(\breve B),  M(c, \breve B) = f(\breve B)e^{c}$ 
such that
for any $\beta \in \mathcal{C}^1([0, 1] \times \mathbb{R})$ with the properties that $\beta_{x \nu}$ is at least defined on $[0, 1] \times [-B_{\nu}, B_{\nu}]$,
  \begin{eqnarray}
    &&|\beta(x, \nu)| \leq B_{\beta}, \quad |\beta_x(x, \nu)| \leq B_{\beta_x}, \quad |\beta_{\nu}(x, \nu)| \leq B_{\beta_{\nu}},\nonumber  \\ &&|\beta_{x \nu}(x, \nu)| \leq B_{\beta_{x \nu}}, 
    \quad \forall (x, \nu) \in [0, 1] \times [-B_{\nu}, B_{\nu}]\,, 
  \end{eqnarray}
 and that $\beta, \beta_{x}, \beta_{\nu}, \beta_{x \nu}$ are K-Lipschitz,  
 any
 feedback law 
  \begin{equation}
    U(t) = \int_{0}^{1} \hat{k}(1-y, u(0, t))u(y, t)dy,
  \end{equation}
  with $\hat{k}$ being an approximated backstepping kernel provided by Theorem \ref{theorem:NO_k} for any  accuracy $\epsilon \in (0, \epsilon^*)$, guarantees that, 
%
  if the initial condition $u_0 := u(\cdot, 0)$ 
  of the system \eqref{eq:ut_def}, \eqref{eq:u1_def} 
  satisfies 
  \begin{equation}
    \Omega(u_0) \leq \Omega_0\,,
  \end{equation}
  then
  \begin{equation}
    \Omega(u(t)) \leq M \Omega(u_0) e^{-\frac{c}{2}t}\,,\quad \forall t\geq 0\,,
  \end{equation}
where 
\begin{equation} 
  \Omega(u(t)) := u^2(0, t) + \|u(t)\|^2 + \|u_x(t)\|^2
  \,.\label{eq:u_norm_def}
\end{equation}
\end{theorem}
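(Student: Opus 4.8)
The plan is to carry out PDE backstepping with the \emph{approximate} kernel $\hat k=\hat{\mathcal K}(\beta)$ supplied by Theorem~\ref{theorem:NO_k} in place of the exact gain‑scheduled kernel $k$, and then to run an $H^1$‑type Lyapunov argument on the resulting \emph{perturbed} target system, keeping separate track of the two perturbations — the quasi‑static error (because $u(0,t)$ is not frozen) and the neural‑operator error, which is $\mathcal O(\epsilon)$ in $C^1$ by Theorem~\ref{theorem:NO_k}. First I would introduce
\begin{equation}
w(x,t)=u(x,t)-\int_0^x \hat k(x-y,u(0,t))\,u(y,t)\,dy ,\nonumber
\end{equation}
which, together with the stated feedback $U(t)=\int_0^1\hat k(1-y,u(0,t))u(y,t)\,dy$, gives $w(1,t)=0$ identically. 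Since $\hat k$ is within $\epsilon$ of $k$ in $C^1$ on $[0,1]\times[-B_\nu,B_\nu]$ and, by Lemma~\ref{lemma_k_knu}, $k$ makes the Volterra map $u\mapsto w$ boundedly invertible, the same holds for $\hat k$ once $\epsilon$ is small, with an inverse kernel $\hat l$ obeying an equation of the form \eqref{eq:l_def_exact_k} up to an $\mathcal O(\epsilon)$ source and exponential bounds with constants depending only on $\breve B$.

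Next I would derive the target system for $w$. Differentiating the transformation in $t$, inserting \eqref{eq:ut_def}, integrating by parts in $x$, and using that $\hat k$ satisfies \eqref{eq:kernel} only up to the residual $\delta:=\hat k+\beta-\beta*\hat k=\tilde k-\beta*\tilde k$ (with $\tilde k:=\hat k-k$, so $\|\delta\|_\infty\le(1+B_\beta)\epsilon$, and likewise for $\delta_x$), one obtains
\begin{equation}
w_t=w_x-w_x(0,t)\,\hat\Omega(x,t)+\mathcal R_\epsilon[w](x,t),\qquad w(1,t)=0 ,\nonumber
\end{equation}
where $\hat\Omega$ is the analogue of \eqref{eq:omega_def_exact_k} built from $\hat k_\nu$ and $\hat l$ — hence at least quadratic in $w$, with $\hat\Omega(1,t)$, $\|\hat\Omega\|$ and $\|\hat\Omega_x\|$ bounded by $C(\breve B)\,\|w\|_{H^1}$ — and $\mathcal R_\epsilon$ collects the genuinely new terms generated by $\delta,\delta_x,\tilde k_\nu,\tilde k_{x\nu}$: these are at most linear in $w$ (including a term $\sim\delta(\cdot,w(0,t))\,w(0,t)$) and carry the factor $\epsilon$ times constants of the form $e^{B_\beta}$. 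The identity $w_t(1,t)=0$ then yields the crucial trace relation $w_x(1,t)=w_x(0,t)\,\hat\Omega(1,t)+\mathcal R_\epsilon[w](1,t)$, so $w_x(1,t)$ is ``small'': quadratic in $w$ plus an $\mathcal O(\epsilon)$ contribution.

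For the estimate I would use $V(t)=A\int_0^1 e^{cx}w^2(x,t)\,dx+\int_0^1 e^{cx}w_x^2(x,t)\,dx$ with $A=A(\breve B)$ large. The $e^{cx}$ weight applied to the transport part yields $\dot V\le -cV-A\,w^2(0,t)-w_x^2(0,t)+e^{c}\,w_x^2(1,t)+(\text{interior cross terms})$. Substituting the trace relation and using Young's inequality splits $e^{c}w_x^2(1,t)$ into a piece $\sim e^{c}\hat\Omega^2(1,t)\,w_x^2(0,t)$ — absorbed by $-w_x^2(0,t)$ once $\|w\|_{H^1}^2$, hence $V$, is below a threshold of order $e^{-c}$, which is what forces the \emph{local} restriction — and a piece of order $e^{c}\epsilon^2\,(w^2(0,t)+\|w\|^2)$, absorbed by the negative boundary terms provided $e^{c}\epsilon^2$ is small, i.e.\ $\epsilon<\epsilon^*$ with $\epsilon^*=\mathcal O(e^{-c/2})$. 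The interior cross terms involving $w_x(0,t)\hat\Omega_x$ and $(\mathcal R_\epsilon)_x$ are handled the same way, absorbed into the negative boundary terms and a small fraction of $-cV$. This gives $\dot V\le -\tfrac c2 V$ whenever $V\le\theta$ with $\theta$ of order $e^{-c}$ (up to $\breve B$‑dependent constants), so a bootstrap shows $V(0)\le\theta\Rightarrow V(t)\le V(0)e^{-ct/2}$ for all $t$. Since $V(0)\le\mathrm{const}(\breve B)\,e^{c}\,\Omega(w_0)$, $\Omega(w)\le\mathrm{const}(\breve B)\,V$, and $\Omega(w)$ and $\Omega(u)$ are equivalent with $\breve B$‑dependent constants through $\hat l$ and $\hat k$, one reads off estimates of the stated orders $\Omega_0=\mathcal O(ce^{-2c})$, $\epsilon^*=\mathcal O(e^{-c/2})$, $M=f(\breve B)e^{c}$, and the decay bound $\Omega(u(t))\le M\,\Omega(u_0)e^{-ct/2}$.

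The main obstacle is precisely the boundary term at $x=1$: the $e^{cx}$ weight needed to extract the decay rate $\sim c$ amplifies the trace $w_x^2(1,t)$ by $e^{c}$, so both the admissible initial ball $\Omega_0$ and the admissible accuracy $\epsilon^*$ must shrink (roughly) exponentially in $c$; quantifying this, together with the full bookkeeping of the quadratic gain‑scheduling terms through $\hat k_\nu,\hat k_{x\nu},\hat l$ and of the $\epsilon$‑terms as they enter $\dot V$, is the technically heavy part of the argument, and it is exactly what confines the conclusion to local stability and fixes the stated asymptotics. A secondary, more routine, difficulty is verifying, uniformly over $H$, that the $\hat k$‑transformation is invertible and that all kernel and inverse‑kernel bounds depend only on $\breve B$.
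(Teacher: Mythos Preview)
Your proposal is correct and follows essentially the same route as the paper: the approximate backstepping transform with $\hat k$, the perturbed target system $w_t=w_x-w_x(0,t)\hat\Omega+\mathcal R_\epsilon$ with $w(1,t)=0$, the Lyapunov function built from $\int_0^1 e^{cx}w^2\,dx$ and $\int_0^1 e^{cx}w_x^2\,dx$, the trace identity for $w_x(1,t)$ and its absorption via the smallness of $V$ and of $e^{c}\epsilon^2$, and finally the norm equivalence through $\hat l$. The only cosmetic difference is that the paper takes the two Lyapunov pieces with equal weight ($A=1$), which already suffices; your extra parameter $A(\breve B)$ is harmless but not needed.
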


This theorem is proven in Section \ref{sec:H1target}, in which stability is studied for the perturbed target system, and Section \ref{sec:normequiv}, where the norm equivalence is established between the original and target system states, concluding at the end of Section \ref{sec:normequiv}. 


  \section{\texorpdfstring{$H_1$}{H1} Lyapunov Analysis of Perturbed Target System}
\label{sec:H1target}

  For $\beta \in H$, denote $\hat{k} := \hat{\mathcal{K}}(\beta)$ and let $\hat{\mathcal{K}}$ be any NO satisfying Theorem \ref{theorem:NO_k} for any accuracy $\epsilon > 0 $.   We consider now the ``DeepONet-approximated''  transform of \eqref{eq:ut_def}, 
    \begin{eqnarray}
      w(x, t) &=& u(x, t) - \int_{0}^{x} \hat{k}(x-y, u(0, t))u(y, t)dy, \label{eq:w_def} 
    \end{eqnarray}
 which produces the following perturbed target system,
 \begin{eqnarray}
   w_t(x, t) &=& w_x(x, t) -w_x(0, t) \Omega(x, t) \nonumber \\  &&  + [\delta(x, w(0, t)) + \tilde{k}(0, w(0, t) \Omega(x, t))] \nonumber \\ && \times w(0, t) \label{eq:wt_def},\\
      w(1, t) &=& 0, \label{eq:w1_def}
 \end{eqnarray}
 where
 \begin{eqnarray}
   \Omega(x, t) &=& \int_{0}^{x}[k_{\nu}(x-y, w(0, t)) + \delta_2(x-y, w(0, t))] \nonumber \\
                && \times \bigg[w(y, t) \nonumber \\ && + \int_{0}^{y}\hat{l}(y-s, w(0, t))w(s, t)ds\bigg]dy \label{eq:omega_def}\,,  \\
   \delta (x, \nu) &=& -\tilde{k}(x, \nu) + \int_{0}^{x}\beta(x-y, \nu)\tilde{k}(y, \nu)dy\,,  \label{eq:delta_def} \\   
   \delta_2(x, \nu) &=& - \tilde{k}_{\nu}(x, \nu)\,, \\
   \hat{l}(x, \nu) &=& \hat{k}(x, \nu) + \int_{0}^{x}\hat{k}(x-y, \nu) \hat{l}(y, \nu)dy \,, \label{eq:lhat_def} \\
   \tilde{k} &=& k - \hat{k}\,. 
 \end{eqnarray}

\begin{lemma}\label{lemma:inverse_kernel_bound}
 {\em [Upper bound for inverse bkst kernel.]}
    The inverse backstepping kernel $\hat{l}$ introduced as the solution of \eqref{eq:lhat_def} satisfies the inequality
    \begin{eqnarray}
      |\hat{l}(x, \nu)| &\leq& (B_{\beta} + (1+B_\beta)\epsilon)e^{(1+B_{\beta})\epsilon x}\,,  \\
      u(x, t) &=& w(x, t) + \int_0^x \hat{l}(x-y, w(0, t))w(y, t)dy,\label{eq:lhat_ppty}
    \end{eqnarray}
$\forall(x, \nu) \in [0, 1] \times [-B_{\nu}, B_{\nu}]$,    where $u$ is the state of plant \eqref{eq:ut_def}, \eqref{eq:u1_def} and $w$ is the state of the perturbed target system \eqref{eq:wt_def}, \eqref{eq:w1_def}. 
\end{lemma}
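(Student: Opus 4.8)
The plan is to prove the two assertions separately: the explicit inverse transformation \eqref{eq:lhat_ppty}, and the exponential bound on $\hat{l}$. First I would record that \eqref{eq:lhat_def} is, for each frozen value $\nu = w(0,t)$, a linear Volterra integral equation of the second kind with continuous (and bounded) kernel $\hat{k}(\cdot,\nu)$, so the successive-approximation/Neumann series converges uniformly and yields a unique $\mathcal{C}^0$ solution $\hat{l}(\cdot,\nu)$, with continuity in $\nu$ following from uniform convergence exactly as in Lemma~\ref{lemma_k_knu}. For \eqref{eq:lhat_ppty} I would substitute the transform \eqref{eq:w_def}, written compactly as $w = u - \hat{k} * u$ (all convolutions in $x$, second argument frozen at $u(0,t)=w(0,t)$), into $w + \hat{l} * w$; commutativity of $*$ together with the defining relation \eqref{eq:lhat_def}, i.e. $\hat{l} - \hat{k} - \hat{k} * \hat{l} = 0$, collapses the correction terms and leaves $w + \hat{l} * w = u + (\hat{l} - \hat{k} - \hat{k} * \hat{l})*u = u$. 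Consistency of freezing the \emph{same} scheduling variable on both sides is immediate since $w(0,t)=u(0,t)$.

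For the bound, the structural observation I would exploit is that the \emph{exact} inverse kernel for this plant is simply $l(x,\nu) = -\beta(x,\nu)$: the function $-\beta$ solves $l = k + k*l$ precisely because that identity is equivalent to \eqref{eq:kernel}. I would then set $\hat{l} = -\beta + \rho$ and plug into \eqref{eq:lhat_def}; using $\hat{k} = k - \tilde{k}$ and \eqref{eq:kernel} to cancel the $\mathcal{O}(1)$ ``$-\beta$'' contributions, the remainder satisfies $\rho = e + \hat{k} * \rho$ with forcing $e := \beta * \tilde{k} - \tilde{k}$. The decisive step is to substitute the rearrangement $\hat{k} = -\beta + e + \beta * \hat{k}$ (from $\hat{k} - \beta * \hat{k} = -\beta + e$) into the term $\hat{k} * \rho$ and use $\hat{k} * \rho = \rho - e$: the $\beta * \rho$ terms cancel, leaving the \emph{reduced} Volterra equation $\rho = (e - \beta * e) + e * \rho$ whose kernel is now the \emph{small} function $e$. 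By Theorem~\ref{theorem:NO_k} one has $\|\tilde{k}\|_\infty < \epsilon$, hence (using $x\le 1$, $\|\beta\|_\infty \le B_\beta$) $\|e\|_\infty \le (1+B_\beta)\epsilon$ and $\|e - \beta * e\|_\infty \le (1+B_\beta)\|e\|_\infty$. The geometric majorant used in Lemma~\ref{lemma_k_knu} then gives $|\rho(x,\nu)| \le \|e - \beta * e\|_\infty\, e^{\|e\|_\infty x}$, and combining with $|\beta|\le B_\beta$ and $e^{\|e\|_\infty x}\ge 1$ produces a bound of the stated form $|\hat{l}(x,\nu)| \le (B_\beta + (1+B_\beta)\epsilon)e^{(1+B_\beta)\epsilon x}$ on $[0,1]\times[-B_\nu,B_\nu]$.

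I expect the main obstacle to be exactly that last algebraic reduction: the naive remainder equation $\rho = e + \hat{k} * \rho$ has an $\mathcal{O}(1)$ kernel $\hat{k}$ (of size $\sim B_\beta e^{B_\beta}$ by Lemma~\ref{lemma_k_knu}), which would only yield an exponential rate $\sim B_\beta e^{B_\beta}$; one must engineer the cancellation that replaces it by the $\mathcal{O}(\epsilon)$ kernel $e$ so that the bound degenerates correctly to $|l|\le B_\beta$ as $\epsilon\to 0$. This sharpness is not cosmetic --- $\hat{l}$ enters the perturbed target dynamics \eqref{eq:wt_def}--\eqref{eq:omega_def}, and the $H_1$ Lyapunov analysis of Section~\ref{sec:H1target} needs the $\epsilon\to 0$ limit to be benign. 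Everything else (the Volterra existence argument, the convolution bookkeeping in the substitution of \eqref{eq:w_def}, and the geometric-majorant estimate) is routine and parallels Lemma~\ref{lemma_k_knu}; the only real care needed is keeping the frozen second argument $\nu = w(0,t)$ consistent throughout.
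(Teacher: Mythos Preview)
Your proposal is correct and rests on the same key observation as the paper --- that the Volterra equation for $\hat{l}$ can be rewritten with the \emph{small} kernel $\delta$ (your $e$; note $e=\delta$ exactly) in place of the $\mathcal{O}(1)$ kernel $\hat{k}$ --- but you reach it by a detour. The paper goes in one step from $\hat{k} = -\beta + \beta*\hat{k} + \delta$ and $\hat{l} = \hat{k} + \hat{k}*\hat{l}$ to
\[
\hat{l} \;=\; -\beta + \delta + \delta * \hat{l},
\]
and then applies the successive-approximation majorant with forcing $-\beta+\delta$ (bound $B_\beta+\bar\delta$) and convolution kernel $\delta$ (bound $\bar\delta=(1+B_\beta)\epsilon$), which gives the stated constant exactly. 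You instead write $\hat{l}=-\beta+\rho$, first derive $\rho=e+\hat{k}*\rho$, and then reduce to $\rho=(e-\beta*e)+e*\rho$; this reduced equation is just the paper's identity shifted by $\beta$, so the algebra is equivalent. However, your forcing term is now $e-\beta*e$ with bound $(1+B_\beta)\bar\delta$, so your geometric majorant yields $|\hat{l}|\le\big(B_\beta+(1+B_\beta)^2\epsilon\big)e^{(1+B_\beta)\epsilon x}$, not the lemma's $B_\beta+(1+B_\beta)\epsilon$; to match the constant, skip the remainder decomposition and work with $\hat{l}$ directly. Your observation that the exact inverse kernel is $l=-\beta$ is correct and is precisely what the paper's identity encodes as $\delta\to 0$. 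The verification of \eqref{eq:lhat_ppty} is identical in both approaches.
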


\begin{proof}
    First, notice that $\hat{k}$ satisfies
    \begin{equation}
      \hat{k} = -\beta + \beta * \hat{k} + \delta,
      \label{eq:khat_delta}
    \end{equation}
    where $\delta$ is defined in \eqref{eq:delta_def}. 
    Then, reusing the definition of $\hat{l}$ in \eqref{eq:lhat_def}, we have the following equality
    \begin{equation}
      \hat{l} = -\beta + \delta + \delta * \hat{l}\,.
      \label{eq:lhat_delta}
    \end{equation}
    Applying, again, the successive approximation method, it holds that
    \begin{eqnarray}
      |\hat{l}(x, \nu)| &\leq& |B_{\beta} + \bar{\delta}|e^{\bar{\delta} x} \label{eq:lhat_bound}\,, \nonumber \\
      \forall (x, \nu) &\in& [0, 1] \times [-B_{\nu}, B_{\nu}],
    \end{eqnarray}
    where $\|\delta\|_{\infty} \leq\bar{\delta} :=  \epsilon (1+B_{\beta})$. 
    The proof of \eqref{eq:lhat_ppty}  is  done by computing $w + \hat{l} * w$ with $w$ satisfying \eqref{eq:w_def}, and using \eqref{eq:lhat_def}.
\end{proof}

\begin{lemma}
 {\em [Loc. Lyapunov estimate for perturbed target sys.]}
 \label{lemma:V_bound}
 For any $\beta \in H$, for any $c > 0$ 
 there exist positive functions $R_0(\breve B, c, B_{\nu}) = \mathcal{O}_{c \to \infty}(ce^{-c}), \ \epsilon^*(\breve B, c) =\mathcal{O}_{c \to \infty}(e^{-\frac{c}{2}})$ with a decreasing dependence on argument $\breve B$ argument such that for all $\epsilon \in(0, \epsilon^*)$, the Lyapunov function
 \begin{eqnarray}
   V(t) &=& V_1(t) + V_2(t) \label{eq:V_def}\,, \\
   V_1(t) &=& \frac{1}{2}\int_{0}^{1} e^{cx}w^2(x, t)dx 
   \label{eq:V1_def}\,, \\
   V_2(t) &=& \frac{1}{2}\int_{0}^{1} e^{cx}w_x^2(x, t)dx \,,
   \label{eq:V2_def}
 \end{eqnarray}
   satifies :
   \begin{equation}
 V(0) \leq R_0  \quad \Rightarrow \quad  V(t) \leq V(0) e^{-\frac{c}{2}t}, \qquad t \geq 0\,.
   \end{equation}
 \end{lemma}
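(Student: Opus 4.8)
The plan is to derive the differential inequality $\dot V \le -\tfrac{c}{2}V$ on the sublevel set $\{V\le R_0\}$ and then close the argument by a forward‑invariance/comparison step. Working with sufficiently regular (classical) solutions, so that the estimate is a priori, I would first differentiate $V_1$ and $V_2$ along \eqref{eq:wt_def}--\eqref{eq:omega_def}. Substituting $w_t$ into $\dot V_1=\int_0^1 e^{cx}w\,w_t\,dx$ and integrating the transport term by parts with $w(1,t)=0$ gives
\[
\dot V_1 = -\tfrac12 w^2(0,t) - c V_1 + \int_0^1 e^{cx} w\big[-w_x(0,t)\,\Omega + (\delta + \tilde k\,\Omega)\,w(0,t)\big]dx,
\]
and $V_2$ is handled the same way after differentiating the PDE in $x$ (which is where $\delta_x$, and hence the need for $\tilde k_x$ supplied by Theorem~\ref{theorem:NO_k}, enters): besides $-\tfrac12 w_x^2(0,t)-cV_2$ this produces the positive boundary term $\tfrac12 e^{c}w_x^2(1,t)$. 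That term is controlled by evaluating the PDE at $x=1$: since $w(1,t)\equiv0$ we have $w_t(1,t)=0$, hence
\[
w_x(1,t)=w_x(0,t)\Omega(1,t)-\big[\delta(1,w(0,t))+\tilde k(0,w(0,t))\Omega(1,t)\big]w(0,t),
\]
so $w_x^2(1,t)$ splits into a piece quadratic in $w(0,t)$ with coefficient of order $\bar\delta^2=\epsilon^2(1+B_\beta)^2$ and a genuinely higher‑order piece of order $\|w\|^2 w_x^2(0,t)$.

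Next I would bound every remaining term. Using Cauchy--Schwarz on the spatial integrals defining $\Omega,\Omega_x$, the kernel/derivative bounds of Lemma~\ref{lemma_k_knu} (in particular the bound on $k_{x\nu}$ appearing in $\Omega_x$), the accuracy bound of Theorem~\ref{theorem:NO_k} (which controls $\tilde k,\tilde k_\nu,\tilde k_x$, hence $\delta,\delta_2,\delta_x$, to order $\epsilon$), the inverse‑kernel bound of Lemma~\ref{lemma:inverse_kernel_bound} for $\hat l$, and the Poincar\'e‑type inequality $|w(0,t)|\le\|w_x(t)\|$ afforded by $w(1,t)=0$, one gets $|\Omega(x,t)|\le C\|w(t)\|$ and $|\Omega_x(x,t)|\le C(|w(x,t)|+\|w(t)\|)$ with $C$ a function of $\breve B,B_\nu$ only. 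Applying Young's inequality, the cross terms carrying $w_x(0,t)$ are absorbed into $-\tfrac12 w_x^2(0,t)$ and those carrying $w(0,t)$ into $-\tfrac12 w^2(0,t)$; the surviving contributions are (i) $\mathcal O(\epsilon)$- and $\mathcal O(e^{c}\bar\delta^2)$-multiples of $\|w\|^2+\|w_x\|^2\le 2V$, and (ii) cubic/quartic terms, i.e.\ $\mathcal O(V^{3/2})+\mathcal O(V^2)$, from the state‑dependent amplitude of $\Omega$. Throughout one uses $w(0,t)=u(0,t)$ together with $w^2(0,t)\le 2V_2\le 2V$ both to bound the boundary values and to keep the scheduling variable in $[-B_\nu,B_\nu]$, which is what makes the bounds of Lemma~\ref{lemma_k_knu} applicable. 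Provided $\epsilon$ is small enough that the two boundary‑term coefficients stay nonnegative after all these absorptions---the binding requirement is $e^{c}\bar\delta^2\lesssim1$---one is left with
\[
\dot V \le -cV + a(\epsilon)\,V + b(V)\,V,\qquad a(\epsilon)\to0 \ \text{as}\ \epsilon\to0,\quad b(V)\to0 \ \text{as}\ V\to0,
\]
with all implied constants depending only on $\breve B,B_\nu$.

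Finally I would fix the thresholds. Choose $\epsilon^*$ so that for $\epsilon<\epsilon^*$ one has $a(\epsilon)\le c/4$ and the boundary coefficients above are nonnegative; the latter condition, driven by the $e^{c}\bar\delta^2$ term, forces $\epsilon^*=\mathcal O_{c\to\infty}(e^{-c/2})$, decreasing in $\breve B$ because the constants grow with $\breve B$. Then choose $R_0$ so that $b(R_0)\le c/4$ and $2R_0\le B_\nu^2$; since $b$ carries factors $e^{c},e^{2c}$ against the quadratic smallness $V\le R_0$, this forces $R_0=\mathcal O_{c\to\infty}(ce^{-c})$. Under these choices $\dot V\le-\tfrac c2 V$ whenever $V\le R_0$. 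Consequently, if $V(0)\le R_0$ then $\dot V\le-\tfrac c2 R_0<0$ on $\{V=R_0\}$, the set $\{V\le R_0\}$ is forward invariant, the inequality $\dot V\le-\tfrac c2 V$ holds for all $t\ge0$, and the comparison lemma gives $V(t)\le V(0)e^{-ct/2}$.

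The step I expect to be the main obstacle is the bookkeeping in the two paragraphs above. Two perturbations act at once---the quasi‑static $\Omega$‑term, present even with the exact kernel, and the neural‑operator errors $\delta,\delta_2$---and the Lyapunov weight $e^{cx}$, needed to pull decay rate $\sim c$ out of the transport term, amplifies both, which is precisely why $\epsilon$ must be exponentially small and the initial ball exponentially small in $c$. Carrying these dependencies cleanly through the many Young's‑inequality splittings, and in particular taming $w_x^2(1,t)$ through the boundary relation, is the delicate part; the rest is routine, if lengthy.
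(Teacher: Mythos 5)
Your proposal is correct and follows essentially the same route as the paper: integrate by parts with the $e^{cx}$ weight, control the boundary term $\tfrac{1}{2}e^{c}w_x^2(1,t)$ via the relation obtained from $w(1,t)\equiv 0$ and the PDE evaluated at $x=1$, bound $\Omega,\Omega_x$ through the kernel and accuracy estimates, absorb cross terms by Young's inequality, and close with the threshold choices $\epsilon^*=\mathcal O(e^{-c/2})$, $R_0=\mathcal O(ce^{-c})$ (also capped by $B_\nu^2/2$ to keep the scheduling variable in $[-B_\nu,B_\nu]$) and the comparison lemma. The only cosmetic difference is that you invoke a Poincar\'e-type bound $|w(0,t)|\le\|w_x(t)\|$ where the paper uses Agmon's inequality $|w(0,t)|^2\le 2\|w(t)\|\,\|w_x(t)\|$; both serve the same purpose.
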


\begin{proof}
This proof is the central and longest part of the paper, establishing local exponential stability of the perturbed target system, robust to the specific neural network and the accuracy $\epsilon$ used to approximate the gain-scheduled control law. 


Using  Agmon's inequality on $w$ satisfying \eqref{eq:wt_def}, we have 
 \begin{equation}
   |u(0, t)|^2 = |w(0, t)|^2 \leq 2\|w(t)\|\, \|w_x(t)\| \leq 2V(t)\,.
   \label{eq:u0_bound_lyap}
 \end{equation}
Then, if we chose the initial conditions small so that $V(0) \leq \frac{B_{\nu}^2}{2}$ and prove that $V$ is decreasing, the property $|u(0, t)| \leq B_{\nu}$, important throughout our analysis, is enforced $\forall t \geq 0$. 

 Before starting the heavy computation of $\dot{V}$ we  recall from Theorem \ref{theorem:NO_k} and Lemma \ref{lemma:inverse_kernel_bound} several upper bounds, and introduce additional ones, which that are used in the computation, which hold as a result of $|w(0, t)| \leq B_{\nu}$ holding:
 \begin{eqnarray}
   \|\delta(\cdot, w(0, t))\|_{\infty} &\leq& \epsilon (1 + B_{\beta}) =: \bar{\delta} \label{eq:delta_bound}\,,\\
   \|\delta_x(\cdot, w(0, t))\|_{\infty} &\leq& \epsilon (1 + B_{\beta} + B_{\beta_x}) =: \bar{\delta_x} \label{eq:delta_x_bound}\,,\\
   \|\delta_2(\cdot, w(0, t))\|_{\infty} &\leq& \epsilon \label{eq:delta2_lyap}\,,\\
   \|\delta_{2x}(\cdot, w(0, t))\|_{\infty} &\leq& \epsilon \label{eq:delta2_x_lyap}\,,\\
   \|\tilde{k}(\cdot, w(0, t))\|_{\infty} &\leq& \epsilon \label{eq:tilde_k_bound}\,,\\
   \|\hat{l}(\cdot, w(0, t))\|_{\infty} &\leq&  (B_{\beta}+\bar{\delta})e^{\bar{\delta}} := \bar{l} \label{eq:knux_lyap}\,, \\
 \|k_{\nu}(\cdot, w(0, t))\|_{\infty} &\leq& \alpha e^{\alpha}(1+\alpha) =: \bar{k}_{\nu} \,, \label{eq:k_nu_lyap}\\
 \|k_{x \nu}(\cdot, w(0, t))\|_{\infty} &\leq& \alpha e^\alpha (1+2\alpha) \nonumber \\ && + \alpha^2 e^{\alpha} (\alpha + 1) =: \bar{k}_{x \nu},
 \end{eqnarray}
 where 
 \begin{equation}
   \alpha := {\rm max}(B_\beta, B_{\beta_x}, B_{\beta_{\nu}}, B_{\beta_{x \nu}})\,.
  \label{eq:def_alpha}
 \end{equation}
With these inequalities we have that
 \begin{eqnarray}
   \|\Omega(\cdot, t)\|_{\infty} &\leq& \bar{\Omega} \int_{0}^{1}|w(x, t)|dx, \label{eq:omega_bound}  \\
   |\Omega_x(x, t)| &\leq& \bar{\Omega}_{x1}\int_{0}^{1}|w(y, t)|dy + \bar{\Omega}_{x2}|w(x, t)| \label{eq:omega_x_bound},
 \end{eqnarray}
 where 
 \begin{eqnarray}
   \bar{\Omega} &:=& (\bar{k}_{\nu} + \epsilon)(\bar{l} + 1)\,,  \\
   \bar{\Omega}_{x1} &:=& \bar{l}(\bar{k}_{\nu} + \epsilon) + (1 + \bar{l})(\bar{k}_{x \nu} + \epsilon)\,, \\
   \bar{\Omega}_{x2} &:=& \bar{k}_{\nu} + \epsilon\,.
 \end{eqnarray}
 We prove \eqref{eq:omega_bound} first. We assumed that $|w(0, t)| \leq B_{\nu}$, then using triangular inequality, \eqref{eq:k_nu_lyap}, \eqref{eq:delta2_lyap}, \eqref{eq:lhat_bound} we have that
 \begin{eqnarray}
    |\Omega(x, t)| &\leq& (\bar{k}_{\nu} + \epsilon) \bigg(\int_0^x |w(y, t)|dy \nonumber \\ && + \bar{l} \int_0^x \int_0^y |w(s, t)|ds dy\bigg)
 \end{eqnarray}
 and just upper-bound it with $x, y = 1$.
Turning to \eqref{eq:omega_x_bound}, we first take the derivative of \eqref{eq:omega_def} with respect to $x$, 
\begin{eqnarray}
  \Omega_x(x, t) &=& [k_{\nu}(0, w(0, t)) + \delta_2(0, w(0, t))](w(x, t) \nonumber \\ && + \int_{0}^{x} \hat{l}(x-s, w(0, t)) w(s, t) ds) \nonumber \\
                 &&+ \int_{0}^{x} (k_{x \nu}(x-y, w(0, t)) \nonumber \\ && + \delta_{2x}(x-y, w(0, t))) \times \bigg[w(y, t) \nonumber \\ && + \int_{0}^{y}\hat{l}(y-s, w(0, t))w(s, t)ds\bigg]dy\,.\label{eq:omega_x_def}
\end{eqnarray}
We arrive at \eqref{eq:omega_x_bound} assuming that $|w(0, t)| \leq B_{\nu}$ and using \eqref{eq:k_nu_lyap}, \eqref{eq:delta2_lyap}, \eqref{eq:lhat_bound}, \eqref{eq:knux_lyap} and \eqref{eq:delta2_x_lyap}. 
 All of other upper bounds are obtained using Theorem \ref{theorem:NO_k} and Lemmas \ref{lemma_k_knu} and \ref{lemma:inverse_kernel_bound}.

 \underline{Estimate of $V_1$:}
 Taking the derivative of \eqref{eq:V1_def}, leads to the following equalities:
 \begin{eqnarray}
   \dot{V}_1 &=& I_{11} + I_{12} + I_{13} + I_{14}, \label{eq:V1_dot} 
  \end{eqnarray}
where
 \begin{eqnarray}
   I_{11}(t) &=& -\frac{w^2(0, t)}{2} - \frac{c}{2} \int_{0}^{1}e^{cx}w^2(x, t)dx \label{eq:I_11}\,, \\
 I_{12}(t) &=& -w_x(0, t)\int_{0}^{1}e^{cx}w(x, t)\Omega(x, t)dx \label{eq:I_12}\,,\\
I_{13}(t) &=& w(0, t)\int_{0}^{1}e^{cx}w(x, t)\delta(x, t)dx \label{eq:I_13}\,, \\
   I_{14}(t) &=& w(0, t) \tilde{k}(0, w(0, t)) \nonumber \\&& \times \int_{0}^{1}e^{cx}w(x, t)\Omega(x, t)dx \label{eq:I_14}\,.
 \end{eqnarray}
We first work on \eqref{eq:I_12}, $I_{12}(t)$.
Using Young's inequality and \eqref{eq:omega_bound} we arrive at the following inequality
\begin{equation}
  I_{12}(t) \leq \frac{w_x(0,t)^2}{8} + 4 e^{c}\bar{\Omega}^2 V_1(t) \|w(t)\|^2
  \label{eq:I_12_bound} \,.
\end{equation}
Continuing with $I_{13}(t)$ in \eqref{eq:I_13}, using \eqref{eq:delta_bound}, Young's inequality and Cauchy-Scwarz inequalities we have the following upper bound
\begin{equation}
  I_{13}(t) \leq \frac{w(0, t)^2}{8} + 4 \bar{\delta}^2 V_1(t) \frac{(e^{c} - 1)\,. 
}{c}  \label{eq:I_13_bound}
\end{equation}
Moving to $I_{14}(t)$ in \eqref{eq:I_14}, Young's inequality and \eqref{eq:omega_bound} give the following inequality,
\begin{equation}
  I_{14}(t) \leq \frac{w(0, t)^2}{8} + 4 \epsilon^2 \bar{\Omega}^2 e^{c} \|w(t)\|^2 V_1(t)\,. 
  \label{eq:I_14_bound}
\end{equation}
 Gathering \eqref{eq:I_11}, \eqref{eq:I_12_bound}, \eqref{eq:I_13_bound}, \eqref{eq:I_14_bound} we arrive at the following inequality
\begin{eqnarray}
  \dot{V_1}(t) &\leq& \frac{w^2_x(0, t)}{8} - \frac{w^2(0, t)}{4} \nonumber \\
      &&-V_1(t)[c - 4\epsilon^2 \bar{\Omega}^2 e^{c} \|w(t)\|^2 \nonumber \\ && - 4\bar{\delta}^2 \frac{e^{c} - 1}{c} - 4 e^{c} \bar{\Omega}^2 \|w(t)\|^2] \label{eq:V_1_bound}\,.
\end{eqnarray}

\underline{Estimate of $V_2$:}
Note that differentiating \eqref{eq:w_def} with respect to  $x$ leads to the following PIDE system satisfied by $w_x$:
\begin{eqnarray}
  w_{xt}(x, t) &=& w_{xx}(x, t) - w_x(0, t)\Omega_x(x, t) \nonumber \\
               &&+ \bigg(\delta_x(x, w(0, t)) \nonumber \\ &&+ \tilde{k}(0, w(0, t) \Omega_x(x, t))\bigg)w(0, t)\,,\\
  w_x(1, t) &=& \Omega(1, t)w_x(0, t) - w(0, t)[\delta(1, w(0, t)) \nonumber  \\ && + \tilde{k}(0, w(0, t))\Omega(1, t)] \,.\label{eq:wx_1}
\end{eqnarray}
 Taking the derivative of \eqref{eq:V2_def} leads to the following:
 \begin{eqnarray}
   \dot{V}_2 &=& I_{21}+ I_{22} + I_{23} + I_{24} \label{eq:V2_dot}\,, \\
   I_{21}(t) &=& \frac{e^{c}}{2} w^2_x(1, t) - \frac{w^2_x(0, t)}{2} \nonumber \\ && - \frac{c}{2} \int_{0}^{1}e^{cx}w_x^2(x, t)dx \label{eq:I_21}\,, \\
   I_{22}(t) &=& -w_x(0, t)\int_{0}^{1}e^{cx}w_x(x, t)\Omega_x(x, t)dx \label{eq:I_22} \,,\\
   I_{23}(t) &=& w(0, t)\int_{0}^{1} e^{cx}w_x(w, t) \delta_x(x, w(0, t))dx \label{eq:I_23}\,, \\
   I_{24}(t) &=& w(0, t)\tilde{k}(0, w(0, t)) \nonumber \\ && \times \int_{0}^{1} e^{cx}w_x(w, t) \Omega_x(x, t)dx \label{eq:I_24}\,.
 \end{eqnarray}
 Starting with $I_{21}(t)$. We'll specifically bound the term $w_x^2(1, t)$. 
 But first let's notice that thanks to Hölder's inequality we have the upper bound
 \begin{equation}
   \int_{0}^{1} |w(x, t)|dx \leq \|w(t)\| \leq \sqrt{2V(t)} \,.
  \label{eq:w_l1_bound}
 \end{equation}
 Using \eqref{eq:wx_1}, \eqref{eq:omega_bound}, \eqref{eq:delta_bound}, \eqref{eq:tilde_k_bound} and \eqref{eq:w_l1_bound} we have the following inequality for $w_x^2(1, t)$
 \begin{eqnarray}
   w_x^2(1, t) &\leq& 2w_x^2(0, t) \bar{\Omega}^2 V(t) \nonumber \\ && + w^2(0, t) \epsilon^2 (1 + B_{\beta} + \sqrt{2V(t)} \bar{\Omega} )^2 + \epsilon^2 w^2(0, t) \nonumber \\
   &&+ 2V(t) w^2_x(0, t) \bar{\Omega}^2 (1 + B_{\beta} + \sqrt{2V(t)}\bar{\Omega})^2  \label{eq:wx_1_bound}\,.
 \end{eqnarray}
 Then using \eqref{eq:I_21}, \eqref{eq:wx_1_bound} gives
 \begin{eqnarray}
   I_{21}(t) &\leq& \frac{e^{c}}{2}[2w_x^2(0, t) \bar{\Omega}^2 V(t) \nonumber \\ && + w^2(0, t) \epsilon^2 (1 + B_{\beta} + \sqrt{2V(t)} \bar{\Omega} )^2 + \epsilon^2 w^2(0, t) \nonumber \\
          &&+ 2V(t) w^2_x(0, t) \bar{\Omega}^2 (1 + B_{\beta} + \sqrt{2V(t)}\bar{\Omega})^2] \nonumber \\ && -\frac{w^2_x(0, t)}{2} - c V_2(t) \label{eq:I_21_bound}\,.
 \end{eqnarray}
Next, we move on to $I_{22}(t)$ in \eqref{eq:I_22}. Using \eqref{eq:omega_x_bound} as well as Young's inequality and Cauchy-Schwarz inequality we have that 
 \begin{align}
   I_{22}(t) \leq \frac{w_x(0, t)^2}{4} + 4 e^{c}V_2(t)\|w(t)\|^2(\bar{\Omega}_{x1}^2 + \bar{\Omega}_{x2}^2)\,.
  \label{eq:I_22_bound}
 \end{align}
Advancing to $I_{23}(t)$ in \eqref{eq:I_23}, using Young's inequality, \eqref{eq:delta_x_bound} and Cauchy-Schwarz inequality we obtain
\begin{align}
  I_{23}(t) \leq \frac{w^2(0, t)}{8} + 4 V_2(t) \bar{\delta_x}^2 \frac{(e^{c} - 1)}{c} \,. \label{eq:I_23_bound}
 \end{align}
 Finally for $I_{24}(t)$ in \eqref{eq:I_24}, using Young's inequality, \eqref{eq:omega_x_bound}, \eqref{eq:tilde_k_bound} and the Cauchy-Schwarz inequality leads to 
 \begin{align}
   I_{24}(t) \leq \epsilon^2 \frac{w^2(0, t)}{4} + 4e^{c}V_2(t)\|w(t)\|^2(\bar{\Omega}_{x1}^2 + \bar{\Omega}_{x2}^2) \label{eq:I_24_bound} \,. 
 \end{align}
 Using the inequalities \eqref{eq:I_21_bound}, \eqref{eq:I_22_bound}, \eqref{eq:I_23_bound} and \eqref{eq:I_24_bound} gives 
 \begin{eqnarray}
   \dot{V_2}(t) &\leq& w^2(0, t)\bigg[\frac{\epsilon^2}{4} + \frac{1}{8} \nonumber \\ && + \epsilon^2 \frac{e^{c}}{2}(1 + (1+ B_{\beta} + \sqrt{2V(t)}\bar{\Omega})^2)\bigg] \nonumber \\
             &&- w_x^2(0, t)\bigg[\frac{1}{4} - e^{c}V(t) \nonumber \\ && \times \bar{\Omega}^2(1+(1+B_{\beta}+\sqrt{2V(t)}\bar{\Omega})^2))\bigg] \nonumber \\
             &&-V_2(t) \bigg[c - 8e^{c}(\bar{\Omega}_{x1}^2 + \bar{\Omega}_{x2}^2)\|w(t)\|^2 \nonumber \\ &&- 4 \bar{\delta}_x^2 \frac{(e^{c} - 1)}{c} \bigg] \label{eq:V_2_bound}\,. 
 \end{eqnarray} 
 In summary, gathering \eqref{eq:V_1_bound} and \eqref{eq:V_2_bound} we have 
 \begin{eqnarray}
   \dot{V}(t) &\leq& -w(0, t)^2\bigg[\frac{1}{8} - \frac{\epsilon^2}{4} \nonumber \\ && - \frac{\epsilon^2e^{c}}{2}(1 + (1 + B_{\beta} + \sqrt{2V(t)}\bar{\Omega})^2)\bigg] \label{eq:w0_bound}\nonumber  \\
              &&-w_x(0, t)^2\bigg[\frac{1}{8} - e^{c}V(t) \nonumber \\ &&\nonumber  \times  \bar{\Omega}^2(1 + (1+B_{\beta} +\sqrt{2V(t)}\bar{\Omega})^2)\bigg] \label{eq:wx0_bound} \\
              &&-V_1(t)\bigg[c -4\epsilon^2 \bar{\Omega}^2e^{c}\|w(t)\|^2 \nonumber \\ && - 4\bar{\delta}^2 \frac{e^{c}-1}{c}  - 4e^{c}\bar{\Omega}^2\|w(t)\|^2\bigg] \label{eq:V1_final_bound} \nonumber \\
              &&-V_2(t) \bigg[c -8e^{c}(\bar{\Omega}_{x1}^2 + \bar{\Omega}_{x2}^2)\|w(t)\|^2 \nonumber \\ && -4\bar{\delta}_x^2 \frac{e^{c} - 1}{c} \bigg] \label{eq:V2_final_bound}\,.
 \end{eqnarray}
We kept using $\bar{\Omega}, \bar{\Omega}_{x1}, \bar{\Omega}_{x2}$ without specifying that they are functions of $\epsilon$. In what follows we restrict, for simplicity, the choice of $\epsilon$ to  $\epsilon \in (0, 1]$ and use the notation $\bar{\Omega} := \bar{\Omega}(1), \bar{\Omega}_{x1} := \bar{\Omega}_{x1}(1), \bar{\Omega}_{x2} := \bar{\Omega}_{x2}(1)$. We emphasize that all the previous inequalities are  valid for all $\epsilon \leq 1$.
We introduce the $\mathcal{K}_{\infty}$ function 
\begin{equation}
  \beta_2(V) := e^{c} \bar{\Omega}^2V + e^{c} V \bar{\Omega}^2(1 + B_{\beta} + \sqrt{2V}\bar{\Omega})^2 \,, 
  \label{eq:beta_2_def}
\end{equation}
and the quantity 
$    R_1 := \beta_2^{-1}\left({1}/{8}\right) > 0$. 
 That way, if we assume that $V(t) \leq R_1$, then \eqref{eq:wx0_bound} is negative.
 Moving on to \eqref{eq:w0_bound}, we introduce 
 \begin{equation}
   \epsilon_1 := {\rm min}\left(1, \frac{1}{\sqrt{2 + 4e^{c}(1 + (1 + B_{\beta} + \sqrt{2R_1}\bar{\Omega})^2)}}\right)
   \label{eq:eps_1}\,. 
 \end{equation}
If we take $\epsilon \leq \epsilon_1$, and $V \leq R_1$ we  have that \eqref{eq:w0_bound} is negative. Advancing to \eqref{eq:V1_final_bound}, we introduce the positive constants
 \begin{align}
   R_2 :=& {\rm min}\left(R_1, \frac{ce^{-c}}{32\bar{\Omega}^2}\right) > 0 \,, \\
   \epsilon_2 :=& {\rm min}\bigg(\epsilon_1, 
   \sqrt{\frac{\frac{c}{2} - 8e^{c}\bar{\Omega}^2 R_2}{8\bar{\Omega}^2e^{c}R_2 + 4(1+B_{\beta})^2 \times \frac{e^{c}-1}{c}}}\bigg) 
   \,.\label{eq:eps_2}
 \end{align}
With these definitions, if we choose $\epsilon \leq \epsilon_2, V(0) \leq R_2$, we  have \eqref{eq:w0_bound}, \eqref{eq:wx0_bound} negative and \eqref{eq:V1_final_bound} is bounded by $-\frac{c}{2}V_1$. This is achieved by noticing that $\|w(t)\|^2 \leq 2V(t)$. Finally, moving on to \eqref{eq:V2_final_bound}, we introduce
 \begin{eqnarray}
   R_3 &:=& {\rm min}\left(R_2, \frac{ce^{-c}}{64(\bar{\Omega}_{x1}^2 + \bar{\Omega}_{x2}^2)}\right) > 0\,,   \\
   \epsilon^{*} &:=& {\rm min}\bigg(\epsilon_2, \frac{1}{2(1+B_{\beta}+B_{\beta_x})}\nonumber \\ && \times \sqrt{\frac{\frac{c^2}{2} - 16 c e^{c}(\bar{\Omega}_{x1}^2 + \bar{\Omega}_{x2}^2)R_3}{(e^{c}-1)}}\bigg) > 0 \label{eq:eps_star}\,, 
 \end{eqnarray} 
 with which we get that, if we choose $\epsilon \leq \epsilon^{*}, V(0) \leq R_3$, we  have that \eqref{eq:w0_bound}, \eqref{eq:wx0_bound} are negative, \eqref{eq:V1_final_bound} is bounded by $ -\frac{c}{2}V_1$, and \eqref{eq:V2_final_bound} is bounded by $ -\frac{c}{2}V_2$. 
To recapitulate our many majorizations, choosing $  \epsilon \leq \epsilon^*$ and 
\begin{eqnarray}
  V(0) &\leq& {\rm min}\left({R_3, \frac{B_\nu^2}{2}}\right) =: R_0 \label{eq:R0_def}\,. 
\end{eqnarray}
we arrive at the inequality
 \begin{eqnarray}
   \dot{V}(t) &\leq& -\frac{c}{2}V \label{eq:Vdot_bound}, \qquad t\geq 0\,.
 \end{eqnarray}
 Using the comparison lemma we complete the proof.
\end{proof}

\begin{lemma}{\em [Loc. e.s. of perturbed target sys.]}
\label{lemma:local_stability_target}
   For all $ c > 0$, there exist  functions $\epsilon^{*}(\breve B, c) = \mathcal{O}_{c \to \infty}(e^{-\frac{c}{2}}), \quad \Psi_0(B_{\nu}, \breve B, c) = \mathcal{O}_{c \to \infty}(ce^{-2c}) > 0$ with a decreasing dependence in the argument $\breve B$  such that, for any $\beta \in H$ and any NO satisfying Theorem \ref{theorem:NO_k} for that $\beta$ with any $\epsilon \in (0, \epsilon^*)$, where $\epsilon^*$ is defined in \eqref{eq:eps_star}, if the perturbed target system \eqref{eq:wt_def}, \eqref{eq:w1_def}  is initialized with $w_0 := w(\cdot, 0)$ such that $\Psi(w_0) \leq \Psi_0$, then
  \begin{equation}
    \Psi(w(t)) \leq  e^{c} \Psi(w_0) e^{-\frac{c}{2}t}\,, \qquad t \geq 0 \,.
  \end{equation}
  where
  \begin{equation}
  \Psi(w(t))  := \|w(t)\|^2 + \|w_x(t)\|^2\,. 
  \label{eq:def_psi}
\end{equation}
\end{lemma}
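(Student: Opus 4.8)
\emph{Proof proposal.} The plan is to obtain this lemma as an immediate corollary of the Lyapunov estimate in Lemma \ref{lemma:V_bound}, via the elementary equivalence between the weighted norm defining $V$ and the unweighted $H^1$ quantity $\Psi$. Since $x\in[0,1]$ we have $1\le e^{cx}\le e^{c}$, so, with $V_1,V_2$ as in \eqref{eq:V1_def}, \eqref{eq:V2_def} and $\Psi$ as in \eqref{eq:def_psi},
\begin{equation}
  \tfrac12\,\Psi(w(t)) \;\le\; V(t)=V_1(t)+V_2(t) \;\le\; \tfrac{e^{c}}{2}\,\Psi(w(t)), \qquad t\ge 0 .
\end{equation}

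Next I would fix the constants. Take $\epsilon^{*}(\breve B,c)$ to be exactly the function produced by Lemma \ref{lemma:V_bound} (so its order $\mathcal{O}_{c\to\infty}(e^{-c/2})$ and its decreasing dependence on $\breve B$ are inherited verbatim), and set $\Psi_0(B_\nu,\breve B,c):=2e^{-c}R_0(\breve B,c,B_\nu)$, where $R_0=\mathcal{O}_{c\to\infty}(ce^{-c})$ is the constant of Lemma \ref{lemma:V_bound}. Then $\Psi_0=\mathcal{O}_{c\to\infty}(ce^{-2c})$ and it retains the decreasing-in-$\breve B$ behaviour. Note also that $R_0\le B_\nu^2/2$ by \eqref{eq:R0_def}, which is precisely what guarantees, via Agmon's inequality \eqref{eq:u0_bound_lyap}, the invariant $|w(0,t)|\le B_\nu$ used throughout Lemma \ref{lemma:V_bound}.

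Finally, assume $\Psi(w_0)\le\Psi_0$ and $\epsilon\in(0,\epsilon^{*})$. The upper bound above gives $V(0)\le \tfrac{e^{c}}{2}\Psi(w_0)\le \tfrac{e^{c}}{2}\Psi_0=R_0$, so Lemma \ref{lemma:V_bound} applies and yields $V(t)\le V(0)e^{-ct/2}$ for all $t\ge0$. Then, using the lower bound and then the upper bound again,
\begin{equation}
  \Psi(w(t))\le 2V(t)\le 2V(0)e^{-\frac{c}{2}t}\le e^{c}\,\Psi(w_0)\,e^{-\frac{c}{2}t},\qquad t\ge0,
\end{equation}
which is the asserted estimate.

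The main obstacle is, in truth, not analytic: all the work lies in Lemma \ref{lemma:V_bound}. The points that need care are purely bookkeeping: (i) confirming that $\epsilon^{*}$ and $\Psi_0$ can be chosen so that the advertised asymptotic orders in $c$ and the monotonicity in $\breve B$ hold, which follows from the corresponding properties of $R_0$ and $\epsilon^{*}$ in Lemma \ref{lemma:V_bound} combined with the extra $e^{-c}$ factor coming from the norm equivalence; and (ii) ensuring the smallness threshold on $\Psi(w_0)$ is tight enough to activate the $|w(0,t)|\le B_\nu$ invariant on which Lemma \ref{lemma:V_bound} depends.
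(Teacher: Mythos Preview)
Your proposal is correct and follows essentially the same approach as the paper: both arguments derive the result from Lemma~\ref{lemma:V_bound} via the elementary norm equivalence $\Psi(w(t)) \le 2V(t) \le e^{c}\Psi(w(t))$, set $\Psi_0 := 2e^{-c}R_0$, and chain the inequalities to obtain the overshoot factor $e^{c}$. Your write-up is in fact more detailed than the paper's, explicitly verifying the asymptotic orders of $\Psi_0$ and $\epsilon^*$ and flagging the $|w(0,t)|\le B_\nu$ invariant, which the paper leaves implicit.
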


\begin{proof}
  Using the Lyapunov function defined in \eqref{eq:V_def} for $c > 0$, Lemma \ref{lemma:V_bound} ensures the existence of $R_0, \epsilon^{*} > 0$ such that if $V(0) \leq R_0, \ \epsilon \leq \epsilon^*$ 
  \begin{equation}
    V(t) \leq V(0) e^{-\frac{c}{2}t}\,. 
  \end{equation}
  Note that 
  \begin{equation}
    \Psi(w(t)) \leq 2V(t) \leq e^{c} \Psi(w(t))\,. 
    \label{eq:lyap_psi_equiv}
  \end{equation}
  Choosing 
  \begin{equation}
    \Psi_0 := 2 e^{-c} R_0\,, 
    \label{eq:Psi_0_def}
  \end{equation}
   and $\epsilon^*$ as in \eqref{eq:eps_star} completes the proof. 
\end{proof}

\section{Local Stability in Original Plant Variable}
\label{sec:normequiv}

\begin{lemma}\label{lemma:equiv_norm_target_base}
  {\em [Equiv. norm perturb. target system]}
  There exist  $\rho, \delta>0$ such that $\forall t \geq 0$, if $|u(0, t)| \leq B_{\nu}$ then,
  \begin{eqnarray}
    \Psi(w(t)) &\leq& \delta \,\Omega(u(t))\,,\label{eq:psi_bound} \\
    \Omega(u(t)) &\leq& \rho\, \Psi(w(t))\,.\label{eq:omega_norm_bound}
  \end{eqnarray}
\end{lemma}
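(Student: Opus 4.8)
The plan is to move back and forth between $u$ and $w$ using the direct backstepping transform \eqref{eq:w_def} and its inverse \eqref{eq:lhat_ppty}, both of which are bounded Volterra operators once we know that the kernels $\hat k,\hat k_x,\hat l,\hat l_x$ are uniformly bounded on $[0,1]\times[-B_\nu,B_\nu]$. Evaluating either transform at $x=0$ gives $u(0,t)=w(0,t)$, so the hypothesis $|u(0,t)|\le B_\nu$ also means $|w(0,t)|\le B_\nu$, i.e.\ the ``scheduling argument'' of every kernel stays in the compact set on which the kernel estimates of Lemma \ref{lemma_k_knu} together with Theorem \ref{theorem:NO_k} furnish a constant $\bar k$ with $\|\hat k(\cdot,\nu)\|_\infty,\|\hat k_x(\cdot,\nu)\|_\infty\le\bar k$ (taking $\epsilon\le1$), and Lemma \ref{lemma:inverse_kernel_bound} gives $\|\hat l(\cdot,\nu)\|_\infty\le\bar l$. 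Differentiating \eqref{eq:lhat_def} in $x$,
\[
 \hat l_x(x,\nu)=\hat k_x(x,\nu)+\hat k(0,\nu)\hat l(x,\nu)+\int_0^x\hat k_x(x-y,\nu)\hat l(y,\nu)\,dy,
\]
then yields a uniform bound $\|\hat l_x(\cdot,\nu)\|_\infty\le\bar l_x$ as well.

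For \eqref{eq:psi_bound} I would start from \eqref{eq:w_def} and its $x$-derivative
\[
 w_x(x,t)=u_x(x,t)-\hat k(0,u(0,t))u(x,t)-\int_0^x\hat k_x(x-y,u(0,t))u(y,t)\,dy,
\]
bound each convolution by Cauchy--Schwarz ($|\int_0^x\hat k(x-y,u(0,t))u(y,t)\,dy|\le\bar k\|u(t)\|$, and likewise with $\hat k_x$), use $(a+b)^2\le2a^2+2b^2$ and $(a+b+c)^2\le3(a^2+b^2+c^2)$, and integrate in $x$. This gives $\|w(t)\|^2\le2(1+\bar k^2)\|u(t)\|^2$ and $\|w_x(t)\|^2\le c_1(\bar k)\big(\|u(t)\|^2+\|u_x(t)\|^2\big)$, hence $\Psi(w(t))\le\delta\big(\|u(t)\|^2+\|u_x(t)\|^2\big)\le\delta\,\Omega(u(t))$ for a suitable $\delta=\delta(\bar k)$.

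For \eqref{eq:omega_norm_bound} I would argue symmetrically from the inverse transform \eqref{eq:lhat_ppty} and its $x$-derivative
\[
 u_x(x,t)=w_x(x,t)+\hat l(0,w(0,t))w(x,t)+\int_0^x\hat l_x(x-y,w(0,t))w(y,t)\,dy,
\]
obtaining $\|u(t)\|^2\le2(1+\bar l^2)\|w(t)\|^2$ and, using $\hat l(0,\cdot)=\hat k(0,\cdot)$, $\|u_x(t)\|^2\le3\|w_x(t)\|^2+3(\bar k^2+\bar l_x^2)\|w(t)\|^2$, by the same Cauchy--Schwarz/Young steps. The only additional ingredient is the boundary term: evaluating \eqref{eq:lhat_ppty} at $x=0$ gives $u(0,t)=w(0,t)$, and Agmon's inequality gives $u^2(0,t)=w^2(0,t)\le2\|w(t)\|\,\|w_x(t)\|\le\|w(t)\|^2+\|w_x(t)\|^2=\Psi(w(t))$. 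Summing the three estimates gives $\Omega(u(t))\le\rho\,\Psi(w(t))$ with $\rho=\rho(\bar k,\bar l,\bar l_x)$.

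The computations are routine Cauchy--Schwarz/Young bookkeeping; the only point that needs care --- and the reason $|u(0,t)|\le B_\nu$ is assumed --- is that every kernel must be evaluated inside the compact $\nu$-domain $[-B_\nu,B_\nu]$ on which the uniform bounds $\bar k,\bar l,\bar l_x$ (from Lemmas \ref{lemma_k_knu}, \ref{lemma:inverse_kernel_bound} and Theorem \ref{theorem:NO_k}) hold. Once that is secured, $\delta$ and $\rho$ depend only on $\breve B$ and $B_\nu$ through $\bar k,\bar l,\bar l_x$, and neither on $t$ nor on the particular neural operator.
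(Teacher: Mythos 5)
Your proposal is correct and follows essentially the same route as the paper: uniform kernel bounds $\bar k,\bar k_x,\bar l,\bar l_x$ on $[0,1]\times[-B_\nu,B_\nu]$ (with $\epsilon\le 1$ to freeze the $\epsilon$-dependence), Cauchy--Schwarz/Young estimates on the direct transform and its $x$-derivative for \eqref{eq:psi_bound}, the symmetric argument on the inverse transform \eqref{eq:lhat_ppty} for \eqref{eq:omega_norm_bound}, and Agmon's inequality (using $w(1,t)=0$) to absorb the $u^2(0,t)=w^2(0,t)$ term. No gaps; only the explicit numerical constants differ.
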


\begin{proof}
   We  use Lemma \ref{lemma_k_knu}, \eqref{eq:k_bound}. We thus have that
  \begin{eqnarray}
    |\hat{k}(x, \nu)| &\leq& \epsilon + B_{\beta}e^{B_{\beta}}  =: \bar{k} \label{eq:khat_bound}\,,  \\
    |\hat{k}_x(x, \nu)| &\leq& \epsilon + (B_{\beta_x} + B_{\beta}^2e^{B_{\beta}}\nonumber  \\ && + B_{\beta_x}B_{\beta}e^{B_{\beta}}) =: \bar{k}_x \label{eq:khat_x_bound}\,, 
\end{eqnarray}
  $\forall (x, \nu) \in [0, 1] \times [-B_{\nu}, B_{\nu}]$, where $\hat{k} := \hat{\mathcal{K}}(\beta)$. 
  With the assumption that 
 $  |u(0, t)|  \leq  B_{\nu} $, 
   we apply the  inequalities \eqref{eq:khat_bound}, \eqref{eq:khat_x_bound} for $\hat{k}(x, u(0, t)), \hat{k}_x(x, u(0, t))$.
  Using \eqref{eq:w_def}, \eqref{eq:khat_bound}, Young's, and the Cauchy-Schwarz inequalities, we have that
  \begin{equation}
    \|w(t)\|^2 \leq 2\|u(t)\|^2 + 2\bar{k}^2 \|u(t)\|^2 \label{eq:w_norm_bound}\,. 
  \end{equation}
  Using \eqref{eq:w_def}, \eqref{eq:khat_x_bound}, \eqref{eq:khat_bound}, the Cauchy-Schwarz, and Young's inequality we have that
  \begin{equation}
    \|w_x(t)\|^2 \leq 4 \|u_x(t)\|^2 + 4 \bar{k}_x^2 \|u(t)\|^2 + 2 \bar{k}^2 \|u(t)\|^2 \,. \label{eq:wx_norm_bound}
  \end{equation}
  Gathering \eqref{eq:w_norm_bound} and \eqref{eq:wx_norm_bound} we have that
\begin{equation}
  \Psi(w(t)) \leq (6 + 4 \bar{k}^2 + 4 \bar{k}_x^2) \Omega(u(t)) =: \delta\,\Omega(u(t))\,.
\end{equation}
Since $u(0, t) = w(0, t)$, we still have $|w(0, t)| \leq B_{\nu}$.
Before constructing $\rho$, let's notice that taking the derivative of \eqref{eq:lhat_def} with respect to $x$ gives
\begin{align}
  \hat{l}_x(x, w(0, t)) =& \hat{k}_x(x, w(0, t)) +  \hat{k}(0, w(0, t)) \hat{l}(x, w(0, t)) \nonumber \\ & + \int_0^x\hat{k}_x(x-y, w(0, t)) 
  \hat{l}(y, w(0, t))dy\,. 
\end{align}
Then using \eqref{eq:khat_bound}, \eqref{eq:khat_x_bound}, \eqref{eq:lhat_bound} we have that
\begin{equation}
  |\hat{l}_x(x, w(0, t))| \leq \bar{k}_x + \bar{k} \bar{l} + \bar{k}_x \bar{l} =: \bar{l}_x\,. 
  \label{eq:lhat_x_bound}
\end{equation}
Let's also notice that 
\begin{equation}
  u^2(0, t) = w^2(0, t) \leq \Psi(w(t))\,. 
  \label{eq:u2_0_bound}
\end{equation}
Using \eqref{eq:u2_0_bound}, \eqref{eq:lhat_bound}, \eqref{eq:lhat_x_bound}, and \eqref{eq:lhat_ppty} it is shown the same way as before that
\begin{equation}
  \Omega(u(t)) \leq (7 + 4 \bar{l}^2 + 4\bar{l}_x^2)\Psi(w(t)) =: \rho \, \Psi(w(t))\,. 
  \label{eq:rho_def}
\end{equation}
The constants $\rho, \delta$  increase in $\epsilon$. Recalling \eqref{eq:eps_1},  $\epsilon \leq \epsilon^* \leq 1$. To remove the dependency  of $\delta, \rho$ on $\epsilon$, we   take their values at $\epsilon=1$ 
and all the previous inequalities remain  valid.
\end{proof}

With the lemmas, we complete the proof of Theorem \ref{theorem:loc_stability_base}.

\begin{proof}[of Theorem \ref{theorem:loc_stability_base}]
  From Lemma \ref{lemma:local_stability_target} there exist $\epsilon^*(\breve B, c), \Psi_0(\breve B, B_{\nu}, c) > 0$, such that if $\Psi(w_0) \leq \Psi_0$, $\epsilon \in (0, \epsilon^*)$ then
  \begin{equation}
    \Psi(w(t)) \leq e^{c} \Psi(w_0) e^{-\frac{c}{2} t}, \qquad t \geq 0\,. 
  \end{equation}
  Recalling \eqref{eq:u0_bound_lyap}, \eqref{eq:R0_def}, \eqref{eq:lyap_psi_equiv} and \eqref{eq:Psi_0_def}
  If $\Psi(w_0) \leq \Psi_0$ then we also have that 
  \begin{equation}
    u^2(0, t) \leq B_{\nu}^2, \qquad \forall t \geq 0\,.  \label{eq:condition}
  \end{equation}
  Then we introduce the quantity $\Omega_0 := {\rm min}(\frac{\Psi_0}{\delta}, B_{\nu}^2)$, where $\delta$ is defined in Lemma \ref{lemma:equiv_norm_target_base}. We choose $\Omega(u_0)$ such that $\Omega(u_0) \leq \Omega_0$.
  Since $u(0, 0)^2 \leq \Omega_0 \leq B_{\nu}^2$, we use  inequality \eqref{eq:psi_bound}, which ensures $\Psi(w_0) \leq \delta \Omega(u_0) \leq \Psi_0$. Since \eqref{eq:condition} is now valid, we use \eqref{eq:omega_norm_bound} which ensures that
  \begin{eqnarray}
    \Omega(u(t)) &\leq& \rho \Psi(w(t)) \leq \rho e^c\Psi(w_0)e^{-c^* t} \nonumber \\
                 &\leq& \rho e^c \delta\Omega(u_0) e^{-c^* t}\,,
  \end{eqnarray}
  with $\rho, \delta$ defined in Lemma \ref{lemma:equiv_norm_target_base}, completing the proof of 
  local exponential stability in $H^1$ asserted in Theorem \ref{theorem:loc_stability_base}.
\end{proof}

\section{A Gain-Only Approach to Approximate Gain Scheduling}\label{sec:gain-only}

In this section we present an alternative approach to the approximation of the kernel and the stability analysis. Instead of approximating the four functions $k, k_{\nu}, k_x, k_{x \nu}$, as per Definition \ref{def-operatorK}  and Lemma \ref{lemma:M_Lipschitzness}, we approximate only the functions $k, k_{\nu}$. This relaxes the DeepONet training requirements. 

This training reduction comes at a price in  analysis. Instead of the approximate  transformation \eqref{eq:w_def}, which employs the kernel $\hat k$, we use the {\em exact}
 transformation \eqref{eq:w_true_k_1} with kernel $k$, which, along with the same feedback law $U$ defined in \eqref{eq:U_def}, maps the system \eqref{eq:ut_def}, \eqref{eq:u1_def} into the nonlinear PIDE
  \begin{eqnarray}
    w_t(x, t) &=& w_x(x, t) -w_x(0, t) \Omega(x, t) ,\label{eq:wt_pbc_def}\\
   w(1, t) &=& \Gamma(t), \label{eq:w_perturbed_bc}
 \end{eqnarray}
 where 
 \begin{eqnarray}
   \Gamma(t) &=& -\int_{0}^{1} \tilde{k}(1-y, w(0, t)) \bigg[w(y, t) \\ && \quad + \int_{0}^{y}l(y-s, w(0, t))w(s, t)ds \bigg]dy, \label{eq:gamma_def}
 \end{eqnarray}
 and for the reader's convenience we recall \eqref{eq:omega_def_exact_k} and \eqref{eq:l_def_exact_k}
 \begin{eqnarray}
 \label{eq:omega_def_exact_k2}
   \Omega(x, t) &=& \int_{0}^{x}
   k_{\nu}(x-y, w(0, t))
   \bigg[w(y, t) \nonumber \\ && + \int_{0}^{y} l(y-s, w(0, t))w(s, t)ds\bigg]dy,
   \\
l(x, \nu) &=& k(x, w(0, t)) \nonumber \\ && + \int_{0}^{x} k (x-y, w(0, t))l(y, w(0, t))dy 
\,. 
 \end{eqnarray}
A close comparison of \eqref{eq:wt_pbc_def} with \eqref{eq:wt_def}, as well as \eqref{eq:omega_def_exact_k2} with \eqref{eq:omega_def}, reveals that the pertubation of the target system in the PDE domain no longer contains a perturbation based on the kernel approximation but retains a perturbation due to the GS-induced nonlinearity, \eqref{eq:wt_pbc_def}. Additionally, by comparing \eqref{eq:w_perturbed_bc} with \eqref{eq:w1_def}, we see that the boundary condition is now perturbed, by $\Gamma$ defined in \eqref{eq:gamma_def}. The consequence of the approximation-based perturbation moving from the PDE domain into the boundary is that the difficulty of the Lypunov analysis increases somewhat. 
A change of the norm is needed because of the perturbation in the boundary condition \eqref{eq:w_perturbed_bc}. We cannot ensure through Agmon's inequality alone that $w(0, t)$ remains bounded since $w(1, t) \neq 0$ which was a simplifying aspect of the  proof of Lemma \ref{lemma:V_bound}. Instead of the norm $\|w(t)\|^2 + \|w_x(t)\|^2 $, $w(t) \in \mathcal{C}^1([0, 1])$, we work with 
 \begin{equation}
 \Psi(w(t)) = w^2(0, t) + \|w(t)\|^2 + \|w_x(t)\|^2 \,. 
  \label{eq:new_psi_def}
 \end{equation}

\subsection{Relaxed kernel approximation}

We  redefine  the set $H$ and the NO used to approximate $k$. We relax $H$  to the subset of $\mathcal{C}^1([0, 1] \times [-B_{\nu}, B_{\nu}])$ such that for all $\beta \in H$,
 \begin{itemize}
    \item $\beta, \beta_x, \beta_{\nu}, \beta_{x \nu}$ exist and $\beta, \beta_x$ are Lipschitz with the same Lipschitz constant,
   \item $\|\beta\|_{\infty} < B_{\beta}, \|\beta_{\nu}\|_{\infty} < B_{\beta_{\nu}}, \|\beta_{x}\|_{\infty} < B_{\beta_{x}}, \|\beta_ {x \nu}\|_{\infty} < B_{\beta_{x \nu}}$.
 \end{itemize}
 Note that with the new $H$, we aren't required to have $\beta_x, \beta_{x \nu}$ Lipschitz. The reason behind this choice is that we don't approximate $k_x, k_{x \nu}$  but only $k, k_{\nu}$. We don't require $\beta_x, \beta_{x \nu}$ uniformly bounded for the approximation but so that feedback  will be available to all functions  $\beta \in H$.
 With this new set $H$ we require a weaker version of Theorem \ref{theorem:NO_k}. 

\begin{theorem}\label{theorem:NO_k_weak}
  {\em [Neural operator to only approximate $k, k_{\nu}$.]}
Given the operator $\mathcal{K}$ 
  \begin{equation}
  \mathcal{K}:  \beta \to k(\beta)\,,
  \end{equation}
  for all $\beta \in H$ and $\epsilon > 0$, there exists a neural operator $\hat{\mathcal{K}}$ such that, $\forall (x, \nu) \in [0, 1] \times [-B_{\nu}, B_{\nu}]$,
  \begin{eqnarray}
  |\mathcal{K}(\beta)(x, \nu) - \hat{\mathcal{K}}(\beta)(x, \nu)| 
  +|\frac{\partial}{\partial \nu}(\mathcal{K} - \hat{\mathcal{K}})(x, \nu)| < \epsilon\,.
  \end{eqnarray}
\end{theorem}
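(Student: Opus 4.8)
The plan is to reproduce, with a reduced output, the three-step argument behind Theorem~\ref{theorem:NO_k}: exhibit a continuous (in fact Lipschitz) operator, check that its domain is compact, and apply the DeepONet universal approximation theorem (Theorem~\ref{theorem:DeepOnet}). First I would introduce the reduced operator
\begin{equation}
  \mathcal{M}_{\mathrm{red}}(\beta) := \bigl(\mathcal{K}(\beta),\,\mathcal{K}_1(\beta)\bigr), \qquad \mathcal{K}_1 := \partial_\nu\mathcal{K},
\end{equation}
regarded as a map from the (relaxed) set $H$ into $\mathcal{C}^1([0,1]\times[-B_\nu,B_\nu],\mathbb{R})\times\mathcal{C}^0([0,1]\times[-B_\nu,B_\nu],\mathbb{R})$. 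Its well-definedness is exactly Lemma~\ref{lemma_k_knu}: for $\beta\in H$, $k=\mathcal{K}(\beta)$ is $\mathcal{C}^1$ and $k_\nu$ obeys \eqref{eq:knu_bound_1}, and --- this is the point --- the existence of $k$, $k_\nu$ and those bounds use only that $\beta,\beta_\nu\in\mathcal{C}^0$ with the sup bounds $B_\beta,B_{\beta_\nu}$; the derivatives $\beta_x,\beta_{x\nu}$ are never used beyond their bare existence.

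Next I would verify that $\mathcal{M}_{\mathrm{red}}$ is Lipschitz by extracting the first two blocks of the proof of Lemma~\ref{lemma:M_Lipschitzness}. The bound \eqref{eq:K_Lipschitz} for $\mathcal{K}$ depends on $\|\delta\beta\|_\infty$ alone, and the successive-approximation argument producing \eqref{eq:delta_k_nu_bound} for $\mathcal{K}_1$ depends only on $\|\delta\beta\|_\infty+\|\delta\beta_\nu\|_\infty$ together with the a priori bounds of Lemma~\ref{lemma_k_knu}; in particular the constants $C$ and $A$ there depend only on $B_\beta,B_{\beta_\nu}$, and no use is made of the $\mathcal{K}_2$ or $\mathcal{K}_3$ chains. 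Hence $\mathcal{M}_{\mathrm{red}}$ is Lipschitz, therefore continuous, with respect to the norm $\|\beta\|_\infty+\|\beta_\nu\|_\infty$. For compactness of the domain I would invoke Arzelà-Ascoli as in Section~\ref{sec:approximatek}: the family $\{(\beta,\beta_\nu):\beta\in H\}$ is uniformly bounded and uniformly equicontinuous, since the defining Lipschitz properties of the relaxed $H$ apply to precisely the functions $\beta,\beta_\nu$ that enter $\mathcal{K}$ and $\mathcal{K}_1$ (equicontinuity of $\beta_x,\beta_{x\nu}$ is no longer required, which is the sole relaxation), so it is precompact, and closedness of $H$ gives compactness.

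Finally, with $\mathcal{M}_{\mathrm{red}}$ continuous on the compact set $H$, Theorem~\ref{theorem:DeepOnet} furnishes, for any $\epsilon'>0$, neural operators $\hat h,\hat g$ with $\|\mathcal{K}(\beta)-\hat h(\beta)\|_\infty<\epsilon'$ and $\|\mathcal{K}_1(\beta)-\hat g(\beta)\|_\infty<\epsilon'$ uniformly over $\beta\in H$. To obtain a single operator satisfying both inequalities of the statement, I would set $\hat{\mathcal{K}}(\beta)(x,\nu):=\hat h(\beta)(x,0)+\int_0^\nu\hat g(\beta)(x,s)\,ds$, so that $\partial_\nu\hat{\mathcal{K}}(\beta)=\hat g(\beta)$ is $\epsilon'$-close to $k_\nu$ while $|\hat{\mathcal{K}}(\beta)-k|\le(1+B_\nu)\epsilon'$, and then take $\epsilon'=\epsilon/(1+B_\nu)$; alternatively one may simply carry the pair $(\hat h,\hat g)$ through the gain-only stability analysis, as is done in effect for the full-kernel case. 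The work here is organizational rather than deep: the only thing to watch is that every constant invoked along the way --- the Lipschitz constants $C,A$, the a priori kernel bounds of Lemma~\ref{lemma_k_knu}, and the equicontinuity modulus used in Arzelà-Ascoli --- genuinely involves only $\beta$ and $\beta_\nu$, so that the weakened regularity built into the relaxed $H$ is truly sufficient.
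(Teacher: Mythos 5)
Your proposal matches the paper's (implicit) argument: the paper states this theorem without a separate proof, treating it as the corollary of restricting Lemma~\ref{lemma:M_Lipschitzness} to its $\mathcal{K}$ and $\mathcal{K}_1$ blocks, checking compactness of the relaxed $H$ by Arzel\`a--Ascoli, and invoking Theorem~\ref{theorem:DeepOnet} --- exactly your three steps, and your observation that the constants $C$ and $A$ depend only on $B_\beta, B_{\beta_\nu}$ is the correct justification for why the relaxation is harmless. You go beyond the paper in one respect: the universal approximation theorem approximates the \emph{values} of the pair $(\mathcal{K}(\beta),\partial_\nu\mathcal{K}(\beta))$, not the derivative of the approximant, and your construction $\hat{\mathcal{K}}(\beta)(x,\nu)=\hat h(\beta)(x,0)+\int_0^\nu \hat g(\beta)(x,s)\,ds$ closes that gap, which the paper (in both Theorems~\ref{theorem:NO_k} and \ref{theorem:NO_k_weak}) leaves implicit; the only slip is the constant, since the sum of the two error terms is then bounded by $(2+B_\nu)\epsilon'$, so you should take $\epsilon'=\epsilon/(2+B_\nu)$. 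Note also that the paper's relaxed $H$ literally requires $\beta,\beta_x$ Lipschitz, whereas your compactness argument (and the hypothesis of Theorem~\ref{theorem:loc_stability_base_exact_k}) needs $\beta,\beta_\nu$ Lipschitz so that the pair $(\beta,\beta_\nu)$ is equicontinuous; your reading is the consistent one.
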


Having introduced the relaxed operator, we  proceed to achieving a stabilization result analogous to Theorem \ref{theorem:loc_stability_base}.

 \begin{theorem}\label{theorem:loc_stability_base_exact_k}
  {\em [Loc. stabilization by gain scheduling.]}
  Let $K, B_{\nu}$ and the elements of the vector 
  \begin{equation}
    \breve B = (B_{\beta}, B_{\beta_{\nu x}}, B_{\beta_x}, B_{\beta_{x \nu}}) \,, 
  \end{equation}
  be positive and arbitrarily large. Then for all $c > 0$ there exist positive constants $\Omega_0(c, \breve B, B_{\nu}) = \mathcal{O}_{c \to \infty}(e^{-2c})$, $\epsilon^*(c, \breve B, \Omega_0)=\mathcal{O}_{c \to \infty}(e^{-\frac{c}{2}}), f(\breve B), M(c, \breve B) = f(\breve B)ce^c$ such that for any $\beta \in \mathcal{C}^1([0, 1] \times \mathbb{R})$ with the properties that $\beta_{x \nu}$ is at least defined on $[0, 1] \times [-B_{\nu}, B_{\nu}]$,
  \begin{eqnarray}
    |\beta(x, \nu)| &\leq& B_{\beta}, \\  |\beta_x(x, \nu)| &\leq& B_{\beta_x}, \\ |\beta_{\nu}(x, \nu)| &\leq& B_{\beta_{\nu}}, \\|\beta_{x \nu}(x, \nu)| &\leq& B_{\beta_{x \nu}} , \\
    \forall (x, \nu) &\in& [0, 1] \times [-B_{\nu}, B_{\nu}]\,, \nonumber
  \end{eqnarray}
 and that $\beta, \beta_{\nu}$ are K-Lipschitz,  
 any
 feedback law 
  \begin{equation}
    U(t) = \int_{0}^{1} \hat{k}(1-y, u(0, t))u(y, t)dy,
  \end{equation}
  with $\hat{k}$ being an approximated backstepping kernel provided by Theorem \ref{theorem:NO_k_weak} for any  accuracy $\epsilon \in (0, \epsilon^*)$, guarantees that, 
  if the initial condition $u_0 := u(\cdot, 0)$ 
  of  system \eqref{eq:ut_def}, \eqref{eq:u1_def} 
  satisfies 
  \begin{equation}
    \Omega(u_0) \leq \Omega_0\,,
  \end{equation}
  then
  \begin{equation}
    \Omega(u(t)) \leq M \Omega(u_0) e^{-\frac{c}{2}t}\,,\quad \forall t\geq 0\,,
  \end{equation}
where 
\begin{equation} 
  \Omega(u(t)) := u^2(0, t) + \|u(t)\|^2 + \|u_x(t)\|^2
  \,.
\end{equation}
\end{theorem}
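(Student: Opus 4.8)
The plan is to mirror the structure of the proof of Theorem~\ref{theorem:loc_stability_base}, but adapted to the ``gain-only'' target system \eqref{eq:wt_pbc_def}--\eqref{eq:gamma_def} with the perturbation relocated to the boundary condition. First I would establish, as in Lemma~\ref{lemma_k_knu} and Lemma~\ref{lemma:inverse_kernel_bound}, the relevant uniform bounds on the \emph{exact} kernel $k$, its derivative $k_\nu$, the inverse kernel $l$ from \eqref{eq:l_def_exact_k}, and the approximation errors $\tilde k = k-\hat k$ and $\tilde k_\nu$, all valid on $[0,1]\times[-B_\nu,B_\nu]$ and all parametrized by $\epsilon$ via Theorem~\ref{theorem:NO_k_weak}. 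From these I would derive, exactly as in \eqref{eq:omega_bound}--\eqref{eq:omega_x_bound}, the pointwise bounds $\|\Omega(\cdot,t)\|_\infty \leq \bar\Omega \int_0^1|w|\,dx$ and $|\Omega_x(x,t)| \leq \bar\Omega_{x1}\int_0^1|w|\,dy + \bar\Omega_{x2}|w(x,t)|$, and additionally a bound on $\Gamma(t)$ of the form $|\Gamma(t)| \leq \epsilon\,C\,\int_0^1|w|\,dx \leq \epsilon\,C\,\sqrt{2\Psi(w)}$ using \eqref{eq:gamma_def}, $\|\tilde k\|_\infty\leq\epsilon$, and the bound on $l$. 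The key structural difference from the full-kernel case is that here $\Omega$ and $\Omega_x$ carry no $\epsilon$-perturbation in the PDE interior (only $k_\nu$, $k_{x\nu}$ appear), so those terms are ``small'' only in the sense of being quadratic in $\|w\|$, whereas the $\epsilon$-smallness now lives entirely in the boundary term $\Gamma$.

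Next I would choose the Lyapunov functional. Because $w(1,t)=\Gamma(t)\neq 0$, Agmon's inequality alone no longer controls $w(0,t)$, so I would follow the hint in the text and work with $\Psi(w(t)) = w^2(0,t)+\|w(t)\|^2+\|w_x(t)\|^2$ as in \eqref{eq:new_psi_def}. Concretely I would take a weighted functional $V = \frac{a}{2}w^2(0,t) + \frac12\int_0^1 e^{cx}w^2\,dx + \frac12\int_0^1 e^{cx}w_x^2\,dx$ for a suitable constant $a>0$; the extra $w^2(0,t)$ term is needed so that $\dot V$ picks up a genuinely negative $-\gamma w^2(0,t)$ contribution controlling the boundary. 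Differentiating, the $\int e^{cx}w w_t$ and $\int e^{cx}w_x w_{xt}$ pieces are handled by the same integration-by-parts and Young/Cauchy--Schwarz estimates as in Lemma~\ref{lemma:V_bound}, producing $-\frac{c}{2}\int e^{cx}w^2$ and $-\frac{c}{2}\int e^{cx}w_x^2$ plus cubic-in-$w$ remainders that are absorbed once $V(0)$ is small, exactly as in \eqref{eq:V_1_bound}, \eqref{eq:V_2_bound}. The new work is: (i) the boundary term $\frac{e^c}{2}w_x^2(1,t)$ from integrating $\int e^{cx}w_x w_{xx}$ must now be bounded using the \emph{perturbed} expression for $w_x(1,t)$ obtained by differentiating \eqref{eq:w_perturbed_bc}, i.e.\ $w_x(1,t) = \Omega(1,t)w_x(0,t) + \Gamma'(t)$-type terms, which is analogous to \eqref{eq:wx_1} and yields a bound $\leq$ (small)$\cdot V \cdot w_x^2(0,t) + \epsilon^2 C w^2(0,t)$; and (ii) the derivative of the $\frac{a}{2}w^2(0,t)$ term, $a\,w(0,t)w_t(0,t) = a\,w(0,t)(w_x(0,t) - w_x(0,t)\Omega(0,t) + \dots)$, must be estimated and the resulting $w(0,t)w_x(0,t)$ cross-term split by Young's inequality so that, after combining, the $w_x^2(0,t)$ coefficient stays negative (using the $-\tfrac12 w_x^2(0,t)$ available from $I_{21}$) and a net $-\gamma w^2(0,t)$ survives. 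This requires $a$ and $\epsilon$ to be chosen in the right order, as in \eqref{eq:eps_1}--\eqref{eq:eps_star}.

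Assembling these, I would obtain $\dot V \leq -\frac{c}{2}V$ provided $\epsilon\leq\epsilon^*(\breve B,c) = \mathcal{O}_{c\to\infty}(e^{-c/2})$ and $V(0)\leq R_0 = \mathcal{O}_{c\to\infty}(ce^{-2c})$, where $R_0$ also includes the constraint $V(0)\leq \tfrac{1}{2}a B_\nu^2$-type bound ensuring $w^2(0,t)\leq B_\nu^2$ (now directly from the $w^2(0,t)$ term in $V$ rather than via Agmon, which is cleaner). The comparison lemma then gives $\Psi(w(t)) \leq C_1 V(t) \leq C_1 V(0)e^{-ct/2} \leq C_2 \Psi(w(0)) e^{-ct/2}$ by norm equivalence of $V$ and $\Psi$. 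Finally I would prove the analogue of Lemma~\ref{lemma:equiv_norm_target_base}: since the transformation is now the \emph{exact} backstepping transform \eqref{eq:w_true_k_1} with kernel $k$ (and inverse with kernel $l$), and both $k, l$ obey the uniform bounds from Lemma~\ref{lemma_k_knu}, one gets $\Psi(w(t)) \leq \delta\,\Omega(u(t))$ and $\Omega(u(t))\leq\rho\,\Psi(w(t))$ with constants now independent of $\epsilon$, using $u(0,t)=w(0,t)$ and $|u(0,t)|\leq B_\nu$ maintained inductively from $w^2(0,t)\leq B_\nu^2$. Chaining $\Omega(u(t))\leq\rho\Psi(w(t))\leq \rho C_2 \Psi(w_0) e^{-ct/2}\leq \rho C_2 \delta\,\Omega(u_0)e^{-ct/2}$ and setting $\Omega_0 := \min(\Psi_0/\delta, B_\nu^2)$ completes the proof, with $M = f(\breve B)ce^c$ absorbing the $C_1,C_2,\rho,\delta$ and the extra factor $c$ coming from the $w^2(0,t)$-weighted piece of $V$.

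The main obstacle I anticipate is the interplay between the boundary term $w_x^2(1,t)$ and the two negative quadratic terms $-\tfrac12 w_x^2(0,t)$ and $-\gamma w^2(0,t)$: because $w_x(1,t)$ depends on $w_x(0,t)$ through $\Omega(1,t)w_x(0,t)$ with $\Omega$ not small in $\epsilon$ (only small when $V$ is small), one must ensure the smallness region $R_0$ is chosen so that $e^c\bar\Omega^2 V(0) < \tfrac14$ uniformly — essentially the same $\beta_2^{-1}(1/8)$ mechanism as \eqref{eq:beta_2_def}, but now it must \emph{also} leave room for the $w(0,t)w_x(0,t)$ cross-term generated by the new $w^2(0,t)$ term in $V$. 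Getting the constant $a$, the radius $R_0$, and $\epsilon^*$ chosen in a consistent order — so that every bracketed coefficient in the final $\dot V$ estimate is simultaneously nonnegative with the claimed asymptotics $\Omega_0 = \mathcal{O}(e^{-2c})$, $\epsilon^* = \mathcal{O}(e^{-c/2})$ — is the delicate bookkeeping that constitutes the real content of the proof.
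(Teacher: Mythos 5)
Your proposal follows essentially the same route as the paper: the same perturbed-boundary target system with the bounds on $k$, $k_\nu$, $l$, $l_\nu$, $\Omega$, $\Omega_x$, $\Gamma$, $\Gamma_t$, an augmented Lyapunov functional containing a $w^2(0,t)$ term to replace the Agmon argument, the same ordered selection of radii and $\epsilon^*$, and the same exact-kernel norm equivalence and chaining to get $\Omega_0=\min(\Psi_0/\delta,B_\nu^2)$ and $M=f(\breve B)ce^c$. The only cosmetic difference is that the paper weights the integral terms by $c$ (taking $V_3=\tfrac18 w^2(0,t)$) rather than weighting the boundary term by a small $a$; your $a$ would have to scale like $1/c$, which is exactly the mechanism the paper uses to get $\dot V\leq -\tfrac{c}{2}V$.
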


Notice that we now have a quantitatively slightly weaker result than the one stated in Lemma \ref{theorem:loc_stability_base}: an overshooting coefficient proportional to $ce^c$ instead of $e^c$, and, additionally, the restriction on $\Psi_0$ has changed from $\mathcal{O}_{c \to \infty}(ce^{-2c})$ to the more conservative $\mathcal{O}_{c \to \infty}(e^{-2c})$. This slight weakening of the result follows from not having to approximate $k_{x \nu}, k_{x}$. 

This theorem is proven in Section \ref{subsection:lyap_pbc_analysis}, in which the stability is studied for the perturbed target system, and section \ref{subsection:equiv_perturbed_base} where  the norm equivlence is established between the original and target system states.

\subsection{Lyapunov analysis of target system with perturbed boundary conditions}\label{subsection:lyap_pbc_analysis}

Before Lyapunov analysis, we state bounds for the kernel of the inverse transformation, which is the counterpart of Lemma \ref{lemma:inverse_kernel_bound}, but with the exact inverse backstepping kernel $l$ instead.

\begin{lemma}\label{lemma_l_bound}
{\em [Upper bound the for the exact inverse backstepping kernel and its derivative]}
The inverse backstepping transform's kernel $l$, defined in \eqref{eq:l_def_exact_k}, satifies the  upper bounds
\begin{eqnarray}
  |l(x, \nu)| &\leq& \bar{k}e^{\bar{k}} =: \bar{l}  \label{eq:regular_lbar_def}\,, \\
  |l_{\nu}(x, \nu)| &\leq& \bar{k}_{\nu}(1 + \bar{l})e^{\bar{k}} =: \bar{l}_{\nu} \,, \label{eq:regular_lbar_nu_def}
\end{eqnarray}
for every $(x, \nu) \in [0, 1] \times [-B_{\nu}, B_{\nu}]$ and where
\begin{equation}
  \bar{k} := B_{\beta}e^{B_{\beta}} \label{eq:bar_k_def}\,,
\end{equation}
and $\bar{k}_{\nu}$ is defined in \eqref{eq:k_nu_lyap}
\end{lemma}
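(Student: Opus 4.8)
The plan is to prove both bounds by the successive-approximation (Neumann-series) method already used for $k$ in Lemma~\ref{lemma_k_knu} and for $\hat l$ in Lemma~\ref{lemma:inverse_kernel_bound}, relying on the observation that $l(\cdot,\nu)$ and $l_\nu(\cdot,\nu)$ each satisfy a Volterra integral equation of the second kind whose kernel is $k(\cdot,\nu)$, so that the standard exponential estimate applies directly.

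For \eqref{eq:regular_lbar_def}, I would first note that \eqref{eq:k_bound} gives $|k(x,\nu)|\le B_\beta e^{B_\beta x}\le B_\beta e^{B_\beta}=\bar k$ on $[0,1]\times[-B_\nu,B_\nu]$. Writing the solution of \eqref{eq:l_def_exact_k} as $l=\sum_{n\ge 0}\Delta l^n$ with $\Delta l^0:=k$ and $\Delta l^{n+1}:=k*\Delta l^n$, an induction identical to the one producing \eqref{eq:delta_kn_bound} yields $|\Delta l^n(x,\nu)|\le \bar k^{\,n+1}x^n/n!$; summing the series then gives $|l(x,\nu)|\le \bar k\,e^{\bar k x}\le \bar k e^{\bar k}=\bar l$, while the uniform convergence simultaneously re-establishes the existence and continuity of $l$.

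For \eqref{eq:regular_lbar_nu_def}, I would differentiate \eqref{eq:l_def_exact_k} in $\nu$ --- the interchange of $\partial_\nu$ with the integral being justified, exactly as in Lemma~\ref{lemma_k_knu}, by the uniform convergence of the differentiated series on $[0,1]\times[-B_\nu,B_\nu]$ --- to reach
\begin{align*}
  l_\nu(x,\nu) &= F(x,\nu) + \int_0^x k(x-y,\nu)\,l_\nu(y,\nu)\,dy, \\
  F(x,\nu) &:= k_\nu(x,\nu) + \int_0^x k_\nu(x-y,\nu)\,l(y,\nu)\,dy,
\end{align*}
which is again a second-kind Volterra equation in $l_\nu$ with kernel $k(\cdot,\nu)$ and forcing term $F$. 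Using $|k_\nu|\le\bar k_\nu$ (the bound recalled in \eqref{eq:k_nu_lyap}) together with the already-proved $|l|\le\bar l$, the forcing term is bounded by $\|F(\cdot,\nu)\|_\infty\le\bar k_\nu(1+\bar l)$; running the iteration $\Delta m^0:=F$, $\Delta m^{n+1}:=k*\Delta m^n$ gives $|\Delta m^n(x,\nu)|\le \bar k_\nu(1+\bar l)\,\bar k^{\,n}x^n/n!$, hence $|l_\nu(x,\nu)|\le \bar k_\nu(1+\bar l)e^{\bar k x}\le \bar k_\nu(1+\bar l)e^{\bar k}=\bar l_\nu$.

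I do not expect a genuine obstacle here; the only points requiring care are (i) checking that the forcing term $F$ of the $l_\nu$-equation is bounded by $\bar k_\nu(1+\bar l)$, which is precisely where the previously derived bound on $l$ feeds in, and (ii) justifying the term-by-term $\nu$-differentiation, which is handled verbatim as in Lemma~\ref{lemma_k_knu} via uniform convergence of $\sum_n\Delta l^n_\nu$ on $[0,1]\times[-B_\nu,B_\nu]$.
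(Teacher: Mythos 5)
Your proof is correct and follows essentially the same route as the paper: the successive-approximation bound $|l|\le\bar k e^{\bar k x}$ from the Volterra equation \eqref{eq:l_def_exact_k}, followed by differentiating that equation in $\nu$ to obtain the second-kind Volterra equation $l_\nu = k_\nu + k_\nu * l + k*l_\nu$ and bounding its forcing term by $\bar k_\nu(1+\bar l)$. You simply spell out the iteration details that the paper leaves implicit.
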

\begin{proof}
  We use Lemma \ref{lemma_k_knu} and the successive approximation method to achieve \eqref{eq:regular_lbar_def} and the same for \eqref{eq:regular_lbar_nu_def} noticing that taking the derivative of \eqref{eq:l_def_exact_k} with respect to $\nu$ gives
  \begin{eqnarray}
    l_{\nu} (x, \nu) &=& k_{\nu}(x, \nu) + \int_{0}^{x} \bigg[k_{\nu}(x-y, \nu) l(y, \nu) \nonumber \\&&+ k(x-y, \nu) l_{\nu}(y, \nu) \bigg] dy \,, 
  \end{eqnarray}
  for all $(x, \nu) \in [0, 1] \times [-B_{\nu}, B_{\nu}]$.
\end{proof}

\begin{lemma}\label{lemma:lyap_loc_estimate_pbc}
  {\em [Loc. Lyapunov estimate pert. boundary conditions]}
  For any $\beta \in H$, for any $c \geq 1$ there exist positive functions $R_0(\breve B, c, B_{\nu}) = \mathcal{O}_{c \to \infty}(ce^{-c}), \epsilon^* = \mathcal{O}_{c \to \infty} (e^{-\frac{c}{2}})$ with a decreasing dependence on argument $\breve B$ such that for all $\epsilon \in (0, \epsilon^*)$ the Lyapunov function
  \begin{eqnarray}
    V(t) &=& V_1(t) + V_2(t) + V_3(t) \label{eq:V_pbc_def}\,,\\
    V_1(t) &=& \frac{c}{2} \int_{0}^{1} e^{cx} w^2(x, t) dx \label{eq:V1_pbc_def} \,, \\
    V_2(t) &=& \frac{c}{2} \int_{0}^{1} e^{cx} w_x^2(x, t) dx \label{eq:V2_pbc_def}\,, \\
    V_3(t) &=& \frac{1}{8}w^2(0, t) \,, \label{eq:V3_pbc_def}
  \end{eqnarray}
  satisfies
  \begin{equation}
    V(0) \leq R_0 \implies V(t) \leq V(0) e^{-\frac{c}{2}t}, \quad t \geq 0\,.
  \end{equation}
\end{lemma}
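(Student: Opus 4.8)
The plan is to mimic the structure of the proof of Lemma~\ref{lemma:V_bound}, but now carrying the extra $V_3$ term that controls the boundary trace $w(0,t)$, since Agmon's inequality alone is no longer sufficient (the inlet boundary condition $w(1,t)=\Gamma(t)$ is nonhomogeneous). First I would differentiate $V=V_1+V_2+V_3$ along the trajectories of the perturbed target system \eqref{eq:wt_pbc_def}--\eqref{eq:w_perturbed_bc}. For $\dot V_1$ and $\dot V_2$ I would integrate by parts in $x$, exactly as before, but keeping careful track of the new boundary contributions at $x=1$: $\dot V_1$ now picks up a term $\tfrac{c}{2}e^c w^2(1,t)=\tfrac{c}{2}e^c\Gamma^2(t)$, and $\dot V_2$ picks up $\tfrac{c}{2}e^c w_x^2(1,t)$ where $w_x(1,t)$ must be computed from the $x$-derivative of \eqref{eq:w_perturbed_bc}. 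The key structural fact is that $\Gamma(t)$ is $\mathcal{O}(\epsilon)$: using Lemma~\ref{lemma_l_bound} and \eqref{eq:tilde_k_bound}, I would show $|\Gamma(t)|\le \epsilon(1+\bar l)\int_0^1|w(y,t)|dy$ and similarly bound $\Gamma_x$ (or $\dot\Gamma$, depending on how $w_x(1,t)$ is handled), so all the new boundary terms are quadratic in $w$ with coefficients $\mathcal{O}(\epsilon^2 e^c)$ — small provided $\epsilon$ is chosen small relative to $e^{-c/2}$. For $\dot V_3=\tfrac14 w(0,t)w_t(0,t)$, I would substitute $w_t(0,t)=w_x(0,t)-w_x(0,t)\Omega(0,t)=w_x(0,t)$ (since $\Omega(0,t)=0$ by \eqref{eq:omega_def_exact_k2}), giving $\dot V_3=\tfrac14 w(0,t)w_x(0,t)$, which by Young's inequality is bounded by $\tfrac18 w^2(0,t)+\tfrac18 w_x^2(0,t)$ — this is precisely the term that lets us dominate the leftover $+w_x^2(0,t)$ and $+w^2(0,t)$ contributions.

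Next I would collect all terms into the form
\begin{align}
\dot V(t) \le\ & -w^2(0,t)\big[a_0 - \mathcal{O}(\epsilon^2 e^c) - \mathcal{O}(e^c V)\big] \nonumber\\
& -w_x^2(0,t)\big[a_1 - \mathcal{O}(e^c V)\big] \nonumber\\
& -V_1(t)\big[\tfrac{c}{2} - \mathcal{O}(e^c\|w\|^2) - \mathcal{O}(\epsilon^2 e^c/c)\big] \nonumber\\
& -V_2(t)\big[\tfrac{c}{2} - \mathcal{O}(e^c\|w\|^2) - \mathcal{O}(\epsilon^2 e^c/c)\big],
\end{align}
for suitable absolute constants $a_0,a_1>0$ coming from the $\tfrac18$ coefficient in $V_3$ and the $I_{11}$/$I_{21}$ dissipation, where the $\Omega$-bounds \eqref{eq:omega_bound}, \eqref{eq:omega_x_bound} (now with $\delta_2\equiv 0$, i.e. $\bar\Omega$, $\bar\Omega_{x1}$, $\bar\Omega_{x2}$ built purely from $\bar k_\nu$, $\bar k_{x\nu}$, $\bar l$) are used to estimate the $\Omega$-dependent integrals via Young and Cauchy--Schwarz. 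Then, exactly as in Lemma~\ref{lemma:V_bound}, I would sequentially define shrinking thresholds: first $R_1$ so that the $w_x^2(0,t)$ bracket is nonnegative whenever $V\le R_1$ (using $\|w\|^2,\ w^2(0,t)\le 2V$ — note that with the new norm $w^2(0,t)\le 8V_3\le 8V$ also holds directly); then $\epsilon_1$ making the $w^2(0,t)$ bracket nonnegative; then $R_2,\epsilon_2$ absorbing the $\|w\|^2$ and $\epsilon^2$ terms in the $V_1$ bracket so it is bounded below by $\tfrac c4$; then $R_3,\epsilon^*$ doing the same for the $V_2$ bracket; finally $R_0:=\min(R_3,B_\nu^2/2)$ (or whatever constant ensures $w^2(0,t)\le B_\nu^2$, which here follows from $w^2(0,t)\le 8V_3\le 8V(0)$, so one needs $R_0\le B_\nu^2/8$). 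The asymptotics $R_0=\mathcal{O}(ce^{-c})$ and $\epsilon^*=\mathcal{O}(e^{-c/2})$ drop out of these formulas just as in the earlier lemma, because $\bar\Omega,\bar\Omega_{x1},\bar\Omega_{x2}$ are $c$-independent and the worst terms are $\mathcal{O}(e^c V)$ and $\mathcal{O}(\epsilon^2 e^c/c)$. With all four brackets controlled I would conclude $\dot V\le -\tfrac c2 V$ and invoke the comparison lemma.

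The main obstacle I anticipate is the boundary term $\tfrac{c}{2}e^c w_x^2(1,t)$ in $\dot V_2$: unlike in Lemma~\ref{lemma:V_bound} where $w(1,t)=0$ forced the analogous term in $I_{21}$ to be benign (there $w_x(1,t)$ was itself $\mathcal{O}(\text{small})$ via \eqref{eq:wx_1}), here $w_x(1,t)$ must be obtained by differentiating the nonhomogeneous relation \eqref{eq:w_perturbed_bc} in $x$, which involves $\Gamma_x(t)$ and hence $\tilde k_x$ and $\partial_x l$ — but in the gain-only setting we have deliberately \emph{not} approximated $k_x$, so $\tilde k_x=k_x-\hat k_x$ is not small; one cannot bound $\Gamma_x$ by $\epsilon$ times something. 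The resolution must be that $w_x(1,t)$ is instead expressed, using the PDE itself, as $w_x(1,t)=w_t(1,t)+w_x(0,t)\Omega(1,t)=\dot\Gamma(t)+w_x(0,t)\Omega(1,t)$, so that the troublesome term becomes governed by $\dot\Gamma$ — but then $\dot\Gamma$ contains $w_t(0,t)=w_x(0,t)$ and $\dot{\overline{w(0,t)}}$-type terms, reintroducing $w_x(0,t)$ with an $\mathcal{O}(\epsilon)$ coefficient, which is fine, plus time derivatives of $w$ under the integral that must be integrated by parts back onto $w_x$. Getting this chain of substitutions to close — so that the $x=1$ boundary contribution is ultimately a combination of $\epsilon^2 w_x^2(0,t)$, $\epsilon^2 w^2(0,t)$, and $\mathcal{O}(e^c V)$ times the $V_i$'s, all absorbable — is the delicate bookkeeping at the heart of this lemma, and is exactly the ``difficulty of the Lyapunov analysis increases somewhat'' alluded to in the text preceding \eqref{eq:new_psi_def}.
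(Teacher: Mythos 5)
Your proposal is correct and follows essentially the same route as the paper: the added $V_3=\tfrac18 w^2(0,t)$ with $\dot V_3=\tfrac14 w(0,t)w_x(0,t)$ absorbed by the $-\tfrac{c}{2}w^2(0,t)$ dissipation from $I_{11}$, the $\mathcal{O}(\epsilon)$ bound on $\Gamma$, and—crucially—the identification that $w_x(1,t)$ must be obtained as $\Gamma_t(t)+w_x(0,t)\Omega(1,t)$ via the PDE rather than by differentiating $\Gamma$ in $x$ (which would involve the non-small $\tilde k_x$), followed by the same cascade of thresholds $R_1,\epsilon_1,\dots,\epsilon^*$ and $R_0=\min(\cdot,B_\nu^2/8)$. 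The paper handles the $w_t(y,t)$ terms inside $\Gamma_t$ by substituting the PDE $w_t=w_x-w_x(0,t)\Omega$ directly rather than integrating by parts, but this is a cosmetic difference and your bound on $\Gamma_t$ has the same structure as the paper's \eqref{eq:gammat_bound}.
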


The result also holds for $c\in(0,1)$ but we don't spend the time proving it since it is for large $c$ (rapid decay) that the result is of interest. If it were interested to prove the result for $c\in(0,1)$, it would be easier to use the Lyapunov function
\begin{eqnarray}
 V(t) &=& \frac{1}{2} \int_{0}^{1} (1+x)w^2(x, t)dx \nonumber  \\  && + \frac{1}{2} \int_{0}^{1} (1+x)w_x^2(x, t)dx   + \frac{w^2(0, t)}{8}\,.
\end{eqnarray}
We would adapt the entire proof removing all terms in $c$ and get a bound in the form $V(t) \leq V(0) e^{-\frac{t}{2}}$. Noticing that $e^{-\frac{t}{2}} \leq e^{\frac{-ct}{2}}$ we would have the result valid for all $c > 0$.

With the definition used for $c \geq 1$, we have the equivalence with the norm $\Psi(w(t))$ introduced in \eqref{eq:new_psi_def}:
 \begin{eqnarray}
   V(t) &\leq& \frac{ ce^c}{2} \Psi(w(t))\,,  \\
   \Psi(w(t)) &\leq& 8 V(t)\, .
  \label{eq:equiv_new_psi_new_V}
 \end{eqnarray}
 We have to introduce the factor $c$ on $V_1, V_2$ to achieve an upper bound of the form $\dot{V} \leq -f(c) V(t)$ where $f(c) > 0$. Indeed one can take a look at \eqref{eq:w0_bound} and see that there is no $c$ factor on the terms in $w^2(0, t)$.

 \begin{proof}[Proof of Lemma \ref{lemma:lyap_loc_estimate_pbc}.]
We begin with the assumption---to be enforced shortly with a restriction on the initial condition---that  $ w^2(0, t) \leq B_{\nu}^2\,, \ t \geq 0$. Since
 \begin{equation}
  w^2(0, t) \leq 8V(t)\,, 
 \end{equation}
 if we chose $V(0) \leq \frac{B_{\nu}^2}{8}$ and prove that $V$ is decreasing, the assumption is validated for all $t \geq 0$.
We need to adapt the upper bounds \eqref{eq:delta_bound}-\eqref{eq:omega_x_bound} as well as introducing new ones.

With Lemmas \ref{lemma_l_bound} and \ref{lemma_k_knu}, when $|w(0, t)| \leq B_{\nu}$ we have 
\begin{eqnarray}
  \|l(\cdot, w(0, t))\|_{\infty} &\leq& \bar{l} \label{eq:bar_l_pbc} \,, \\
  \|l_{\nu}(\cdot, w(0, t))\|_{\infty} &\leq& \bar{l}_{\nu} \label{eq:bar_l_nu_pbc} \,, \\
  \|k_{\nu}(\cdot, w(0, t))\|_{\infty} &\leq& \bar{k}_{\nu} \,, \\
  \|k_{x \nu}(\cdot, w(0, t))\|_{\infty} &\leq& \bar{k}_{x \nu} \,, \\
  \|\tilde{k}(\cdot, w(0, t))\|_{\infty} &\leq& \epsilon \label{eq:tilde_k_pbc} \,, \\
  \|\tilde{k}_{\nu}(\cdot, w(0, t))\|_{\infty} &\leq& \epsilon \label{eq:tilde_k_nu_pbc} \,, \\
  \|\Omega(\cdot, t)\|_{\infty} &\leq& \bar{\Omega} \int_{0}^{1} |w(y, t)|dy \label{eq:omega_pbc_bound}\,, \\
  |\Omega_x(x, t)| &\leq& \bar{\Omega}_{x1} \int_{0}^{1} |w(y, t)|dy \nonumber \\ && + \bar{k}_{\nu} |w(x, t)| \label{eq:omegax_pbc_bound}\,, \\
  |\Gamma(t)| &\leq& \epsilon \bar{\Gamma} \int_{0}^{1} |w(y, t)| dy \label{eq:gamma_bound}\,, \\
  |\Gamma_t(t)| &\leq& \epsilon \bar{\Gamma}_1 |w_x(0, t)| \int_{0}^{1} |w(y, t)|dy \nonumber \\ && + \epsilon \bar{\Gamma} \int_{0}^{1} |w_x(y, t)|dy \label{eq:gammat_bound}\,, 
\end{eqnarray}
where $\bar{l}, \bar{l}_{\nu}$ are defined in Lemma \ref{lemma:inverse_kernel_bound} and $\bar{k}_{\nu}, \bar{k}_{x \nu}$ are defined in Lemma \ref{lemma_k_knu} and
$  \bar{\Omega} := \bar{k}_{\nu}(1 + \bar{l}) , 
  \bar{\Omega}_{x1} := \bar{k}_{\nu} \bar{l} + \bar{k}_{x \nu}(1 + \bar{l}) , 
  \bar{\Gamma} := 1 + \bar{l} , 
  \bar{\Gamma}_1 := 1 + \bar{l} + \bar{l}_{\nu} + \bar{\Omega}(1 +\bar{l})$. 
The proofs of \eqref{eq:omega_pbc_bound} and \eqref{eq:omegax_pbc_bound} are almost identical as the ones for \eqref{eq:omega_bound} and \eqref{eq:omega_x_bound} and so we focus on proving  \eqref{eq:gamma_bound} and \eqref{eq:gammat_bound}. 

\underline{For $\Gamma$:}
Using \eqref{eq:tilde_k_pbc}, \eqref{eq:bar_l_pbc} and the triangular inequality we have the desired result.  

\underline{For $\Gamma_t$:} Taking the derivative of \eqref{eq:gamma_def} we have that
\begin{eqnarray}
  - \Gamma_t(t) &=& w_t(0, t) \int_{0}^{1} \tilde{k}_{\nu}(1-y, w(0, t))  \times \nonumber \\ &&  \left [w(y, t) + \int_{0}^{y} l(y-s, w(0, t)) w(s, t) ds \right] dy \nonumber \\
                &&+ \int_{0}^{1} \tilde{k}(1-y, w(0, t)) \nonumber \\
                &&\times \bigg[w_x(y, t) - w_x(0, t) \Omega(y, t) \nonumber \\ && + w_t(0, t) \int_{0}^{y}l_{\nu}(y-s, w(0, t)) w(s, t) ds\bigg]dy \nonumber \\
                &&+ \int_{0}^{1} \tilde{k}(1-y, w(0, t)) \int_{0}^{y}l(y-s, w(0, t)) \nonumber \\ && \times  \left [w_x(s, t) - w_x(0, t) \Omega(s, t)\right ]ds dy\,.
\end{eqnarray}
Since $w_t(0, t) = w_x(0, t)$, using \eqref{eq:omegax_pbc_bound}, \eqref{eq:tilde_k_pbc}, \eqref{eq:tilde_k_nu_pbc}, \eqref{eq:bar_l_pbc}, \eqref{eq:bar_l_nu_pbc} and \eqref{eq:omega_pbc_bound} we arrive at \eqref{eq:gammat_bound}.
Also let's notice that Cauchy-Schwarz's inequality provide these very useful upper bounds
\begin{eqnarray}
  \int_{0}^{1} |w(x, t)| dx &\leq& \sqrt{\int_{0}^{1} w^2(x, t) dx} \label{eq:w_holder} \,, \\
  \int_{0}^{1} |w_x(x, t)| dx &\leq& \sqrt{\int_{0}^{1} w_x^2(x, t) dx} \label{eq:wx_holder}\,, \\
                              && \forall t \geq 0\,. \nonumber
\end{eqnarray}
Let's compute the derivative of $V$ for $t \geq 0$.

\underline{Estimate of $V_3$:}
Taking the derivative of \eqref{eq:V3_pbc_def}, using \eqref{eq:wt_def_exact_k}, $c \geq 1$ and Young's inequality we have that
\begin{equation}
  \dot{V}_3(t) \leq \frac{c w^2(0, t)}{4}  + \frac{c w_x^2(0, t)}{16}  \label{eq:V3_pbc_bound}\,.
\end{equation}

\underline{Estimate of $V_1$:} Taking the derivative of \eqref{eq:V1_pbc_def}, we have that 
\begin{eqnarray}
  \dot{V}_1 &=& I_{11} + I_{12} \,,\\
  I_{11}(t) &=& c\int_{0}^{1}e^{cx} w(x, t) w_x(x, t) dx\,,  \label{eq:I11_pbc_def} \\
  I_{12}(t) &=& -c w_x(0, t) \int_{0}^{1}e^{cx} w(x, t) \Omega(x, t) dx \label{eq:I12_pbc_def}\,.
\end{eqnarray}
We first work on  $I_{11}$. Integrating by parts,,
\begin{eqnarray}
  I_{11}(t) &=& \frac{ce^c}{2} \Gamma^2(t) - \frac{c}{2} w^2(0, t) - c V_1(t) \nonumber \\
            &\leq& \epsilon^2 \bar{\Gamma}^2 e^c  V_1(t)  - \frac{c}{2} w^2(0, t) - c V_1(t) \label{eq:I_11_pbc_bound}\,. 
\end{eqnarray}
The  bound \eqref{eq:I_11_pbc_bound} follows from \eqref{eq:gamma_bound} and \eqref{eq:w_holder}.
For $I_{12}$ we use \eqref{eq:omega_pbc_bound}, \eqref{eq:w_holder} as well as Young's inequality. We then have
\begin{eqnarray}
  I_{12}(t) \leq \frac{c}{8} w_x^2(0, t) + 4 \bar{\Omega}^2 e^c V_1(t) \|w(t)\|^2 \label{eq:I_12_pbc_bound}
\end{eqnarray}
Gathering \eqref{eq:I_11_pbc_bound} and \eqref{eq:I_12_pbc_bound} we have
\begin{eqnarray}
  \dot{V_1} (t) &\leq& \frac{c}{8} w_x^2(0, t) - \frac{c}{2} w^2(0, t) - V_1(t) \nonumber \\ && \times \left [c - \epsilon^2 \bar{\Gamma}^2 e^c - 4 \bar{\Omega}^2 e^c \|w(t)\|^2 \right]\,. 
  \label{eq:V_1_pbc_bound}
\end{eqnarray}

\underline{Estimate of $V_2$:} 
Taking the derivative with respect to $x$ of \eqref{eq:wt_def_exact_k} and using \eqref{eq:w_perturbed_bc} gives the following system satisfied by $w_x$
\begin{eqnarray}
  w_{xt}(x, t) &=& w_{xx}(x, t) - w_x(0, t) \Omega_x(x, t) \label{eq:wx_pbc}\,,  \\
  w_x(1, t) &=& \Gamma_t(t) + w_x(0, t) \Omega(1, t) \nonumber \label{eq:wx1_pbc} \,, \\
            && \forall (x, t) \in [0, 1] \times \mathbb{R}^+\,.
\end{eqnarray}
We can then take the derivative of $V_2$ (defined in \eqref{eq:V2_pbc_def}). We then have
\begin{eqnarray}
  \dot{V}_2 &=& I_{21} + I_{22} \,,  \\
  I_{21}(t) &=& c \int_{0}^{1} e^{cx} w_x(x, t) w_{xx}(x, t) dx \label{eq:I21_pbc_def} \,, \\
  I_{22}(t) &=& -c w_x(0, t) \int_{0}^{1} e^{cx} w_x(x, t) \Omega_x(x, t) dx \label{eq:I22_pbc_def} \,.
\end{eqnarray}
We first work $I_{21}$. Integrating by parts,
\begin{eqnarray}
  I_{21}(t) &=& \frac{ce^c}{2} w_x^2(1, t) - \frac{c}{2} w_x^2(0, t) -c V_2(t)
\end{eqnarray}
Using \eqref{eq:wx1_pbc}, \eqref{eq:gammat_bound}, \eqref{eq:omegax_pbc_bound}, \eqref{eq:wx_holder} and \eqref{eq:w_holder} we have that
\begin{eqnarray}
  |w_x(1, t)| &\leq& \epsilon \bar{\Gamma} \|w_x(t)\| + \epsilon \bar{\Gamma}_1 |w_x(0, t)|.\|w(t)\| \nonumber \\ && + \bar{\Omega} |w_x(0, t)|.\|w(t)\|, \quad t \geq 0\,.
\end{eqnarray}
We thus have for $t\geq 0$, 
\begin{align}
  I_{21}(t) \leq& e^c \bigg[ 4 \epsilon^2 \bar{\Gamma}^2 V_2(t)  + 4 \epsilon^2 w_x^2(0, t) V_1(t) \nonumber \\ & + 2 \bar{\Omega}^2 w_x^2(0, t) V_1(t) \bigg]  
  - \frac{c}{2}w_x^2(0, t) - c V_2(t)\,.  \label{eq:I21_bound_pbc}
\end{align} 
Moving to $I_{22}$, using \eqref{eq:omegax_pbc_bound}, Cauchy-Schwarz and Young's inequalities we have that
\begin{equation}
  I_{22}(t) \leq \frac{cw_x^2(0, t)}{4} + 4 e^c V_2(t) \|w(t)\|^2 (\bar{k}_{\nu}^2 + \bar{\Omega}_{x1}^2) \label{eq:I22_bound_pbc} \,.
\end{equation}
Gathering \eqref{eq:I21_bound_pbc}, \eqref{eq:I22_bound_pbc} we have that
\begin{align}
\hspace*{-1em}  
\dot{V}_2(t) \leq& - w_x^2(0, t) \left [\frac{c}{4} - 2 e^c \epsilon^2 V(t) (2 + \bar{\Omega}^2) \right ] \nonumber \\
               & - 
               \left [c - 4e^c(\epsilon^2 \bar{\Gamma}^2 + 2 V(t) (\bar{k}_{\nu}^2 + \bar{\Omega}^2))  \right ]V_2(t) 
  \label{eq:V2_dot_pbc_bound} \,. 
\end{align}
noticing that $V_1 \leq V, \, \|w(t)\|^2 \leq 2 V(t)$. 

\underline{Estimate of $V$:} 
Gathering \eqref{eq:V_1_pbc_bound}, \eqref{eq:V2_dot_pbc_bound} and \eqref{eq:V3_pbc_bound} we have 
\begin{align}
\hspace*{-1em} 
  \dot{V}(t) \leq& - w_x^2(0, t) \left [\frac{c}{16} - 2 e^c \epsilon^2 V(t) (2 + \bar{\Omega}^2)\right ] \label{eq:wx0_pbc_final_bound} \\ 
             & - V_1(t) \left [c - \epsilon^2 \bar{\Gamma}^2 e^c - 8 \bar{\Omega}^2 e^c V(t) \right ] \label{eq:V1_pbc_final_bound}\\
             & - V_2(t) \bigg [c - 4 e^c (\epsilon^2 \bar{\Gamma}^2 + 2 V(t) (\bar{k}_{\nu}^2 + \bar{\Omega}^2)) \bigg ] \label{eq:V2_pbc_final_bound} \\
             & - 2 c V_3(t) \label{eq:V3_pbc_final_bound} \,.
\end{align}
Considering \eqref{eq:V1_pbc_final_bound}, we introduce
\begin{eqnarray}
  R_1 &:=& \frac{ce^{-c}}{32 \bar{\Omega}^2} \quad > 0\,,  \\
  \epsilon_1 &:=& \frac{e^{-\frac{c}{2}}}{\bar{\Gamma}} \sqrt{\frac{c}{2} - 8 \bar{\Omega}^2 e^c R_1} \quad > 0\,.
\end{eqnarray}
That way if we choose $\epsilon \leq \epsilon_1, \, V(0) \leq R_1$, we would have \eqref{eq:V1_pbc_final_bound} upper bounded by $-\frac{c}{2} V_1(t)$. 
We then consider \eqref{eq:wx0_pbc_final_bound}, and introduce
\begin{equation}
  \epsilon_2 := {\rm min} \left \{ \epsilon_1, e^{-\frac{c}{2}} \sqrt{\frac{c}{32 R_1(2 + \bar{\Omega}^2)}} \right \} > 0\,,
\end{equation}
That way if we choose $\epsilon \leq \epsilon_1, \, V(0) \leq R_1$ we would have \eqref{eq:wx0_pbc_final_bound} negative and \eqref{eq:V1_pbc_final_bound} bounded by $-\frac{c}{2} V_1(t)$.
Finally we consider \eqref{eq:V2_pbc_final_bound} and introduce 
\begin{align}
  R_2 :=& {\rm min} \left \{R_1, \frac{ce^{-c}}{32 (\bar{k}_{\nu}^2 + \bar{\Omega}_{x1}^2)}\right \} \, > 0\,,  \\
  \epsilon^* :=& {\rm min} \bigg\{\epsilon_2, \frac{e^{\frac{-c}{2}}}{2 \bar{\Gamma}} 
  sqrt{\frac{c}{2} - 8 e^c R_2 (\bar{k}_{\nu}^2 + \bar{\Omega}_{x1}^2)} \bigg\}\,  > 0 \,. \label{eq:epsilon_start_pbc}
\end{align}
That way if we choose $\epsilon \leq \epsilon^*$, $V(0) \leq R_2$ we would have \eqref{eq:wx0_pbc_final_bound} $\leq 0$, \eqref{eq:V1_pbc_final_bound} $\leq -\frac{c}{2} V_1(t)$ and \eqref{eq:V2_pbc_final_bound} $\leq -\frac{c}{2} V_2(t)$. We make the final choice
$  \epsilon \leq \epsilon^* $ and 
\begin{eqnarray}
  V(0) &\leq& {\rm min} \left \{R_2,  \frac{B_{\nu}^2}{8} \right \} =: R_0 > 0\,, \label{eq:R0_pbc_def}\,, 
\end{eqnarray}
where the last condition in \eqref{eq:R0_pbc_def} is made to ensure that $w^2(0, t) \leq B_{\nu}^2$. With this choice we have that
\begin{equation}
  \dot{V}(t) \leq -\frac{c}{2} V(t), \ \forall t \geq 0\,.
\end{equation}
With the comparison principle we complete the proof.
 \end{proof}

\begin{lemma}\label{lemma:loc_exp_stability_target_pbc}
  For all $c \geq 1$ there exist functions $\epsilon^*(\breve B, c) = \mathcal{O}_{c \to \infty} (e^{-\frac{c}{2}}), \Psi_0(B_{\nu}, \breve B, c) = \mathcal{O}_{c \to \infty}(e^{-2c})$ with a decreasing dependence in the argument $\breve B$ such that, for any $\beta \in H$ and any NO approximating $k, k_{\nu}$ with any $\epsilon \in (0, \epsilon^*)$, where $\epsilon^*$ is defined in \eqref{eq:epsilon_start_pbc}, if the perturbed target system \eqref{eq:w_perturbed_bc}, \eqref{eq:wt_pbc_def} is initialized with $w_0 := w(., 0)$ such that $\Psi(w_0) \leq \Psi_0$, then 
  \begin{equation}
    \Psi(t) \leq 4 ce^c \Psi(0) e^{-\frac{c}{2}t}, \quad t \geq 0\,,
  \end{equation}
  where 
  \begin{equation}
   \Psi(w(t)) = \|w(t)\|^2 + \|w_x(t)\|^2 + w^2(0, t)\,.
  \end{equation}
\end{lemma}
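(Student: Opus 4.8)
The plan is to obtain Lemma~\ref{lemma:loc_exp_stability_target_pbc} as an immediate consequence of the Lyapunov estimate in Lemma~\ref{lemma:lyap_loc_estimate_pbc} together with the norm equivalence \eqref{eq:equiv_new_psi_new_V} between $V$ and $\Psi$. Concretely, I would fix $c \ge 1$, take the Lyapunov function $V = V_1 + V_2 + V_3$ from \eqref{eq:V_pbc_def}, and invoke Lemma~\ref{lemma:lyap_loc_estimate_pbc} to produce the positive numbers $R_0(\breve B, c, B_\nu) = \mathcal{O}_{c\to\infty}(ce^{-c})$ and $\epsilon^*(\breve B, c) = \mathcal{O}_{c\to\infty}(e^{-c/2})$ (inheriting the stated decreasing dependence on $\breve B$), with $\epsilon^*$ exactly the quantity defined in \eqref{eq:epsilon_start_pbc}, such that $V(0)\le R_0$ and $\epsilon\in(0,\epsilon^*)$ imply $V(t)\le V(0)e^{-ct/2}$ for all $t\ge0$.

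Next I would propagate this decay to $\Psi$. Using the right inequality in \eqref{eq:equiv_new_psi_new_V}, one has $\Psi(w(t)) \le 8V(t) \le 8 V(0) e^{-ct/2}$; then the left inequality $V(0)\le \tfrac{ce^c}{2}\Psi(w_0)$ yields $\Psi(w(t)) \le 4 c e^{c}\,\Psi(w_0)\,e^{-ct/2}$, which is precisely the claimed overshoot coefficient $4ce^c$. To convert the hypothesis $V(0)\le R_0$ into a hypothesis on the initial data in the $\Psi$ norm, I would set $\Psi_0 := \dfrac{2 R_0}{c e^{c}}$: then $\Psi(w_0)\le\Psi_0$ gives, via $V(0)\le \tfrac{ce^c}{2}\Psi(w_0)$, that $V(0)\le R_0$, so Lemma~\ref{lemma:lyap_loc_estimate_pbc} is applicable. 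Since $R_0 = \mathcal{O}_{c\to\infty}(ce^{-c})$, this choice produces $\Psi_0 = \mathcal{O}_{c\to\infty}(e^{-2c})$, matching the statement, and the decreasing dependence of $\Psi_0$ on $\breve B$ is inherited from that of $R_0$. This closes the argument.

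I do not anticipate a genuine obstacle here: the lemma is essentially a restatement of Lemma~\ref{lemma:lyap_loc_estimate_pbc} in the norm $\Psi$, mirroring how Lemma~\ref{lemma:local_stability_target} is deduced from Lemma~\ref{lemma:V_bound} in the full-kernel case. The only point requiring care is the bookkeeping of the constants — in particular checking that the extra factor $c$ multiplying $V_1$ and $V_2$ in \eqref{eq:V_pbc_def} (introduced precisely because \eqref{eq:w0_bound} carries no $c$ on the $w^2(0,t)$ terms) is exactly what degrades the overshoot from $e^c$ to $ce^c$ and the admissible initial-data size from $\mathcal{O}_{c\to\infty}(ce^{-2c})$ to the more conservative $\mathcal{O}_{c\to\infty}(e^{-2c})$, consistent with the remark following Theorem~\ref{theorem:loc_stability_base_exact_k}.
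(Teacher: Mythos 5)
Your proposal is correct and follows essentially the same route as the paper: invoke Lemma~\ref{lemma:lyap_loc_estimate_pbc}, then transfer the decay to $\Psi$ via $\Psi(w(t))\le 8V(t)$ and $V(0)\le\tfrac{ce^c}{2}\Psi(w_0)$, with the identical choice $\Psi_0=\tfrac{2e^{-c}}{c}R_0$ and overshoot $4ce^c$. No gaps.
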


\begin{proof}
  We just have to notice that the Lyapunov function defined in \eqref{eq:V_pbc_def} satisfies
$      V(t)  \leq \frac{ce^c}{2} \Psi(w(t))$ and $  \Psi(w(t)) \leq  8 V(t)$.
  We then use Lemma \ref{lemma:loc_exp_stability_target_pbc}, keeping $\epsilon^*$, and introducing $\Psi_0 := \frac{2e^{-c}}{c} R_0$, where $R_0$ is defined in Lemma. \ref{lemma:loc_exp_stability_target_pbc} 
\end{proof}

\subsection{Local Stability in Original Plant Variable} \label{subsection:equiv_perturbed_base}
We  come back to  system \eqref{eq:ut_def}, \eqref{eq:u1_def}, working with the norm
\begin{equation}
  \Omega(u(t)) := \|u(t)\|^2 + \|u_x(t)\|^2 + u^2(0,t)\,
  \label{eq:omega_def_pbc}
\end{equation}
and state the norm equivalence  between the perturbed target system \eqref{eq:wt_pbc_def}, \eqref{eq:w1_def_exact_k} and the original system \eqref{eq:ut_def}, \eqref{eq:u1_def}. 

\begin{lemma}\label{lemma:equiv_norm_target_base_pbc}
  {\em [Equiv. norm perturbed-target system]}
  There exist  $\rho, \delta>0$ such that, $\forall t \geq 0$, if $|u(0, t)| \leq B_{\nu}$ then,
  \begin{eqnarray}
    \Psi(w(t)) &\leq& \delta \,\Omega(u(t))\,,\label{eq:psi_bound_pbc} \\
    \Omega(u(t)) &\leq& \rho\, \Psi(w(t))\,.\label{eq:omega_norm_bound_pbc}
  \end{eqnarray}
\end{lemma}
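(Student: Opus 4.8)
The plan is to mirror the proof of Lemma~\ref{lemma:equiv_norm_target_base}, but with the \emph{exact} kernels $k$ and $l$ in place of the neural approximations $\hat k,\hat l$, since in the gain-only approach the backstepping transformation \eqref{eq:w_true_k_1} and its inverse are built from $k$ and $l$ rather than from their approximations. A welcome byproduct is that the resulting constants $\delta,\rho$ will not depend on $\epsilon$ at all.

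First I would collect the pointwise kernel bounds. Lemma~\ref{lemma_k_knu}, \eqref{eq:k_bound}, gives $|k(x,\nu)|\le \bar k:=B_\beta e^{B_\beta}$ on $[0,1]\times[-B_\nu,B_\nu]$; differentiating the Volterra equation \eqref{eq:kernel} in $x$ and summing the series (exactly as inside the proof of Lemma~\ref{lemma_k_knu}) yields a uniform bound $|k_x(x,\nu)|\le \bar k_x$ with, e.g., $\bar k_x:=B_{\beta_x}+B_\beta^2 e^{B_\beta}+B_{\beta_x}B_\beta e^{B_\beta}$. For the inverse kernel, Lemma~\ref{lemma_l_bound} supplies $|l(x,\nu)|\le\bar l$ and $|l_\nu(x,\nu)|\le\bar l_\nu$, and differentiating \eqref{eq:l_def_exact_k} in $x$ gives $l_x(x,\nu)=k_x(x,\nu)+k(0,\nu)l(x,\nu)+\int_0^x k_x(x-y,\nu)l(y,\nu)\,dy$, hence $|l_x(x,\nu)|\le\bar k_x+\bar k\bar l+\bar k_x\bar l=:\bar l_x$. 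Throughout, the hypothesis $|u(0,t)|\le B_\nu$ together with $w(0,t)=u(0,t)$ guarantees that every kernel above is evaluated at an admissible second argument for all $t\ge0$.

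For the bound $\Psi(w(t))\le\delta\,\Omega(u(t))$ I would use \eqref{eq:w_true_k_1} with Young's and Cauchy--Schwarz inequalities, as for \eqref{eq:w_norm_bound}, to get $\|w(t)\|^2\le 2(1+\bar k^2)\|u(t)\|^2$; then differentiate \eqref{eq:w_true_k_1} in $x$, obtaining $w_x(x,t)=u_x(x,t)-k(0,u(0,t))u(x,t)-\int_0^x k_x(x-y,u(0,t))u(y,t)\,dy$, which gives $\|w_x(t)\|^2\le 4\|u_x(t)\|^2+(4\bar k^2+4\bar k_x^2)\|u(t)\|^2$ as for \eqref{eq:wx_norm_bound}; and finally $w^2(0,t)=u^2(0,t)$ since the Volterra integral in \eqref{eq:w_true_k_1} vanishes at $x=0$. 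Summing the three estimates yields $\Psi(w(t))\le\delta\,\Omega(u(t))$ with $\delta:=7+4\bar k^2+4\bar k_x^2$ (or any larger constant). For the reverse bound I would argue identically from the inverse transform $u(x,t)=w(x,t)+\int_0^x l(x-y,w(0,t))w(y,t)\,dy$ (obtained exactly as \eqref{eq:lhat_ppty}, now using \eqref{eq:l_def_exact_k}), with $\bar l,\bar l_x$ playing the roles of $\bar k,\bar k_x$: this produces $\Omega(u(t))\le\rho\,\Psi(w(t))$ with $\rho:=7+4\bar l^2+4\bar l_x^2$. Since none of $\bar k,\bar k_x,\bar l,\bar l_x$ involves $\epsilon$, the constants $\delta,\rho$ depend only on $(\breve B,B_\nu)$.

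This argument is essentially routine Volterra bookkeeping; the only point requiring a little care is that the new norm \eqref{eq:new_psi_def} carries the extra term $w^2(0,t)$, which, unlike in Section~\ref{sec:normequiv}, cannot be recovered from $\|w\|$ and $\|w_x\|$ by Agmon's inequality once $w(1,t)=\Gamma(t)\neq0$, so it must be propagated explicitly through the equivalence. Fortunately this term matches identically across the transformation, $w(0,t)=u(0,t)$, so it contributes the simplest possible estimate and causes no difficulty; deriving the $l_x$ bound and assembling the constants is then the only remaining chore.
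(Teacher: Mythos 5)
Your proposal is correct and follows essentially the same route as the paper: adapt the proof of Lemma~\ref{lemma:equiv_norm_target_base} with the exact kernels $k,l$ in place of $\hat k,\hat l$, using the bounds $\bar k, \bar k_x, \bar l, \bar l_x$ and the identity $w(0,t)=u(0,t)$, which also removes the $\epsilon$-dependence of $\delta,\rho$. Your constants ($7+\cdots$ versus the paper's $6+\cdots$) differ only by a harmless enlargement.
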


\begin{proof}
  Noticing that $w(0, t) = u(0, t)$, we  adapt the proof of Lemma \ref{lemma:equiv_norm_target_base} and choose 
$      \delta := (6 + 4 \bar{k}^2 + 4\bar{k}_x^2) $ and $ \rho := (6 + 4 \bar{l}^2 + 4 \bar{l}^2)$, 
  where $\bar{l}$ is defined in Lemma \ref{lemma_l_bound} and
$      \bar{k} := B_{\beta}e^{B_{\beta}}, 
    \bar{k}_x := B_{\beta_x} (1 + \bar{k}) + B_{\beta} \bar{k}, 
    \bar{l}_x := \bar{k}_x + \bar{l} \bar{k} + \bar{k}_x \bar{l}$.
\end{proof}

With all the lemmas, we  now prove Theorem \ref{theorem:loc_stability_base_exact_k}.

\begin{proof}[Theorem \ref{theorem:loc_stability_base_exact_k}]
  Adapting the proof of Thm. \ref{theorem:loc_stability_base},  choose 
  \begin{eqnarray}
    \Omega_0 &:=& {\rm min}\left \{B_{\nu}^2, \frac{\Psi_0}{\delta} \right \}\,,  \\
    M &:=& \rho \delta c e^c \,, 
  \end{eqnarray}
  where $\rho, \delta$ are defined in Lemma \ref{lemma:equiv_norm_target_base_pbc} and $\Psi_0$ defined in Lemma \ref{lemma:loc_exp_stability_target_pbc}. We complete the proof 
  with 
  $\epsilon^*$ as defined in Lemma \ref{lemma:loc_exp_stability_target_pbc}.
\end{proof}

\section{Simulations}\label{sec:simulations}

\definecolor{lighter-green}{rgb}{0, 0.6, 0.00392156862} 
\begin{table}[t]
\label{tab:nopspeedups}
\centering
\resizebox{\columnwidth}{!}{%
\begin{tabular}{lccc}
\hline
\textbf{\begin{tabular}[c]{@{}l@{}}Spatial Step \\ Size (dx)\end{tabular}} & \multicolumn{1}{l}{\textbf{\begin{tabular}[c]{@{}l@{}}Analytical \\ Kernel \\ Calculation \\ Time(s) $\downarrow$ \end{tabular}}} & \multicolumn{1}{l}{\textbf{\begin{tabular}[c]{@{}l@{}}Neural Operator\\ Kernel\\ Calculation \\ Time(s) $\downarrow$ \end{tabular}}} & \multicolumn{1}{l}{\textbf{Speedup} $\uparrow$} \\ \hline
$0.01$                                                                     & $0.043$                                                                                                             & $0.028$                                                                                                                & {\color{lighter-green} $1.53$x}       \\
$0.001$                                                                    & $2.6$                                                                                                               & $0.029$                                                                                                                & {\color{lighter-green} $90$x}         \\
$0.0005$                                                                    & $10$                                                                                                                & $0.030$                                                                                                                & {\color{lighter-green} $333$x}        \\
$0.0001$                                                                   & $245$                                                                                                               & $0.057$                                                                                                                & {\color{lighter-green} $4298$x}       \\ \hline
\end{tabular}%
} 
\caption{Neural operator speedups over the analytical kernel calculation with respect to the increase in discretization points (decrease in step size). }
\end{table}

\begin{figure*}[t]
    \centering\includegraphics[width=.95\textwidth]{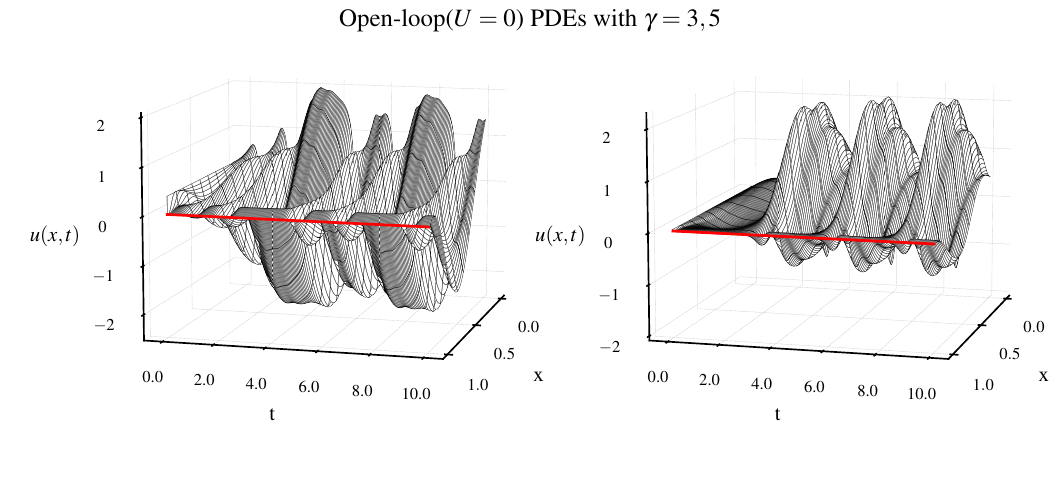}
    \caption{Open-loop (U=0) simulation of the PDE with the modified Chebyshev polynomial functions $\beta(x, u(0, t)) = 5 \cos((\gamma+u(0, t))\cos^{-1}(x))$ with initial conditions $u(x, 0) = 0.38, 0.04$ and parameters $\gamma=3, 5$ for the let and right images respectively.}
    \label{fig:gs-openloop}
\end{figure*}

\begin{figure*}[t]
    \centering
    \includegraphics[width=.95\textwidth]{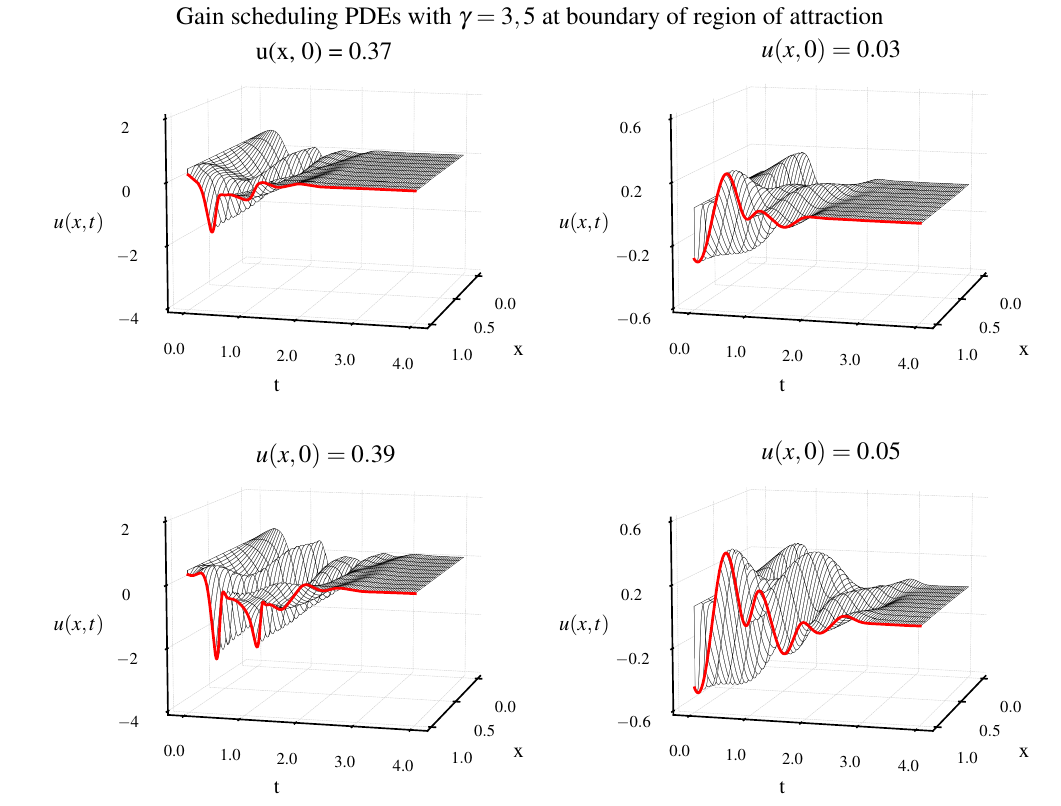}
    \caption{Analytical solution with gain scheduling for the modified Chebyshev polynomial functions $\beta(x, \mu) = 5 \cos ((\gamma+\mu)  \cos^{-1}(x))$ with parameters $\gamma=3, 5$ for the left and right images respectively. The top row shows initial conditions 0.37 (left) and 0.03 (right) respectively and the bottom row shows increased initial conditions of 0.39 (left) and 0.05(right). Naturally, the PDE becomes harder to stabilize and for larger initial conditions, gain scheduling fails to control the PDE.}
    \label{fig:gainScheduleROA}
\end{figure*}

\begin{figure*}[t]
    \centering
    \includegraphics[width=.95\textwidth]{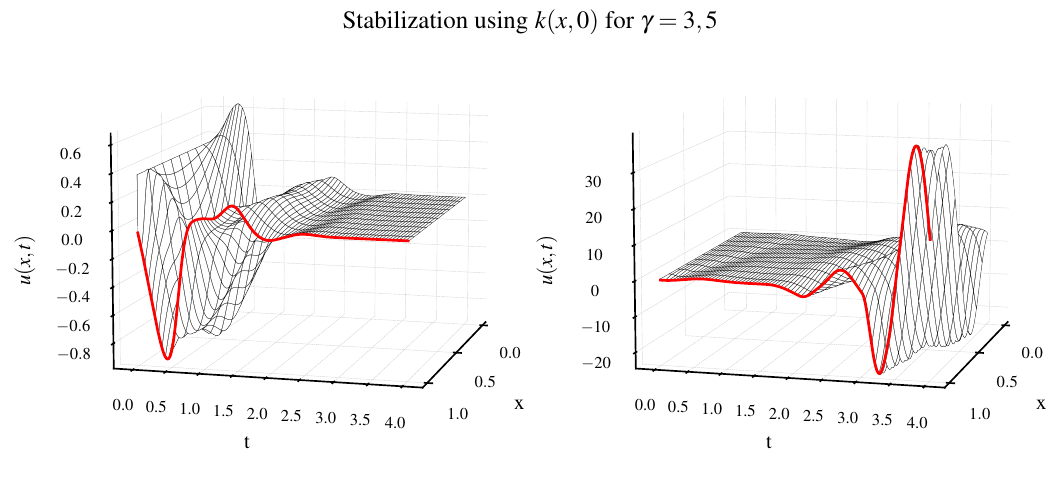}
    \caption{PDE stabilization using a controller based on a linearization at the origin, with no gain scheduling,  $U(t) = \int_0^x k(1-y, 0)u(y, t) dy$, for the PDE's corresponding to Figure \ref{fig:gs-openloop}}
    \label{fig:gainScheduleLinear}
\end{figure*}

\begin{figure*}
    \centering
    \includegraphics[width=.95\textwidth]{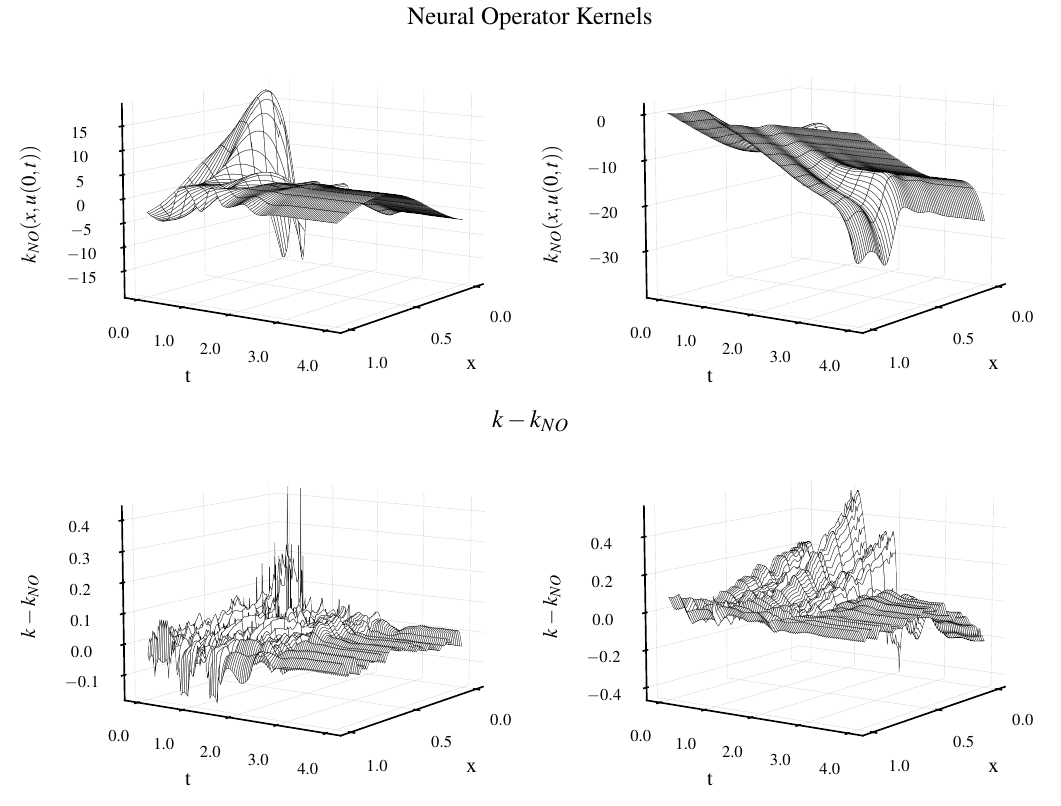}
    \caption{Neural operator kernels when controlling the PDE in Figure \ref{fig:gs-openloop}(top row), and the difference in kernel error between the resulting gain scheduling kernels (bottom row).}
    \label{fig:gainScheduleKernel}
\end{figure*}

\begin{figure*}[t]
    \centering
    \includegraphics[width=.95\textwidth]{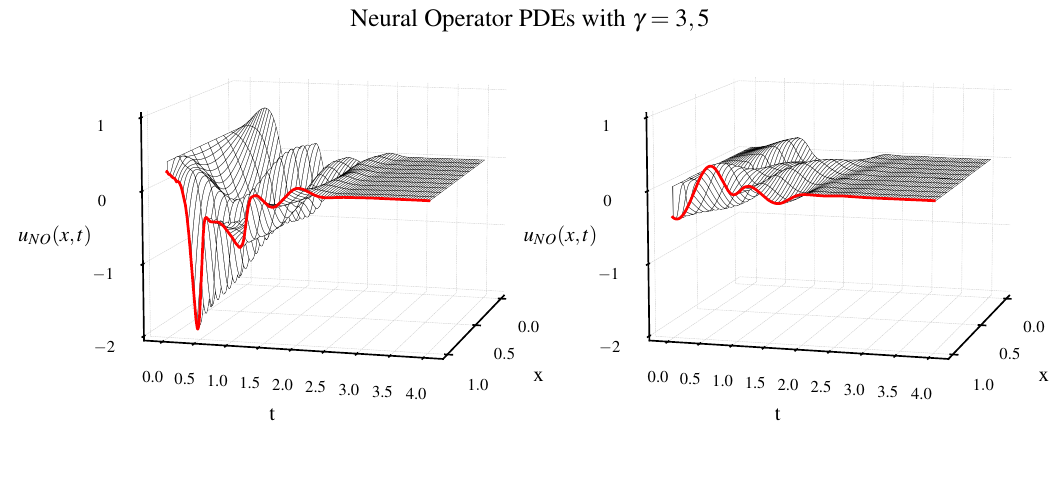}
    \caption{Stabilization of the plants in Figure \ref{fig:gs-openloop} with the neural operator approximated kernels in the gain scheduling feedback law.}
    \label{fig:gainScheduleStabilize}
\end{figure*}

We consider the simulation of Eq \eqref{eq:ut_def}, \eqref{eq:u1_def} where the recirculation function, $\beta$, is defined as $\beta(x, u(0, t)) = 5\cos((\gamma+u(0, t))\cos^{-1}(x))$ representing a Chebyshev polynomial of the first kind. 
Note that the $\gamma+u(0, t)$ term controls the resulting shape of the polynomial where $\gamma+u(0, t)=1$ is a line, $\gamma+u(0, t)=2$ is a parabola and so on. To simulate the PDE, we use a traditional first-order finite difference upwind scheme with temporal step $dt=1e-4$ and spatial step $dx=1e-2$. In Figure \ref{fig:gs-openloop}, we present the resulting PDE with open-loop control ($U=0$) and note that the dynamics result in a limit cycle. Then, in Fig. \ref{fig:gainScheduleROA}, we demonstrate the region of attraction of the proposed gain scheduling controller when implemented with the analytical kernel. We emphasize that, as theoretically shown, gain scheduling is locally stabilizing, but as shown in Fig. \ref{fig:gainScheduleLinear}, it provides better control in contrast to the failure of a linear backstepping controller designed for a linearization at the origin. In particular, Fig. \ref{fig:gainScheduleLinear} explicitly demonstrates the poor performance of the purely linear controller $U(t) = \int_0^x k(1-y, 0)u(y, t)dy$ as it fails to control the system on the right while performing poorly when compared with the gain-scheduled controlled for the system on the left. Thus, Figures \ref{fig:gainScheduleROA} and \ref{fig:gainScheduleLinear} emphasize the known results of \cite{SiranosianAntranikA2011GSBC} and \cite{doi:10.1137/20M1341167}. However, the authors reiterate that these simulations are extremely expensive to simulate due to the recalculation of the analytical kernel at every timestep. Thus, there is significant need for computational speedup to apply the methodology for any real-world application.

We now discuss the training process for the neural operator approximation of the kernel function. To effectively train the kernel operator, one must delicately select a diverse dataset in $\beta(x, \nu), x\in [0, 1], \nu \in \mathbb{R}$ in order to have significant coverage for the possible $u(0, t)$ boundary values as the system evolves in time. To effectively build a diverse dataset, we jointly sample $100$ $\gamma$ values from $\gamma \sim \text{Uniform}(3, 8)$ and for each one of those $\gamma$ values, we sample $200$ $\nu$ values from $\nu \sim \text{Uniform}(-5, 5)$ to create a total dataset size of $20000$ points. We emphasize that this large dataset is needed for sufficient coverage across $u(0, t)$ and if one plans to use neural operator approximations with larger initial condition, then one needs to expand their range of $\nu$ and most likely needs to increase the number of $\nu$ samples per $\gamma$ value.

Additionally, as theoretically discussed in both Section \ref{sec:gain-only} and \ref{sec:approximatek}, the operator mapping requires the approximation of the derivatives with respect to both $x$, $\nu$ and the mixed double derivative with respect to both arguments. In practice, we found that the resulting neural operator kernel is best trained to just approximate the mapping from $\beta(x, \nu) \mapsto k$ without the derivatives. This is for two reasons. First, from a training, implementation, and speedup perspective, the simpler the mapping for computational calculation, the better the performance. For example, the gain-only approach in Section \ref{sec:gain-only} would double the network output size to include calculation of the derivative $\frac{\partial \mathcal{K}}{\partial \nu}$ and the full operator approach in Section \ref{sec:approximatek} would quadruple the output layer size to include all the derivatives. Second, from a performance perspective, we found that including the derivatives significantly worsened the computational performance of the $\mathcal{K}$ approximation. This is due to the large difference in derivative magnitudes (values can range from $10^3$ to $10^{-5}$ for various combinations in the given dataset) dominating the loss function causing the optimizer to favor accurately approximating the derivative over the operator solution $\mathcal{K}$ even when standard normalization techniques are applied. Furthermore, the authors further considered an alternative physics informed neural network (PINN) approach \cite{wang2021physicsinformedNeuralOperators}, \cite{RAISSI2019686}, \cite{li2023physicsinformed} where the mapping learned is still $\beta(x, \nu) \mapsto k$, but the loss function includes a weighted penalty term that penalizes when the $\mathcal{K}$ derivatives are different from the analytical solution using a $5$ point finite difference stenciling \cite{abramowitz+stegun}. After tuning the penalty weighting, this approach leads to better results than learning the entire operator, but the increase in training time (almost $2$x) is expensive and the error on just $\mathcal{K}$ is still larger then learning the mapping without any derivative penalty terms.

The training of the neural operator uses a modification of the DeepXDE package \cite{lu2021deepxde} and takes approximately $200$ seconds to train on a Nvidia RTX $3090$. The neural operator trained has $282,625$ parameters with the branch net as a traditional $4$-layer multi-layer perception (MLP) network and the trunk net containing a $2$-layer MLP. For ease of future research, all of the hyperparameter details are readily available on \href{https://github.com/lukebhan/NeuralOperatorApproximatedGainScheduling}{Github} in an accessible Jupyter notebook. The kernel $L_2$ training error was $1.96\times 10^{-3}$ and the $L_2$ testing error was $2.04\times 10^{-3}$. Figure \ref{fig:gainScheduleKernel} shows both the analytical kernel (top row), learned kernel (middle row), and the error (bottom row) with a maximum of approximately $10\%$ between the analytical and learned kernel when applied to the plant in \ref{fig:gs-openloop}. However, this error becomes very small when taken in the broader control feedback loop. For example, in Figure \ref{fig:gainScheduleStabilize}, one can see that the closed-loop feedback control with the neural operator approximated kernel effectively stabilizes both systems (top row) with minimal error when compared to stabilization with the analytical kernel (bottom row). Lastly, we conclude the discussion of neural operator approximated simulations by presenting the neural operator approximation performance speedups for various step sizes in Table \ref{tab:nopspeedups}. As expected, the neural operator performance gained scales as the number of discretization points increases (spatial step reduces) and for steps sizes of $dx=10^{-3}, 10^{-4}$, speedups are on magnitudes of $10^2, 10^3$ resulting in a reduction of \emph{4 mins to 0.5 seconds} for kernel calculation. We emphasize that this is for a single kernel calculation which needs to be completed \emph{at every single timestep} and thus every speedup enhancement gets compounded as the time horizon increases.

\section{Conclusion}
In this paper, we capitalize on the initial framework in \cite{bhan_neural_2023} for gain-scheduling of nonlinear hyperbolic PDEs. We introduce the gain-schedulable kernel operator and show that is can be accurately approximated, both theoretically and numerically, by a DeepONet. Then, we present an $H_1$-norm Lyapunov analysis to prove local stability of the resulting gain-scheduled controller. We also present a similar result for the "gain-only" approach where only the boundary component of the gain kernel is approximated. We conclude by showcasing simulations of a stabilizing gain-scheduled control law for two challenging hyperbolic PDE problems with nonlinear Chebyshev recirculating coefficients. The resulting simulations demonstrate the operator approximated gain functions achieve numerical speedups of the order of magnitude of $10^3$ \emph{at every single timestep in the calculation} paving the way for real-time implementation of gain scheduling feedback laws.

\section*{References} 
\vspace*{-1em}
\bibliographystyle{IEEEtranS}
\bibliography{references}


\begin{IEEEbiography}[{\includegraphics[width=1in,height=1.25in,clip,keepaspectratio]{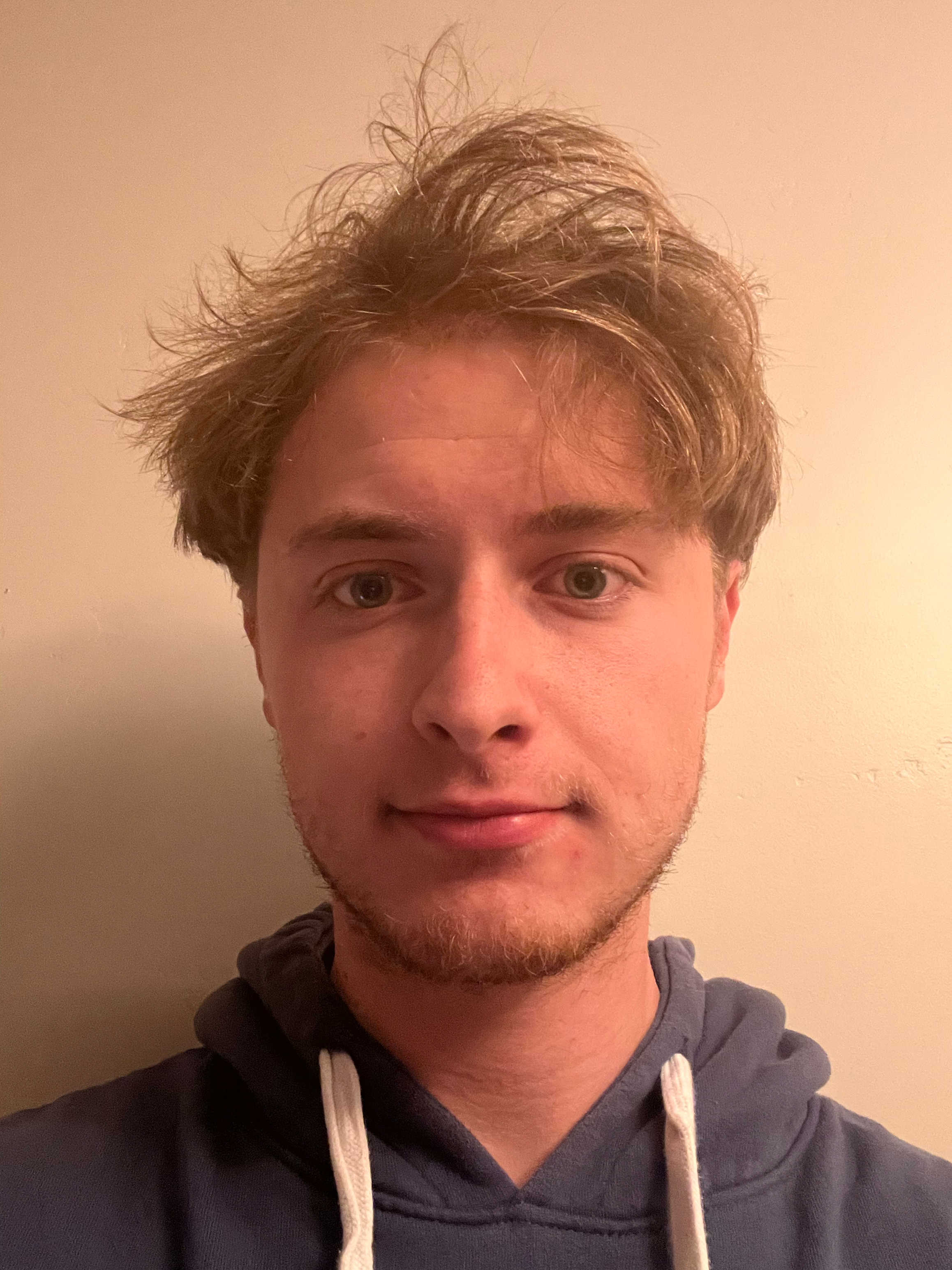}}]{Maxence Lamarque} is a graduate student at \'{E}cole des Mines
de Paris, France, due to graduate with a Master’s degree in general engineering in 2025. He was a visiting student at University of California, San Diego, in Fall 2023. 

\end{IEEEbiography}

\begin{IEEEbiography}[{\includegraphics[width=1in,height=1.25in,clip,keepaspectratio]{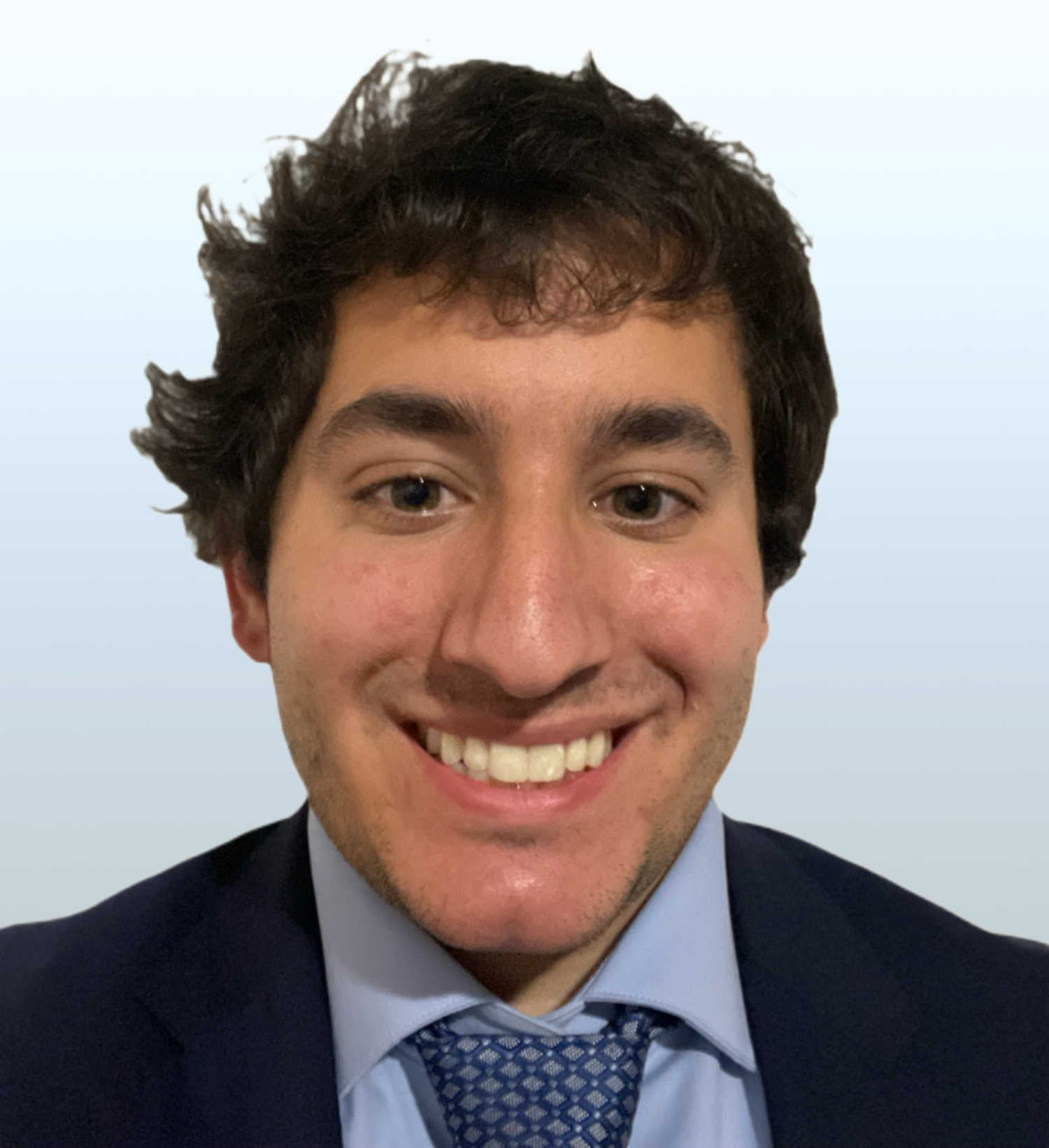}}]{Luke Bhan} received his B.S. and M.S. degrees in Computer Science and Physics from Vanderbilt University in 2022. He is currently pursuing his Ph.D. degree in Electrical and Computer Engineering
at the University of California, San Diego. His
research interests include neural operators, learning-based control, and control of partial differential equations. 
 \end{IEEEbiography}

\begin{IEEEbiography}
[{\includegraphics[width=1in,height=1.25in,clip,keepaspectratio]{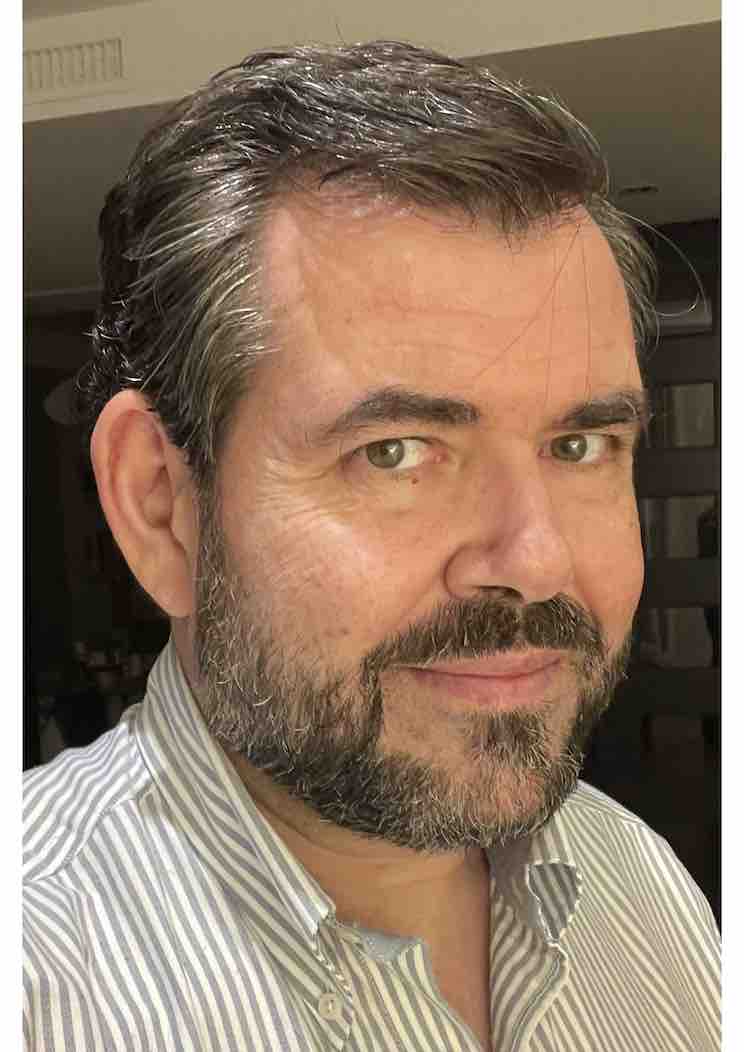}}]{Rafael Vazquez}{\space} (Senior Member, IEEE) received the electrical engineering and mathematics degrees from the University of Seville, Spain, and the M.Sc. and Ph.D. degrees in aerospace engineering from the University of California, San Diego.
He is currently Professor in the Aerospace Engineering Department of the University of Seville, Spain. His research interests include control theory, estimation, distributed parameter systems, and optimization, with applications to spacecraft and aircraft guidance, navigation and control and space surveillance and awareness. 
He currently serves as Associate Editor for Automatica and IEEE Control Systems Letters (L-CSS).
\end{IEEEbiography}

\begin{IEEEbiography}[{\includegraphics[width=1in,height=1.25in,clip,keepaspectratio]{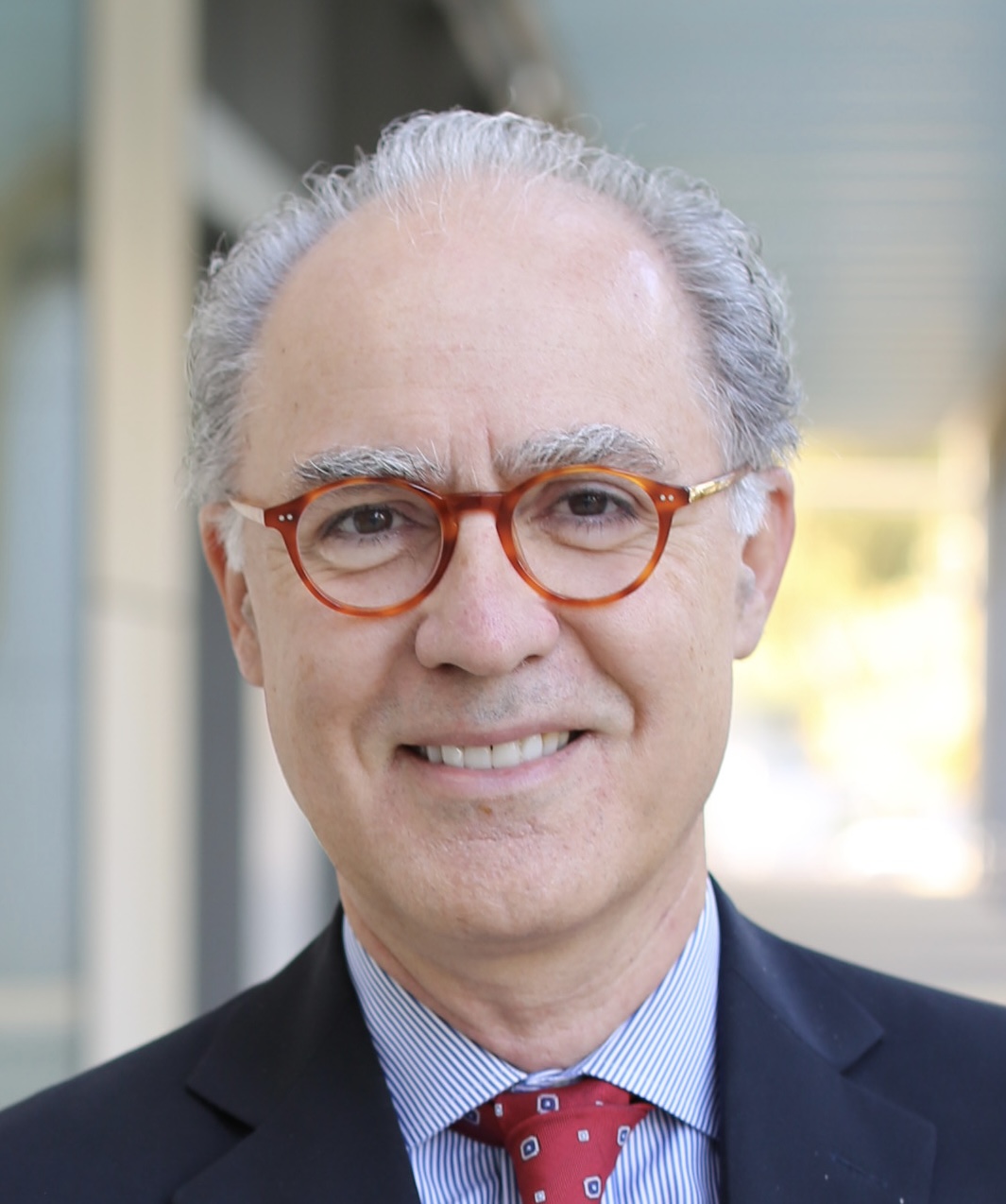}}]{Miroslav Krstic} is 
Sr.  Assoc. Vice Chancellor for Research at UC San Diego. He is Fellow of IEEE, IFAC, ASME, SIAM, AAAS, 
and 
Serbian Academy of Sciences and Arts. 
He has received the Bode Lecture Prize, Bellman Award,  Reid Prize,  Oldenburger Medal, Nyquist Lecture Prize, Paynter Award, Ragazzini  Award, IFAC Ruth Curtain Distributed Parameter Systems Award, IFAC Nonlinear Control Systems Award, IFAC Adaptive and Learning System Award, Chestnut textbook prize, AV Balakrishnan Award, CSS Distinguished Member Award, the PECASE, NSF Career, and ONR YIP,  Schuck (’96 and ’19) and Axelby paper prizes. 
\end{IEEEbiography}

\end{document}